\theoremstyle{plain}
\newtheorem{lemma}{Lemma}
\newtheorem{definition}{Definition}
\newtheorem{theorem}{Theorem}
\newtheorem{corollary}{Corollary}
\newtheorem{remark}{Remark}
\newtheorem{proposition}[theorem]{Proposition}
\tikzstyle{every picture}=[baseline=-0.25em]
\tikzstyle{none}=[inner sep=0mm]
\tikzstyle{gn}=[shape=circle,fill=green, minimum width=.3cm, inner sep=0.5pt, font=\footnotesize, draw=black]
\tikzstyle{rn}=[shape=circle,fill=red,inner sep=0.5pt, minimum width=.3cm, font=\footnotesize, text=black, draw=black]
\tikzstyle{H box}=[rectangle,fill=yellow,draw=black,xscale=1,yscale=1,font=\small,inner sep=0.75pt,minimum width=0.15cm,minimum height=0.15cm]
\tikzstyle{sgn}=[shape=circle,fill=green,minimum width=.15cm,inner sep=0.5pt,font=\footnotesize, draw=black]
\tikzstyle{srn}=[shape=circle,fill=red,inner sep=0.5pt,minimum width=.15cm,text=black, draw=black]
\newcommand{\edgetick}{{\arrow[black,scale=0.7,very thick]{|}}}
\tikzstyle{tickedge}=[postaction=decorate,
\newcommand{\interp}[1] {\left\llbracket #1 \right\rrbracket}
\newcommand{\interpspe}[1]{\raisebox{0.2em}{\scalebox{0.5}{$\left(\vphantom{\rule{1pt}{1.4em}}\right.\hspace{-0.8em}\raisebox{-0.2em}{\scalebox{2.1}{\rotatebox[origin=c]{90}{|}}}$}}#1\reflectbox{\raisebox{0.2em}{\scalebox{0.5}{$\left(\vphantom{\rule{1pt}{1.4em}}\right.\hspace{-0.8em}\raisebox{-0.2em}{\scalebox{2.1}{\rotatebox[origin=c]{90}{|}}}$}}}}
\newcommand{\fit}[1] {\resizebox{\columnwidth}{!}{#1}}
\newcommand{\frag}[1]{$\frac{\pi}{#1}$-fragment}
\newcommand{\abs}[1]{\begin{array}{|c|}#1\end{array}}
\renewcommand{\implies}{\quad\Rightarrow\quad}
\renewcommand{\mod}{\bmod}
\newcommand{\annoted}[3]{{\scriptstyle #1}\left\lbrace\mathrlap{\phantom{#3}}\right.\overbrace{#3}^{#2}}
\newcommand{\QQ}{\mathbb Q}
\newcommand{\CCC}{\mathbb C}
\newcommand{\invG}[1]{\left\llbracket #1 \right\rrbracket^{\begin{tikzpicture}[scale=0.5]
	\begin{pgfonlayer}{nodelayer}
		\node [style=sgn] (0) at (0, -0) {};
	\end{pgfonlayer}
\end{tikzpicture}}}
\newcommand{\invR}[1]{\left\llbracket #1 \right\rrbracket^{\begin{tikzpicture}[scale=0.5]
	\begin{pgfonlayer}{nodelayer}
		\node [style=srn] (0) at (0, -0) {};
	\end{pgfonlayer}
\end{tikzpicture}}}
\newcommand{\callrule}[2]{\hyperlink{r:#1}{\textnormal{(#2)}}\xspace}
\newcommand{\sone}{\callrule{rules}{S1}}
\newcommand{\stwo}{\callrule{rules}{S2}}
\newcommand{\sthree}{\callrule{rules}{S3}}
\newcommand{\iv}{\callrule{rules}{IV}}
\newcommand{\bo}{\callrule{rules}{B1}}
\newcommand{\bt}{\callrule{rules}{B2}}
\newcommand{\ko}{\callrule{rules}{K1}}
\newcommand{\kt}{\callrule{rules}{K2}}
\newcommand{\eu}{\callrule{rules}{EU}}
\newcommand{\h}{\callrule{rules}{H}}
\newcommand{\zo}{\callrule{rules}{ZO}}
\newcommand{\supp}{\callrule{rules}{SUP}}
\newcommand{\suppt}{\callrule{rules}{SUP$_2$}}
\newcommand{\suppn}{\callrule{SUP_n}{SUP$_n$}}
\newcommand{\suppc}[1]{\callrule{SUP_n}{SUP$_{#1}$}}
\newcommand{\e}{\callrule{E}{E}}
\newcommand{\hlaw}{\callrule{HL}{HL}}
\newcommand{\email}[1]{\texttt{\href{mailto:#1}{#1}}}
\title{ZX-Calculus: Cyclotomic Supplementarity and Incompleteness for Clifford+T quantum mechanics}
\author{Emmanuel Jeandel$^1$, Simon Perdrix$^1$, Renaud Vilmart$^1$, Quanlong Wang$^2$\\\\
{\small \begin{tabular}{>{\ttfamily}r>{\ttfamily}l}
1 & Universit\'e de Lorraine, CNRS, Inria, LORIA, F 54000 Nancy, France\\
2 & Dept.~of Comput.~Sci., Univ.~of Oxford, UK
\end{tabular}}}
\date{}
\begin{document}
\maketitle
\begin{center}
\begin{tabular}{rcl}
\email{emmanuel.jeandel@loria.fr}&$\qquad$&
\email{simon.perdrix@loria.fr}\\
\email{renaud.vilmart@loria.fr}&$\qquad$&
\email{quanlong.wang@cs.ox.ac.uk}
\end{tabular}
\end{center}
\begin{abstract}
The ZX-Calculus is a powerful graphical language for quantum mechanics and quantum information processing. The completeness of the language -- i.e.~the ability to derive any true equation -- is a crucial question. In the quest of a complete ZX-calculus, supplementarity has been recently proved to be necessary for quantum diagram reasoning (MFCS 2016). Roughly speaking, supplementarity consists in merging two subdiagrams when they are parameterized by antipodal angles. We introduce a generalised supplementarity -- called cyclotomic supplementarity -- which consists in merging $n$ subdiagrams at once, when the $n$ angles divide the circle into equal parts. We show that when $n$ is an odd prime number, the cyclotomic supplementarity cannot be derived, leading to a countable family of new axioms for diagrammatic quantum reasoning.

We exhibit another new simple axiom that cannot be derived from the existing rules of the ZX-Calculus, implying in particular the incompleteness of the language for the so-called Clifford+T quantum mechanics. We end up with a new axiomatisation of an extended ZX-Calculus, including an axiom schema for the cyclotomic supplementarity.
\end{abstract}

\section{Introduction}

The ZX-Calculus is a powerful diagrammatic language for reasoning in quantum mechanics introduced by Coecke and Duncan \cite{interacting}. Every diagram is composed of three kinds of vertices: red and green dots which are parametrised by an angle, and a yellow box; and each diagram represents a matrix thanks to the so-called standard interpretation. Moreover, any quantum transformation can be expressed using ZX-diagrams, meaning they are \emph{universal}. For instance, some particular states can be represented as evidenced by \cite{w-in-zx}. The language initially describes pure quantum state transformations, though some work has been made to adapt it to non pure evolutions \cite{Coecke2008,CP12} and measurement-based quantum computing \cite{mbqc}.

Unlike quantum circuits, the ZX-Calculus comes with a set of equalities between diagrams that preserve the matrix that is represented. Hence, using a succession of locally applied such equalities, one can prove that two diagrams represent the same matrix, for the language is \emph{sound} i.e.~all the equalities do indeed preserve the matrix.

The converse of soundness is called \textit{completeness}. Here, it amounts to being able to transform any diagram into another one, as long as both represent the same matrix. Hence, the concept of completeness is here totally defined by one particular interpretation, the \emph{standard interpretation}, unlike other definitions of completeness such as in \cite{hilbert-complete} in which it is related to a whole family of interpretations.

It has been proven that the ZX-Calculus is in general not complete \cite{incompleteness}. Yet, some restrictions have been proven to be complete. The \frag{2} -- the language restricted to angles that are multiples of $\frac{\pi}{2}$, which represents the \textit{stabiliser quantum mechanics} -- is complete \cite{pi_2-complete}, its pseudo-normal form using graph states introduced in the case of the ZX-Calculus in \cite{euler-decomp}. Moreover, this proof can be adapted to show the completeness of a ZX-like calculi used for graphically representing Spekken's toy model \cite{toy-model-graph,toy-model} or for graphically representing the real matrices \cite{Y-calculus}. The $\pi$-fragment -- representing the \textit{real stabiliser quantum mechanics} -- is also complete, with a slightly adapted set of rules \cite{pivoting}.

A fragment is \emph{approximately universal} when any quantum transformation can be approached with arbitrarily great precision using only the angles in the fragment. Sadly, the \frag{2} is not approximately universal, but the \frag{4} is \cite{clifford+t}. It is called the \emph{Clifford+T quantum mechanics}. Completeness for this fragment was an open question, one of the main ones in the fields of categorical quantum mechanics \cite{cqm} -- even though a partial answer has been given for the fragment composed of path diagrams involving angles multiple of $\frac{\pi}{4}$ \cite{pi_4-single-qubit}.

In this paper, we show that in the ZX-Calculus, the \frag{4} is not complete, showing that a scalar equality is derivable using matrices, but not diagrammatically. We propose to replace the ``inverse rule'' by this equality, and show that it can prove the former one as well as a third one: the ``zero rule''. Notice that this axiomatisation has been recently turned into complete axiomatisation of the ZX-calculus for this fragment \cite{complete}.

We also show that an infinite number of fragments are also incomplete, by proving that a generalised form of the ``supplementarity rule'' \cite{supplementarity} cannot be derived in them. Supplementarity, which has been proved to be necessary, consists in merging two subdiagrams when they are parameterized by antipodal angles. The generalised supplementarity – called cyclotomic supplementarity – consists in merging n subdiagrams at once, when the $n$ angles divide the circle into equal parts. We show that when n is an odd prime number, the cyclotomic supplementarity cannot be derived, leading to a countable family of new axioms for diagrammatic quantum reasoning.

Finally, we propose to add the new scalar equation, as well as the cyclotomic supplementarity to the set of rules, and to get rid of the now obsolete ``inverse'' and ``zero'' rules. We address the question of the incompleteness of the -- new -- general ZX-calculus with a modified version of the proof by Zamdzhiev and Schr\"oder de Witt \cite{incompleteness}, for theirs does not stand any more because of the introduction of the cyclotomic supplementarity.

We present the ZX-Calculus in section \ref{sec:zx}, prove the incompleteness of the \frag{4} and give a new scalar rule in section \ref{sec:pi_4}, and in section \ref{sec:generalised-supplementarity} we show how to generalise the supplementarity rule, discuss the way some are derivable from others, present the altered set of rules and give a new proof of incompleteness of the general ZX-Calculus.

\section{ZX-Calculus}
\label{sec:zx}

\subsection{Diagrams and standard interpretation}
A ZX-diagram $D:k\to l$ is an open graph with $k$ inputs and $l$ outputs -- read from top to bottom -- and is generated by:\\
\begin{center}
\bgroup
\def\arraystretch{2.5}
\begin{tabular}{|cc|cc|}
\hline
$R_Z^{(n,m)}(\alpha):n\to m$ & \begin{tikzpicture}
	\begin{pgfonlayer}{nodelayer}
		\node [style=gn] (0) at (0, -0) {$~\alpha~$};
		\node [style=none] (1) at (-0.5000001, 0.7499999) {};
		\node [style=none] (2) at (0, 0.4999999) {$\cdots$};
		\node [style=none] (3) at (0.5000001, 0.7499999) {};
		\node [style=none] (4) at (-0.5000001, -0.7499999) {};
		\node [style=none] (5) at (0, -0.4999999) {$\cdots$};
		\node [style=none] (6) at (0.5000001, -0.7499999) {};
		\node [style=none] (7) at (0, 0.75) {$n$};
		\node [style=none] (8) at (0, -0.7499998) {$m$};
	\end{pgfonlayer}
	\begin{pgfonlayer}{edgelayer}
		\draw [style=none, bend right, looseness=1.00] (1.center) to (0);
		\draw [style=none, bend right, looseness=1.00] (0) to (3.center);
		\draw [style=none, bend right, looseness=1.00] (0) to (4.center);
		\draw [style=none, bend left, looseness=1.00] (0) to (6.center);
	\end{pgfonlayer}
\end{tikzpicture} & $R_X^{(n,m)}(\alpha):n\to m$ & \begin{tikzpicture}
	\begin{pgfonlayer}{nodelayer}
		\node [style=rn] (0) at (0, -0) {$~\alpha~$};
		\node [style=none] (1) at (-0.5000001, 0.7499999) {};
		\node [style=none] (2) at (0, 0.4999999) {$\cdots$};
		\node [style=none] (3) at (0.5000001, 0.7499999) {};
		\node [style=none] (4) at (-0.5000001, -0.7499999) {};
		\node [style=none] (5) at (0, -0.5) {$\cdots$};
		\node [style=none] (6) at (0.5000001, -0.7499999) {};
		\node [style=none] (7) at (0, 0.75) {$n$};
		\node [style=none] (8) at (0, -0.7499999) {$m$};
		\node [style=none] (9) at (0, 1) {};
		\node [style=none] (10) at (0, -1) {};
	\end{pgfonlayer}
	\begin{pgfonlayer}{edgelayer}
		\draw [style=none, bend right, looseness=1.00] (1.center) to (0);
		\draw [style=none, bend right, looseness=1.00] (0) to (3.center);
		\draw [style=none, bend right, looseness=1.00] (0) to (4.center);
		\draw [style=none, bend left, looseness=1.00] (0) to (6.center);
	\end{pgfonlayer}
\end{tikzpicture}\\[4ex]\hline
$H:1\to 1$ & \begin{tikzpicture}
	\begin{pgfonlayer}{nodelayer}
		\node [style={H box}] (0) at (0.5, 0) {};
		\node [style=none] (1) at (0.5, 0.5) {};
		\node [style=none] (2) at (0.5, -0.5) {};
	\end{pgfonlayer}
	\begin{pgfonlayer}{edgelayer}
		\draw (1.center) to (0);
		\draw (2.center) to (0);
	\end{pgfonlayer}
\end{tikzpicture} & $e:0\to 0$ & \begin{tikzpicture}
	\begin{pgfonlayer}{nodelayer}
		\node [style=none] (0) at (-0.2499999, 0.2499999) {};
		\node [style=none] (1) at (-0.2499999, -0.2499999) {};
		\node [style=none] (2) at (0.2499999, 0.2499999) {};
		\node [style=none] (3) at (0.2499999, -0.2499999) {};
	\end{pgfonlayer}
	\begin{pgfonlayer}{edgelayer}
		\draw [style=dashed] (0.center) to (2.center);
		\draw [style=dashed] (2.center) to (3.center);
		\draw [style=dashed] (3.center) to (1.center);
		\draw [style=dashed] (0.center) to (1.center);
	\end{pgfonlayer}
\end{tikzpicture}\\\hline
$\mathbb{I}:1\to 1$ & \begin{tikzpicture}
	\begin{pgfonlayer}{nodelayer}
		\node [style=none] (0) at (0, 0.2499999) {};
		\node [style=none] (1) at (0, -0.2499999) {};
	\end{pgfonlayer}
	\begin{pgfonlayer}{edgelayer}
		\draw (0.center) to (1.center);
	\end{pgfonlayer}
\end{tikzpicture} & $\sigma:2\to 2$ & \begin{tikzpicture}
	\begin{pgfonlayer}{nodelayer}
		\node [style=none] (0) at (-0.2499999, 0.2499999) {};
		\node [style=none] (1) at (0.2499999, -0.2499999) {};
		\node [style=none] (2) at (0.2499999, 0.2499999) {};
		\node [style=none] (3) at (-0.2499999, -0.2499999) {};
	\end{pgfonlayer}
	\begin{pgfonlayer}{edgelayer}
		\draw [in=90, out=-90, looseness=1.00] (0.center) to (1.center);
		\draw [in=90, out=-90, looseness=1.00] (2.center) to (3.center);
	\end{pgfonlayer}
\end{tikzpicture}\\\hline
$\epsilon:2\to 0$ & \begin{tikzpicture}
	\begin{pgfonlayer}{nodelayer}
		\node [style=none] (0) at (-0.2499999, 0.2499999) {};
		\node [style=none] (1) at (0.2499999, 0.2499999) {};
	\end{pgfonlayer}
	\begin{pgfonlayer}{edgelayer}
		\draw [in=-90, out=-90, looseness=2.00] (0.center) to (1.center);
	\end{pgfonlayer}
\end{tikzpicture} & $\eta:0\to 2$ & \raisebox{0.6em}{\begin{tikzpicture}
	\begin{pgfonlayer}{nodelayer}
		\node [style=none] (0) at (0.2499999, -0.2499999) {};
		\node [style=none] (1) at (-0.2499999, -0.2499999) {};
	\end{pgfonlayer}
	\begin{pgfonlayer}{edgelayer}
		\draw [in=90, out=90, looseness=2.00] (1.center) to (0.center);
	\end{pgfonlayer}
\end{tikzpicture}}\\\hline
\end{tabular}
\egroup\\
where $n,m\in \mathbb{N}$ and $\alpha \in \mathbb{R}$
\end{center}

and the two compositions:
\begin{itemize}
\item Spacial Composition: for any $D_1:a\to b$ and $D_2:c\to d$, $D_1\otimes D_2:a+c\to b+d$ consists in placing $D_1$ and $D_2$ side by side, $D_2$ on the right of $D_1$.
\item Sequential Composition: for any $D_1:a\to b$ and $D_2:b\to c$, $D_2\circ D_1:a\to c$ consists in placing $D_1$ on the top of $D_2$, connecting the outputs of $D_1$ to the inputs of $D_2$.
\end{itemize}


The standard interpretation of the stabiliser ZX-diagrams associates to any diagram $D:n\to m$ a linear map $\interp{D}:\mathbb{C}^{2^n}\to\mathbb{C}^{2^m}$ inductively defined as follows:
$$ \interp{D_1\otimes D_2}:=\interp{D_1}\otimes\interp{D_2} \qquad 
\interp{D_2\circ D_1}:=\interp{D_2}\circ\interp{D_1} $$
$$
\interp{~\begin{tikzpicture}
	\begin{pgfonlayer}{nodelayer}
		\node [style=none] (0) at (-0.2499999, 0.2499999) {};
		\node [style=none] (1) at (-0.2499999, -0.2499999) {};
		\node [style=none] (2) at (0.2499999, 0.2499999) {};
		\node [style=none] (3) at (0.2499999, -0.2499999) {};
	\end{pgfonlayer}
	\begin{pgfonlayer}{edgelayer}
		\draw [style=dashed] (0.center) to (2.center);
		\draw [style=dashed] (2.center) to (3.center);
		\draw [style=dashed] (3.center) to (1.center);
		\draw [style=dashed] (0.center) to (1.center);
	\end{pgfonlayer}
\end{tikzpicture}~}:=\begin{pmatrix}1\end{pmatrix}
\qquad\interp{~~\begin{tikzpicture}
	\begin{pgfonlayer}{nodelayer}
		\node [style=none] (0) at (0, 0.2499999) {};
		\node [style=none] (1) at (0, -0.2499999) {};
	\end{pgfonlayer}
	\begin{pgfonlayer}{edgelayer}
		\draw (0.center) to (1.center);
	\end{pgfonlayer}
\end{tikzpicture}~~}:= \begin{pmatrix}
1 & 0 \\ 0 & 1\end{pmatrix}\qquad
\interp{~\begin{tikzpicture}
	\begin{pgfonlayer}{nodelayer}
		\node [style={H box}] (0) at (0, 0) {};
		\node [style=none] (1) at (0, 0.5) {};
		\node [style=none] (2) at (0, -0.5) {};
	\end{pgfonlayer}
	\begin{pgfonlayer}{edgelayer}
		\draw (2.center) to (1.center);
	\end{pgfonlayer}
\end{tikzpicture}
~}:= \frac{1}{\sqrt{2}}\begin{pmatrix}1 & 1\\1 & -1\end{pmatrix}$$
$$
\interp{\begin{tikzpicture}
	\begin{pgfonlayer}{nodelayer}
		\node [style=none] (0) at (-0.2499999, 0.2499999) {};
		\node [style=none] (1) at (0.2499999, -0.2499999) {};
		\node [style=none] (2) at (0.2499999, 0.2499999) {};
		\node [style=none] (3) at (-0.2499999, -0.2499999) {};
	\end{pgfonlayer}
	\begin{pgfonlayer}{edgelayer}
		\draw [in=90, out=-90, looseness=1.00] (0.center) to (1.center);
		\draw [in=90, out=-90, looseness=1.00] (2.center) to (3.center);
	\end{pgfonlayer}
\end{tikzpicture}}:= \begin{pmatrix}
1&0&0&0\\
0&0&1&0\\
0&1&0&0\\
0&0&0&1
\end{pmatrix} \qquad
\interp{\raisebox{-0.25em}{$\begin{tikzpicture}
	\begin{pgfonlayer}{nodelayer}
		\node [style=none] (0) at (-0.2499999, 0.2499999) {};
		\node [style=none] (1) at (0.2499999, 0.2499999) {};
	\end{pgfonlayer}
	\begin{pgfonlayer}{edgelayer}
		\draw [in=-90, out=-90, looseness=2.00] (0.center) to (1.center);
	\end{pgfonlayer}
\end{tikzpicture}$}}:= \begin{pmatrix}
1&0&0&1
\end{pmatrix} \qquad
\interp{\raisebox{0.4em}{$\begin{tikzpicture}
	\begin{pgfonlayer}{nodelayer}
		\node [style=none] (0) at (0.2499999, -0.2499999) {};
		\node [style=none] (1) at (-0.2499999, -0.2499999) {};
	\end{pgfonlayer}
	\begin{pgfonlayer}{edgelayer}
		\draw [in=90, out=90, looseness=2.00] (1.center) to (0.center);
	\end{pgfonlayer}
\end{tikzpicture}$}}:= \begin{pmatrix}
1\\0\\0\\1
\end{pmatrix}$$

$$
\interp{\begin{tikzpicture}
	\begin{pgfonlayer}{nodelayer}
		\node [style=gn] (0) at (0, -0) {$\alpha$};
	\end{pgfonlayer}
\end{tikzpicture}}:=\begin{pmatrix}1+e^{i\alpha}\end{pmatrix} \qquad
\interp{\begin{tikzpicture}
	\begin{pgfonlayer}{nodelayer}
		\node [style=gn] (0) at (0, -0) {$~\alpha~$};
		\node [style=none] (1) at (-0.5000001, 0.7499999) {};
		\node [style=none] (2) at (0, 0.4999999) {$\cdots$};
		\node [style=none] (3) at (0.5000001, 0.7499999) {};
		\node [style=none] (4) at (-0.5000001, -0.7499999) {};
		\node [style=none] (5) at (0, -0.4999999) {$\cdots$};
		\node [style=none] (6) at (0.5000001, -0.7499999) {};
		\node [style=none] (7) at (0, 0.75) {$n$};
		\node [style=none] (8) at (0, -0.7499998) {$m$};
	\end{pgfonlayer}
	\begin{pgfonlayer}{edgelayer}
		\draw [style=none, bend right, looseness=1.00] (1.center) to (0);
		\draw [style=none, bend right, looseness=1.00] (0) to (3.center);
		\draw [style=none, bend right, looseness=1.00] (0) to (4.center);
		\draw [style=none, bend left, looseness=1.00] (0) to (6.center);
	\end{pgfonlayer}
\end{tikzpicture}}:=
\annoted{2^m}{2^n}{\begin{pmatrix}
  1 & 0 & \cdots & 0 & 0 \\
  0 & 0 & \cdots & 0 & 0 \\
  \vdots & \vdots & \ddots & \vdots & \vdots \\
  0 & 0 & \cdots & 0 & 0 \\
  0 & 0 & \cdots & 0 & e^{i\alpha}
 \end{pmatrix}}
~~\begin{pmatrix}n+m>0\end{pmatrix} 
$$

For any $n,m\geq 0$ and $\alpha\in\mathbb{R}$, $\interp{\begin{tikzpicture}
	\begin{pgfonlayer}{nodelayer}
		\node [style=rn] (0) at (0, -0) {$~\alpha~$};
		\node [style=none] (1) at (-0.5000001, 0.7499999) {};
		\node [style=none] (2) at (0, 0.4999999) {$\cdots$};
		\node [style=none] (3) at (0.5000001, 0.7499999) {};
		\node [style=none] (4) at (-0.5000001, -0.7499999) {};
		\node [style=none] (5) at (0, -0.5) {$\cdots$};
		\node [style=none] (6) at (0.5000001, -0.7499999) {};
		\node [style=none] (7) at (0, 0.75) {$n$};
		\node [style=none] (8) at (0, -0.7499999) {$m$};
		\node [style=none] (9) at (0, 1) {};
		\node [style=none] (10) at (0, -1) {};
	\end{pgfonlayer}
	\begin{pgfonlayer}{edgelayer}
		\draw [style=none, bend right, looseness=1.00] (1.center) to (0);
		\draw [style=none, bend right, looseness=1.00] (0) to (3.center);
		\draw [style=none, bend right, looseness=1.00] (0) to (4.center);
		\draw [style=none, bend left, looseness=1.00] (0) to (6.center);
	\end{pgfonlayer}
\end{tikzpicture}}=\interp{~\begin{tikzpicture}
	\begin{pgfonlayer}{nodelayer}
		\node [style={H box}] (0) at (0, 0) {};
		\node [style=none] (1) at (0, 0.5) {};
		\node [style=none] (2) at (0, -0.5) {};
	\end{pgfonlayer}
	\begin{pgfonlayer}{edgelayer}
		\draw (2.center) to (1.center);
	\end{pgfonlayer}
\end{tikzpicture}
~}^{\otimes m}\circ \interp{\begin{tikzpicture}
	\begin{pgfonlayer}{nodelayer}
		\node [style=gn] (0) at (0, -0) {$~\alpha~$};
		\node [style=none] (1) at (-0.5000001, 0.7499999) {};
		\node [style=none] (2) at (0, 0.4999999) {$\cdots$};
		\node [style=none] (3) at (0.5000001, 0.7499999) {};
		\node [style=none] (4) at (-0.5000001, -0.7499999) {};
		\node [style=none] (5) at (0, -0.4999999) {$\cdots$};
		\node [style=none] (6) at (0.5000001, -0.7499999) {};
		\node [style=none] (7) at (0, 0.75) {$n$};
		\node [style=none] (8) at (0, -0.7499998) {$m$};
	\end{pgfonlayer}
	\begin{pgfonlayer}{edgelayer}
		\draw [style=none, bend right, looseness=1.00] (1.center) to (0);
		\draw [style=none, bend right, looseness=1.00] (0) to (3.center);
		\draw [style=none, bend right, looseness=1.00] (0) to (4.center);
		\draw [style=none, bend left, looseness=1.00] (0) to (6.center);
	\end{pgfonlayer}
\end{tikzpicture}}\circ \interp{~\begin{tikzpicture}
	\begin{pgfonlayer}{nodelayer}
		\node [style={H box}] (0) at (0, 0) {};
		\node [style=none] (1) at (0, 0.5) {};
		\node [style=none] (2) at (0, -0.5) {};
	\end{pgfonlayer}
	\begin{pgfonlayer}{edgelayer}
		\draw (2.center) to (1.center);
	\end{pgfonlayer}
\end{tikzpicture}
~}^{\otimes n}$ \\
$\left(\text{where }M^{\otimes 0}=\begin{pmatrix}1\end{pmatrix}\text{ and }M^{\otimes k}=M\otimes M^{\otimes k-1}\text{ for any }k\in \mathbb{N}^*:=\mathbb{N}\setminus\{0\}\right)$.\\

\noindent
To simplify, the red and green nodes will be represented empty when holding a 0 angle:
\[ \begin{tikzpicture}
	\begin{pgfonlayer}{nodelayer}
		\node [style=gn] (0) at (-0.9999999, -0) {};
		\node [style=none] (1) at (-1.5, 0.7499999) {};
		\node [style=none] (2) at (-0.9999999, 0.5) {$\cdots$};
		\node [style=none] (3) at (-0.5000002, 0.7499999) {};
		\node [style=none] (4) at (-1.5, -0.7500001) {};
		\node [style=none] (5) at (-0.9999999, -0.5) {$\cdots$};
		\node [style=none] (6) at (-0.5000002, -0.7500001) {};
		\node [style=gn] (7) at (0.9999999, -0) {0};
		\node [style=none] (8) at (0.9999999, -0.5) {$\cdots$};
		\node [style=none] (9) at (0.9999999, 0.5) {$\cdots$};
		\node [style=none] (10) at (0.5000002, -0.7500001) {};
		\node [style=none] (11) at (1.5, -0.7500001) {};
		\node [style=none] (12) at (0.5000002, 0.7499999) {};
		\node [style=none] (13) at (1.5, 0.7499999) {};
		\node [style=none] (14) at (0, -0) {:=};
	\end{pgfonlayer}
	\begin{pgfonlayer}{edgelayer}
		\draw [style=none, bend right, looseness=1.00] (1.center) to (0);
		\draw [style=none, bend right, looseness=1.00] (0) to (3.center);
		\draw [style=none, bend right, looseness=1.00] (0) to (4.center);
		\draw [style=none, bend left, looseness=1.00] (0) to (6.center);
		\draw [style=none, bend right, looseness=1.00] (12.center) to (7);
		\draw [style=none, bend right, looseness=1.00] (7) to (13.center);
		\draw [style=none, bend right, looseness=1.00] (7) to (10.center);
		\draw [style=none, bend left, looseness=1.00] (7) to (11.center);
	\end{pgfonlayer}
\end{tikzpicture} \qquad\text{and}\qquad \begin{tikzpicture}
	\begin{pgfonlayer}{nodelayer}
		\node [style=rn] (0) at (-0.9999999, -0) {};
		\node [style=none] (1) at (-1.5, 0.7499999) {};
		\node [style=none] (2) at (-0.9999999, 0.5) {$\cdots$};
		\node [style=none] (3) at (-0.5000002, 0.7499999) {};
		\node [style=none] (4) at (-1.5, -0.7500001) {};
		\node [style=none] (5) at (-0.9999999, -0.5) {$\cdots$};
		\node [style=none] (6) at (-0.5000002, -0.7500001) {};
		\node [style=rn] (7) at (0.9999999, -0) {0};
		\node [style=none] (8) at (0.9999999, -0.5) {$\cdots$};
		\node [style=none] (9) at (0.9999999, 0.5) {$\cdots$};
		\node [style=none] (10) at (0.5000002, -0.7500001) {};
		\node [style=none] (11) at (1.5, -0.7500001) {};
		\node [style=none] (12) at (0.5000002, 0.7499999) {};
		\node [style=none] (13) at (1.5, 0.7499999) {};
		\node [style=none] (14) at (0, -0) {:=};
	\end{pgfonlayer}
	\begin{pgfonlayer}{edgelayer}
		\draw [style=none, bend right, looseness=1.00] (1.center) to (0);
		\draw [style=none, bend right, looseness=1.00] (0) to (3.center);
		\draw [style=none, bend right, looseness=1.00] (0) to (4.center);
		\draw [style=none, bend left, looseness=1.00] (0) to (6.center);
		\draw [style=none, bend right, looseness=1.00] (12.center) to (7);
		\draw [style=none, bend right, looseness=1.00] (7) to (13.center);
		\draw [style=none, bend right, looseness=1.00] (7) to (10.center);
		\draw [style=none, bend left, looseness=1.00] (7) to (11.center);
	\end{pgfonlayer}
\end{tikzpicture} \]
Also in order to make the diagrams a little less heavy, when $n$ copies of the same subdiagram occur, we will use the notation $(.)^{\otimes n}$ as defined above.\\

With the general calculus -- with angles being in $\mathbb{R}$ -- we can represent any matrix of size a power of $2$ i.e.~ZX-Diagrams are universal:
\[\forall A\in \mathbb{C}^{2^n}\times\mathbb{C}^{2^m},~~\exists D,~~ \interp{D}=A\]
This requires dealing with an uncountable set of angles, so it is generally preferred to work with \textit{approximate} universality -- the ability to approximate any linear map with arbitrary accuracy -- in which only a finite set of angles is involved. The \frag{4} -- ZX-diagrams where all angles are multiples of $\frac{\pi}{4}$ -- is one such approximately universal fragment, whereas the \frag{2} is not.

\subsection{Calculus}
The diagrammatic representation of a matrix is not unique in the ZX-Calculus. Hence, a set of equalities has been proposed to axiomatise the language. This set is summed up in Figure \ref{fig:ZX_rules}.
\begin{figure}[!hbt]
 \centering
 \hypertarget{r:rules}{}
  \begin{tabular}{|ccccc|}
   \hline
   &&&& \\
   \begin{tikzpicture}[font={\footnotesize}]
	\begin{pgfonlayer}{nodelayer}
		\node [style=none] (0) at (-1.25, -0) {\rotatebox[origin=c]{63.43}{$~\cdots~$}};
		\node [style=none] (1) at (0.25, -0) {$=$};
		\node [style=gn] (2) at (1.5, 0) { \footnotesize$\alpha{+}\beta$};
		\node [style=gn, minimum width={0.5 cm}] (3) at (-0.7500001, -0.2500001) {\footnotesize $\beta$};
		\node [style=none] (4) at (-1.75, -0.5) {$~\cdots~$};
		\node [style=none] (5) at (2, -0.75) {};
		\node [style=none] (6) at (-1, -0.75) {};
		\node [style=none] (7) at (1.5, -0.75) {$~\cdots~$};
		\node [style=none] (8) at (-0.5, -0.75) {};
		\node [style=none] (9) at (1, -0.75) {};
		\node [style=none] (10) at (-2, -0.5) {};
		\node [style=none] (11) at (-1.5, -0.5) {};
		\node [style=none] (12) at (-0.75, -0.75) {$~\cdots~$};
		\node [style=none] (13) at (1.5, 0.75) {$~\cdots~$};
		\node [style=none] (14) at (1, 0.75) {};
		\node [style=none] (15) at (-2, 0.75) {};
		\node [style=none] (16) at (-0.5, 0.5) {};
		\node [style=none] (17) at (-1.5, 0.75) {};
		\node [style=none] (18) at (2, 0.75) {};
		\node [style=gn, minimum width={0.5 cm}] (19) at (-1.75, 0.25) {\footnotesize$\alpha$};
		\node [style=none] (20) at (-0.75, 0.5) {$~\cdots~$};
		\node [style=none] (21) at (-1, 0.5) {};
		\node [style=none] (22) at (-1.75, 0.75) {$~\cdots~$};
	\end{pgfonlayer}
	\begin{pgfonlayer}{edgelayer}
		\draw (3) to (16.center);
		\draw (3) to (6.center);
		\draw (3) to (8.center);
		\draw (19) to (10.center);
		\draw (19) to (11.center);
		\draw [bend right, looseness=1.00] (19) to (3);
		\draw [bend left, looseness=1.00] (19) to (3);
		\draw (14.center) to (2);
		\draw (2) to (9.center);
		\draw (5.center) to (2);
		\draw (2) to (18.center);
		\draw (19) to (15.center);
		\draw (19) to (17.center);
		\draw (3) to (21.center);
	\end{pgfonlayer}
\end{tikzpicture}&\hypertarget{r:S1}{(S1)} &$\qquad$& \begin{tikzpicture}
	\begin{pgfonlayer}{nodelayer}
		\node [style=gn] (0) at (-0.7499998, -0) {};
		\node [style=none] (1) at (0, -0) {=};
		\node [style=none] (2) at (-0.7499998, 1) {};
		\node [style=none] (3) at (-0.7499998, -0.9999999) {};
		\node [style=none] (4) at (0.7499998, 1) {};
		\node [style=none] (5) at (0.7499998, -0.9999999) {};
	\end{pgfonlayer}
	\begin{pgfonlayer}{edgelayer}
		\draw (2) to (0);
		\draw (0) to (3);
		\draw (4) to (5);
	\end{pgfonlayer}
\end{tikzpicture}&\hypertarget{r:S2}{(S2)}\\
   &&&& \\
   \begin{tikzpicture}
	\begin{pgfonlayer}{nodelayer}
		\node [style=none] (0) at (0.7500001, -0.25) {};
		\node [style=none] (1) at (0, -0) {$=$};
		\node [style=gn] (2) at (1.25, 0.25) {};
		\node [style=none] (3) at (-0.7500001, -0.25) {};
		\node [style=none] (4) at (1.75, -0.25) {};
		\node [style=none] (5) at (-1.75, -0.25) {};
	\end{pgfonlayer}
	\begin{pgfonlayer}{edgelayer}
		\draw [in=90, out=90, looseness=1.75] (5.center) to (3.center);
		\draw [in=90, out=90, looseness=1.75] (0.center) to (4.center);
	\end{pgfonlayer}
\end{tikzpicture}&\hypertarget{r:S3}{(S3)} && \begin{tikzpicture}
	\begin{pgfonlayer}{nodelayer}
		\node [style=rn] (0) at (-0.5000001, 0.25) {};
		\node [style=gn] (1) at (-0.5000002, -0.2500001) {};
		\node [style=gn] (2) at (-0.9999998, -0.2500001) {};
		\node [style=rn] (3) at (-0.9999998, 0.2500001) {};
		\node [style=none] (4) at (0.9999998, 0.2500001) {};
		\node [style=none] (5) at (0.4999996, 0.2500001) {};
		\node [style=none] (6) at (0.9999998, -0.2500001) {};
		\node [style=none] (7) at (0.4999996, -0.2500001) {};
		\node [style=none] (8) at (0, -0) {$=$};
	\end{pgfonlayer}
	\begin{pgfonlayer}{edgelayer}
		\draw [bend left=45, looseness=1.00] (0) to (1);
		\draw (3) to (2);
		\draw [bend right=45, looseness=1.00] (0) to (1);
		\draw (0) to (1);
		\draw [color=gray, dashed] (5.center) to (7.center);
		\draw [color=gray, dashed] (7.center) to (6.center);
		\draw [color=gray, dashed] (6.center) to (4.center);
		\draw [color=gray, dashed] (4.center) to (5.center);
	\end{pgfonlayer}
\end{tikzpicture}&\hypertarget{r:IV}{(IV)}\\
   &&&& \\
   \begin{tikzpicture}
	\begin{pgfonlayer}{nodelayer}
		\node [style=gn] (0) at (0.75, 0) {};
		\node [style=none] (1) at (2.25, -0.25) {};
		\node [style=none] (2) at (0.5, -0.5) {};
		\node [style=rn] (3) at (2.25, 0.25) {};
		\node [style=none] (4) at (1, -0.5) {};
		\node [style=rn] (5) at (0.75, 0.5) {};
		\node [style=rn] (6) at (2.75, 0.25) {};
		\node [style=none] (7) at (2.75, -0.25) {};
		\node [style=none] (8) at (1.5, 0) {$=$};
		\node [style=rn] (9) at (0, 0.25) {};
		\node [style=gn] (10) at (0, -0.25) {};
	\end{pgfonlayer}
	\begin{pgfonlayer}{edgelayer}
		\draw [style=none] (5) to (0);
		\draw[bend right=23]  [style=none] (0) to (2.center);
		\draw[bend left=23]  [style=none] (0) to (4.center);
		\draw [style=none] (3) to (1.center);
		\draw [style=none] (6) to (7.center);
		\draw (9) to (10);
	\end{pgfonlayer}
\end{tikzpicture}&\hypertarget{r:B1}{(B1)} && \begin{tikzpicture}
	\begin{pgfonlayer}{nodelayer}
		\node [style=none] (0) at (3.75, 0.75) {};
		\node [style=rn] (1) at (0.5, -0.25) {};
		\node [style=none] (2) at (3.25, -0.75) {};
		\node [style=none] (3) at (1.25, 1) {};
		\node [style=none] (4) at (3.25, 0.75) {};
		\node [style=none] (5) at (0.5, -0.75) {};
		\node [style=none] (6) at (0.5, 1) {};
		\node [style=gn] (7) at (1.25, 0.5) {};
		\node [style=none] (8) at (2.25, 0) {$=$};
		\node [style=rn] (9) at (1.25, -0.25) {};
		\node [style=gn] (10) at (3.5, -0.25) {};
		\node [style=gn] (11) at (0.5, 0.5) {};
		\node [style=none] (12) at (1.25, -0.75) {};
		\node [style=none] (13) at (3.75, -0.75) {};
		\node [style=rn] (14) at (3.5, 0.25) {};
		\node [style=rn] (15) at (0, 0.25) {};
		\node [style=gn] (16) at (0, -0.25) {};
	\end{pgfonlayer}
	\begin{pgfonlayer}{edgelayer}
		\draw [style=none] (12.center) to (9);
		\draw [style=none] (5.center) to (1);
		\draw [style=none] (7) to (3.center);
		\draw [style=none, bend right=23, looseness=1.00] (9) to (7);
		\draw [style=none] (11) to (6.center);
		\draw [style=none, bend left=23, looseness=1.00] (1) to (11);
		\draw [style=none,bend right=23] (13.center) to (10);
		\draw [style=none] (10) to (14);
		\draw [style=none,bend left=23] (14) to (4.center);
		\draw [style=none,bend right=23] (14) to (0.center);
		\draw[bend right=23] (10) to (2.center);
		\draw (11) to (9);
		\draw (7) to (1);
		\draw (15) to (16);
	\end{pgfonlayer}
\end{tikzpicture}&\hypertarget{r:B2}{(B2)}\\
   &&&& \\
   \begin{tikzpicture}
	\begin{pgfonlayer}{nodelayer}
		\node [style=rn] (0) at (-3.25, -0) {};
		\node [style=none] (1) at (-1.5, -0.5) {};
		\node [style=none] (2) at (-2.25, 0.25) {$=$};
		\node [style=gn, minimum width={0.4 cm}] (3) at (-1.5, -0) {\footnotesize$\pi$};
		\node [style=gn, minimum width={0.4 cm}] (4) at (-1, -0) {\footnotesize$\pi$};
		\node [style=none] (5) at (-1, -0.5) {};
		\node [style=gn, minimum width={0.4 cm}] (6) at (-3.25, 0.5) {\footnotesize$\pi$};
		\node [style=none] (7) at (-3.25, 1) {};
		\node [style=none] (8) at (-1.25, 1) {};
		\node [style=none] (9) at (-3, -0.5) {};
		\node [style=rn] (10) at (-1.25, 0.5) {};
		\node [style=none] (11) at (-3.5, -0.5) {};
		\node [style=none] (12) at (-3.25, -0.5) {$~\cdots~$};
		\node [style=none] (13) at (-1.25, -0.5) {$~\cdots~$};
	\end{pgfonlayer}
	\begin{pgfonlayer}{edgelayer}
		\draw [style=none] (0) to (11.center);
		\draw [style=none] (0) to (9.center);
		\draw [style=none] (8.center) to (10);
		\draw [style=none] (7.center) to (6);
		\draw [style=none] (6) to (0);
		\draw [style=none] (10) to (3);
		\draw [style=none] (3) to (1.center);
		\draw [style=none] (10) to (4);
		\draw [style=none] (4) to (5.center);
	\end{pgfonlayer}
\end{tikzpicture}&\hypertarget{r:K1}{(K1)} && \begin{tikzpicture}
	\begin{pgfonlayer}{nodelayer}
		\node [style=none] (0) at (0, -0) {$=$};
		\node [style=none] (1) at (-0.7499998, 0.9999999) {};
		\node [style=none] (2) at (-0.7499998, -0.7499998) {};
		\node [style=none] (3) at (1.5, 0.9999999) {};
		\node [style=gn] (4) at (-0.7499998, -0.2500001) {$~\pi~$};
		\node [style=none] (5) at (1.5, -0.7499998) {};
		\node [style=rn] (6) at (-0.7499998, 0.5) {$~\alpha~$};
		\node [style=rn] (7) at (1.5, -0.2500001) {$-\alpha$};
		\node [style=gn] (8) at (1.5, 0.5) {$~\pi~$};
		\node [style=rn] (9) at (-1.5, 0.2500001) {};
		\node [style=gn] (10) at (-1.5, -0.2500001) {};
		\node [style=rn] (11) at (0.7499998, 0.5) {$~\alpha~$};
		\node [style=gn] (12) at (0.7499998, -0.2500001) {$~\pi~$};
	\end{pgfonlayer}
	\begin{pgfonlayer}{edgelayer}
		\draw (3.center) to (8);
		\draw (8) to (7);
		\draw (7) to (5.center);
		\draw (4) to (2.center);
		\draw (1.center) to (6);
		\draw (6) to (4);
		\draw (9) to (10);
		\draw (12) to (11);
	\end{pgfonlayer}
\end{tikzpicture}&\hypertarget{r:K2}{(K2)}\\
   &&&& \\
   \begin{tikzpicture}
	\begin{pgfonlayer}{nodelayer}
		\node [style=gn] (0) at (0.5, -0.55) {$~\frac{\pi}{2}~$};
		\node [style=rn] (1) at (0.5, -0) {};
		\node [style=gn] (2) at (0.5, 0.55) {$~\frac{\pi}{2}~$};
		\node [style=none] (3) at (0.5, 0.9999999) {};
		\node [style=gn] (4) at (1.25, 0.5) {$\frac{-\pi}{2}$};
		\node [style=none] (5) at (0.5, -0.9999999) {};
		\node [style=none] (6) at (-0.9999999, 0.9999999) {};
		\node [style=none] (7) at (-0.9999999, -0.9999999) {};
		\node [style=none] (8) at (-0.2500001, -0) {$=$};
		\node [style={{H box}}] (9) at (-0.9999999, -0) {};
	\end{pgfonlayer}
	\begin{pgfonlayer}{edgelayer}
		\draw (3.center) to (2);
		\draw (2) to (1);
		\draw (1) to (0);
		\draw (0) to (5.center);
		\draw (1) to (4);
		\draw (6.center) to (7.center);
	\end{pgfonlayer}
\end{tikzpicture}&\hypertarget{r:EU}{(EU)} && \begin{tikzpicture}
	\begin{pgfonlayer}{nodelayer}
		\node [style=none] (0) at (0.7500001, 1) {};
		\node [style=none] (1) at (-0.7500001, -1) {};
		\node [style={H box}] (2) at (-0.7500001, 0.5000001) {};
		\node [style=rn] (3) at (-1.25, -0) {\footnotesize$~\alpha~$};
		\node [style={H box}] (4) at (-1.75, -0.5000001) {};
		\node [style=none] (5) at (-0.7500001, 1) {};
		\node [style={H box}] (6) at (-0.7500001, -0.5000001) {};
		\node [style=none] (7) at (-1.25, 0.7500001) {$\cdots$};
		\node [style=none] (8) at (0, -0) {$=$};
		\node [style=gn] (9) at (1.25, -0) {\footnotesize$~\alpha~$};
		\node [style=none] (10) at (1.75, -1) {};
		\node [style=none] (11) at (1.75, 1) {};
		\node [style=none] (12) at (-1.75, 1) {};
		\node [style={H box}] (13) at (-1.75, 0.5000001) {};
		\node [style=none] (14) at (-1.75, -1) {};
		\node [style=none] (15) at (0.7500001, -1) {};
		\node [style=none] (16) at (-1.25, -0.7500001) {$\cdots$};
		\node [style=none] (17) at (1.25, 0.7500001) {$\cdots$};
		\node [style=none] (18) at (1.25, -0.7500001) {$\cdots$};
	\end{pgfonlayer}
	\begin{pgfonlayer}{edgelayer}
		\draw [bend right, looseness=1.00] (3) to (4);
		\draw [bend left, looseness=1.00] (3) to (6);
		\draw (6) to (1.center);
		\draw (4) to (14.center);
		\draw [bend left, looseness=1.00] (3) to (13);
		\draw [bend right, looseness=1.00] (3) to (2);
		\draw (2) to (5.center);
		\draw (13) to (12.center);
		\draw [bend left=23, looseness=1.00] (9) to (0.center);
		\draw [bend right=23, looseness=1.00] (9) to (11.center);
		\draw [bend right=23, looseness=1.00] (9) to (15.center);
		\draw [bend left=23, looseness=1.00] (9) to (10.center);
	\end{pgfonlayer}
\end{tikzpicture}&\hypertarget{r:H}{(H)}\\
   &&&& \\
   \begin{tikzpicture}
	\begin{pgfonlayer}{nodelayer}
		\node [style=gn] (0) at (-1, -0) {$~\pi~$};
		\node [style=none] (1) at (-0.5, 0.75) {};
		\node [style=none] (2) at (-0.5, -0.75) {};
		\node [style=none] (3) at (0, -0) {$=$};
		\node [style=gn] (4) at (1, 0.25) {};
		\node [style=none] (5) at (1, 0.75) {};
		\node [style=rn] (6) at (1, -0.25) {};
		\node [style=none] (7) at (1, -0.75) {};
		\node [style=gn] (8) at (0.5, -0) {$~\pi~$};
	\end{pgfonlayer}
	\begin{pgfonlayer}{edgelayer}
		\draw (1.center) to (2.center);
		\draw (5.center) to (4);
		\draw (6) to (7.center);
	\end{pgfonlayer}
\end{tikzpicture}&\hypertarget{r:ZO}{(ZO)} && \begin{tikzpicture}
	\begin{pgfonlayer}{nodelayer}
		\node [style=gn] (0) at (-1.25, 0.7499999) {$~~\alpha~~$};
		\node [style=gn] (1) at (-0.2499998, 0.7499999) {$\alpha{+}\pi$};
		\node [style=gn] (2) at (1.25, 0.7499999) {$2\alpha{+}\pi$};
		\node [style=none] (3) at (1.25, -0.7500001) {};
		\node [style=rn] (4) at (-0.7500001, -0.2500001) {};
		\node [style=none] (5) at (-0.7499998, -0.7499998) {};
		\node [style=none] (6) at (0.2499998, -0) {$=$};
		\node [style=rn] (7) at (1.25, -0.2500001) {};
	\end{pgfonlayer}
	\begin{pgfonlayer}{edgelayer}
		\draw (0) to (4);
		\draw (1) to (4);
		\draw (4) to (5.center);
		\draw (7) to (3.center);
		\draw [bend right=45, looseness=1.00] (2) to (7);
		\draw [bend right=45, looseness=1.00] (7) to (2);
	\end{pgfonlayer}
\end{tikzpicture}&\hypertarget{r:SUP}{(SUP)}\\
   &&&& \\
   \hline
  \end{tabular}
 \caption[]{Rules for the ZX-calculus with scalars, augmented with the supplementarity rule \cite{supplementarity}. All of these rules also hold when flipped upside-down, or with the colours red and green swapped. The right-hand side of (IV) is an empty diagram. ($\cdots$) denote zero or more wires, while (\protect\rotatebox{45}{\raisebox{-0.4em}{$\cdots$}}) denote one or more wires.}
 \label{fig:ZX_rules}
\end{figure}
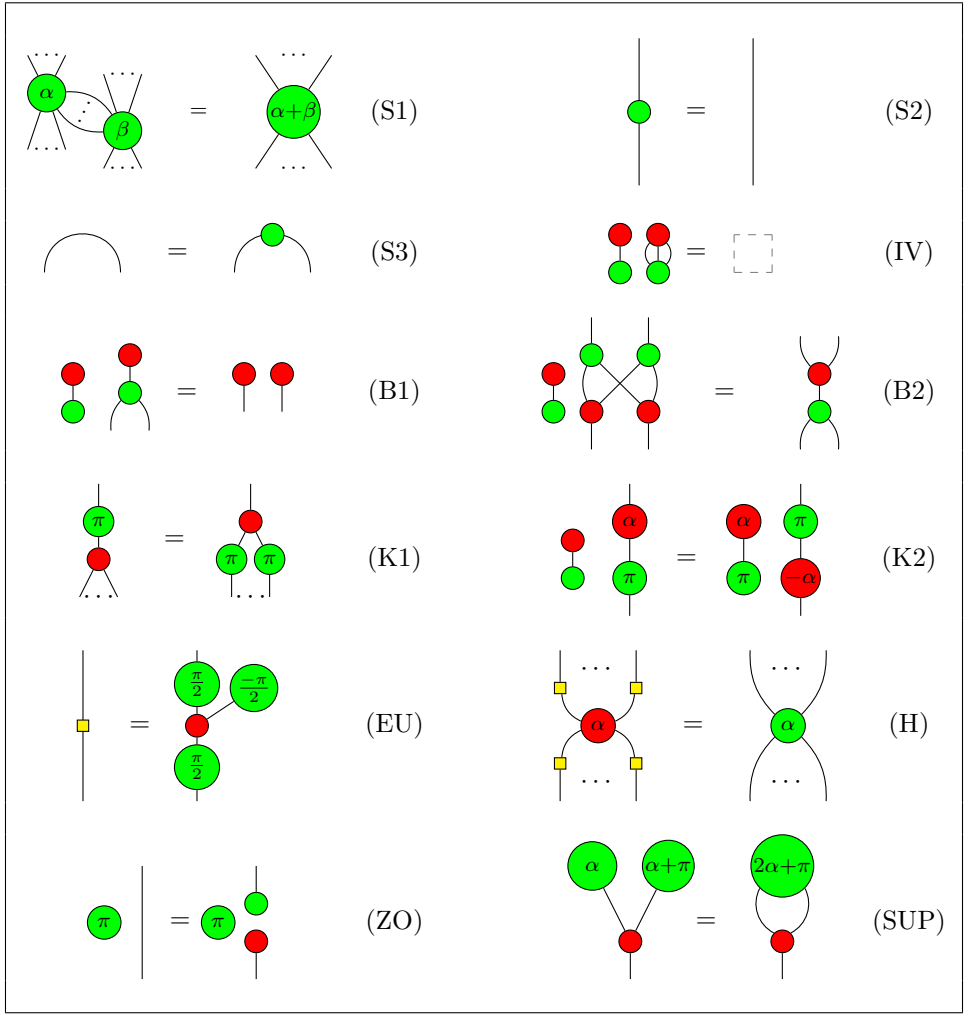

The initial set of axioms \cite{interacting} included the rules \sone, \stwo, \sthree, \bo, \bt, \ko, \kt and \h. The rule \eu has been proven to be necessary in \cite{euler-decomp} and the rules \zo and \iv result from \cite{scalar-completeness,simplified-stabilizer}. Finally, the rule \supp has been added in \cite{supplementarity}.

When we can show that a diagram $D_1$ is equal to another one, $D_2$, using a succession of equalities of this set, we write $ZX\vdash D_1 = D_2$. Given that the rules are sound, this implies that $\interp{D_1} = \interp{D_2}$. The rules can be applied to any subdiagram, meaning, for any diagram $D$:
\[ (ZX\vdash D_1=D_2)\implies \left\lbrace\begin{array}{ccc}
(ZX\vdash D_1\circ D = D_2\circ D) & \land & (ZX\vdash D\circ D_1 = D\circ D_2)\\
(ZX\vdash D_1\otimes D = D_2\otimes D) & \land & (ZX\vdash D\otimes D_1 = D\otimes D_2)
\end{array}\right. \]

\textbf{Scalars:} We will identify diagrams with 0 input and 0 output -- hence representing a $1\times 1$ matrix -- with scalars. We will not ignore them in this paper, while in some versions of the ZX-Calculus, the global phase or even all the scalars are ignored. Ignoring them would imply taking the risk of ignoring a zero scalar, which can lead to false statements -- if $ZX\vdash 0\times D_1 = 0 \times D_2$, we can not say that $ZX\vdash D_1 = D_2$. The first rules to palliate it appeared in \cite{scalar-completeness} and were simplified in \cite{simplified-stabilizer}.

\textbf{Only Topology Matters} is a paradigm -- to be taken as a rule -- stating that any wire of a ZX-diagram can be bent at will, without changing its semantics:
\[\fit{\begin{tikzpicture}
	\begin{pgfonlayer}{nodelayer}
		\node [style=none] (0) at (-5.5, -0) {=};
		\node [style=none] (1) at (-6.5, 1) {};
		\node [style=none] (2) at (-6.5, -1) {};
		\node [style=none] (3) at (-5, 1) {};
		\node [style=none] (4) at (-5, -1) {};
		\node [style=none] (5) at (-6.5, -0) {};
		\node [style=none] (6) at (-7, -0) {};
		\node [style=none] (7) at (-6, -0) {};
		\node [style=none] (8) at (-4, -0) {};
		\node [style=none] (9) at (-3.5, -1) {};
		\node [style=none] (10) at (-4.5, -0) {=};
		\node [style=none] (11) at (-3.5, 1) {};
		\node [style=none] (12) at (-3.5, -0) {};
		\node [style=none] (13) at (-3, -0) {};
		\node [style=none] (14) at (-1.75, 1) {};
		\node [style=none] (15) at (-1.75, -1) {};
		\node [style=none] (16) at (-1.25, 1) {};
		\node [style=none] (17) at (-1.25, -1) {};
		\node [style=none] (18) at (-1.75, -0) {};
		\node [style=none] (19) at (-1.25, -0) {};
		\node [style=none] (20) at (-0.2500001, 1) {};
		\node [style=none] (21) at (-0.2500001, -1) {};
		\node [style=none] (22) at (-0.7500003, -0) {=};
		\node [style=none] (23) at (0.2500001, 1) {};
		\node [style=none] (24) at (0.2500001, -1) {};
		\node [style=none] (25) at (1.5, -0.7499999) {};
		\node [style=none] (26) at (1.75, 0.5000002) {};
		\node [style=none] (27) at (2, -0.7499999) {};
		\node [style=none] (28) at (3.25, 0.2500001) {};
		\node [style=none] (29) at (3, -0.7499999) {};
		\node [style=none] (30) at (3.5, -0.7499999) {};
		\node [style=none] (31) at (2.5, -0.2500001) {=};
		\node [style=none] (32) at (6.25, 0.5000002) {};
		\node [style=none] (33) at (5, -0.7499999) {};
		\node [style=none] (34) at (6.75, 0.5000002) {};
		\node [style=none] (35) at (5.25, 0.5000002) {};
		\node [style=none] (36) at (5.75, -0.2500001) {=};
		\node [style=none] (37) at (6.5, -0.5000002) {};
		\node [style=none] (38) at (4.75, 0.5000002) {};
	\end{pgfonlayer}
	\begin{pgfonlayer}{edgelayer}
		\draw (3.center) to (4.center);
		\draw [in=90, out=-90, looseness=1.00] (1.center) to (6.center);
		\draw [bend right=90, looseness=2.00] (6.center) to (5.center);
		\draw [bend left=90, looseness=2.00] (5.center) to (7.center);
		\draw [in=90, out=-90, looseness=1.25] (7.center) to (2.center);
		\draw [in=90, out=-90, looseness=1.00] (11.center) to (13.center);
		\draw [bend left=90, looseness=2.00] (13.center) to (12.center);
		\draw [bend right=90, looseness=2.00] (12.center) to (8.center);
		\draw [in=90, out=-90, looseness=1.25] (8.center) to (9.center);
		\draw [style=none, in=90, out=-90, looseness=1.00] (14.center) to (19.center);
		\draw [style=none, in=90, out=-90, looseness=1.00] (19.center) to (15.center);
		\draw [style=none, in=90, out=-90, looseness=1.00] (16.center) to (18.center);
		\draw [style=none, in=90, out=-90, looseness=1.00] (18.center) to (17.center);
		\draw (20.center) to (21.center);
		\draw (23.center) to (24.center);
		\draw [in=0, out=90, looseness=1.50] (25.center) to (26.center);
		\draw [in=90, out=180, looseness=1.50] (26.center) to (27.center);
		\draw [in=0, out=90, looseness=0.75] (30.center) to (28.center);
		\draw [in=90, out=180, looseness=0.75] (28.center) to (29.center);
		\draw [in=0, out=-90, looseness=1.50] (38.center) to (33.center);
		\draw [in=-90, out=180, looseness=1.50] (33.center) to (35.center);
		\draw [in=0, out=-90, looseness=0.75] (34.center) to (37.center);
		\draw [in=-90, out=180, looseness=0.75] (37.center) to (32.center);
	\end{pgfonlayer}
\end{tikzpicture}}\]
\[\begin{tikzpicture}
	\begin{pgfonlayer}{nodelayer}
		\node [style=gn] (0) at (-2.75, -0) {$\alpha$};
		\node [style=none] (1) at (-3, 0.7499999) {};
		\node [style=none] (2) at (-2.5, 0.7499999) {};
		\node [style=none] (3) at (-2.75, -0.5) {};
		\node [style=none] (4) at (-2, -0) {=};
		\node [style=none] (5) at (-1.5, 0.7499999) {};
		\node [style=none] (6) at (-1.25, -0.5) {};
		\node [style=gn] (7) at (-1.25, -0) {$\alpha$};
		\node [style=none] (8) at (-0.9999999, 0.7499999) {};
		\node [style=none] (9) at (0.9999999, -0.5) {};
		\node [style=gn] (10) at (1.5, -0) {$\alpha$};
		\node [style=none] (11) at (2.25, -0) {=};
		\node [style=gn] (12) at (3, -0) {$\alpha$};
		\node [style=none] (13) at (1.75, 0.5) {};
		\node [style=none] (14) at (3.25, -0.5) {};
		\node [style=none] (15) at (1.5, -0.5) {};
		\node [style=none] (16) at (2.75, -0.5) {};
		\node [style=none] (17) at (3, 0.5) {};
	\end{pgfonlayer}
	\begin{pgfonlayer}{edgelayer}
		\draw [in=45, out=-45, looseness=1.25] (1.center) to (0);
		\draw [in=135, out=-135, looseness=1.25] (2.center) to (0);
		\draw (0) to (3.center);
		\draw [bend right=15, looseness=1.00] (5.center) to (7);
		\draw [bend left=15, looseness=1.00] (8.center) to (7);
		\draw (7) to (6.center);
		\draw [in=120, out=90, looseness=2.50] (9.center) to (10);
		\draw [bend left, looseness=1.00] (13.center) to (10);
		\draw (10) to (15.center);
		\draw (16.center) to (12);
		\draw (17.center) to (12);
		\draw (12) to (14.center);
	\end{pgfonlayer}
\end{tikzpicture}\]

\section{The $\boldsymbol{\frac{\pi}{4}}$-Fragment is not Complete}
\label{sec:pi_4}
In this section, we identify the following  simple equation \e, which is sound -- both sides of the equation represent the scalar $1$ -- but which cannot be derived from the rules of the ZX-Calculus expressed in Figure \ref{fig:ZX_rules}. 
\hypertarget{r:E}{}\[\begin{tikzpicture}
	\begin{pgfonlayer}{nodelayer}
		\node [style=rn] (0) at (-0.7500001, -0.32) {$\frac{\text{-}\pi}{4}$};
		\node [style=gn] (1) at (-0.7499998, 0.32) {$\,\frac{\pi}{4}\,$};
		\node [style=none] (2) at (0, -0) {=};
		\node [style=none] (3) at (0.500001, 0.25) {};
		\node [style=none] (4) at (0.500001, -0.25) {};
		\node [style=none] (5) at (1, 0.25) {};
		\node [style=none] (6) at (1, -0.25) {};
	\end{pgfonlayer}
	\begin{pgfonlayer}{edgelayer}
		\draw (0) to (1);
		\draw [style=dashed] (3.center) to (5.center);
		\draw [style=dashed] (5.center) to (6.center);
		\draw [style=dashed] (6.center) to (4.center);
		\draw [style=dashed] (4.center) to (3.center);
	\end{pgfonlayer}
\end{tikzpicture}\qquad (\text{E})\]

Since equation \e only involves angles multiple of $\frac\pi4$, it implies the incompleteness of the \frag{4} of the ZX-Calculus. 
In the following, 
we exhibit a simple invariant of the ZX-Calculus to prove that \e is not derivable, 
and then we show that \e subsumes two existing rules of the ZX-Calculus -- namely \iv and \zo~--, 
leading to a simpler -- \iv and \zo rules are replaced by \e~-- but more expressive ZX-Calculus that we call $ZX_E$. 

\subsection{A Graphical Invariant for the ZX-Calculus}

We introduce  a simple graphical quantity for ZX-diagrams, the parity of the number of odd-degree red dots plus the number of H-dots (yellow squares), formally defined as follows:

\begin{definition}Given a ZX-diagram $D:n\to m$, let $\invR{D}\in \{0,1\}$ be inductively defined as $\invR{D_1\otimes D_2} = \invR{D_1\circ D_2} = \invR{D_1}+\invR{D_2} \bmod 2$,   $\invR{\begin{tikzpicture}
	\begin{pgfonlayer}{nodelayer}
		\node [style=rn] (0) at (0, -0) {$~\alpha~$};
		\node [style=none] (1) at (-0.5000001, 0.7499999) {};
		\node [style=none] (2) at (0, 0.4999999) {$\cdots$};
		\node [style=none] (3) at (0.5000001, 0.7499999) {};
		\node [style=none] (4) at (-0.5000001, -0.7499999) {};
		\node [style=none] (5) at (0, -0.5) {$\cdots$};
		\node [style=none] (6) at (0.5000001, -0.7499999) {};
		\node [style=none] (7) at (0, 0.75) {$n$};
		\node [style=none] (8) at (0, -0.7499999) {$m$};
		\node [style=none] (9) at (0, 1) {};
		\node [style=none] (10) at (0, -1) {};
	\end{pgfonlayer}
	\begin{pgfonlayer}{edgelayer}
		\draw [style=none, bend right, looseness=1.00] (1.center) to (0);
		\draw [style=none, bend right, looseness=1.00] (0) to (3.center);
		\draw [style=none, bend right, looseness=1.00] (0) to (4.center);
		\draw [style=none, bend left, looseness=1.00] (0) to (6.center);
	\end{pgfonlayer}
\end{tikzpicture}}\hspace{-0.4em}=n+m\bmod 2$, $\invR{~\begin{tikzpicture}
	\begin{pgfonlayer}{nodelayer}
		\node [style={H box}] (0) at (0, 0) {};
		\node [style=none] (1) at (0, 0.5) {};
		\node [style=none] (2) at (0, -0.5) {};
	\end{pgfonlayer}
	\begin{pgfonlayer}{edgelayer}
		\draw (2.center) to (1.center);
	\end{pgfonlayer}
\end{tikzpicture}
~}\hspace{-0.4em}=1$, and $\invR{.}=0$ for all the other generators. 


\end{definition}

One can define similarly $\invG{.}$ as the parity of the number of odd-degree green dots plus the number of H-dots. Notice that for any scalar $D:0\to 0$, $\invG D+\invR D=0\bmod 2$, thanks to the well known degree sum formula which implies that the sum of the degree of the vertices of a graph is even. More generally, for any $D:n\to m$, $\invG D+\invR D=n+m \bmod 2$, which is clearly an invariant of the ZX-calculus since all the rules preserve the number of inputs/outputs. As a consequence, a rule preserves $\invG.$ if and only if it preserves $\invR.$. 

\begin{lemma}[Invariant]\label{lem:inv} For any ZX-diagram $D_1$,  if $\interp {D_1} \neq 0$ and $ZX\vdash D_1 = D_2$, then $\invR {D_1} = \invR{D_2}$. 
\end{lemma}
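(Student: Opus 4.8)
The plan is to reduce the claim to a local, rule-by-rule check. Since $ZX\vdash D_1=D_2$ unfolds into a finite chain $D_1=E_0,E_1,\dots,E_k=D_2$ in which each $E_{j+1}$ comes from $E_j$ by rewriting a single subdiagram with a rule of Figure~\ref{fig:ZX_rules}, and since every rule is sound, soundness forces $\interp{E_j}=\interp{D_1}\neq 0$ for all $j$. Thus it suffices to establish the one-step version: if $E'$ is obtained from $E$ by one rule application and $\interp{E}\neq 0$, then $\invR{E}=\invR{E'}$; chaining this along the derivation then yields $\invR{D_1}=\invR{D_2}$.

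For the one-step version I would use that $\invR{\cdot}$ is additive for both compositions, $\invR{A\circ B}=\invR{A\otimes B}=\invR{A}+\invR{B}\bmod 2$. A rule application replaces a subdiagram $L$ by $R$ of the same type inside a context assembled from $\circ$ and $\otimes$, so by additivity $\invR{E}$ and $\invR{E'}$ differ by exactly $\invR{L}-\invR{R}\bmod 2$ (the contribution of $L$ is intrinsic, since a spider's degree is its own $n+m$, independent of context). The task therefore reduces to checking $\invR{L}=\invR{R}$ for each rule $L=R$.

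I would then run through the rules, guided by two facts: fusing two same-colour dots of degrees $a$ and $b$ joined by $k\geq 1$ wires yields one dot of degree $a+b-2k\equiv a+b\pmod 2$, so spider fusion \sone preserves the parity of the number of odd-degree dots of each colour; and a red--green pair joined by a single wire contributes two odd-degree dots, hence nothing modulo $2$. Each rule is then a short computation: \stwo and \sthree are immediate; \iv and \bo give $0=0$; \ko and \h give $0=0$ for every arity; \eu and \supp give $1=1$; and for \bt and \kt the scalar factors present on one side are exactly what rebalances the parity---for instance, stripping its scalar pair would leave the two sides of \bt reading $0$ and $1$. The colour-swapped variants require no separate work: swapping red and green turns $\invR{\cdot}$ into $\invG{\cdot}$, and a rule preserves $\invR{\cdot}$ iff it preserves $\invG{\cdot}$; the upside-down variants leave all degrees and all H-boxes fixed, hence $\invR{\cdot}$ as well.

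The single rule that genuinely breaks the invariant is \zo, whose two sides have $\invR{\cdot}$ equal to $0$ and $1$; this is where the hypothesis $\interp{D_1}\neq 0$ must come in, and I expect it to be the delicate step. The key observation is that each side of \zo carries a disconnected green $\pi$ dot, a scalar equal to $1+e^{i\pi}=0$. Because a disconnected scalar factors through the standard interpretation, any diagram in which \zo could be applied has interpretation $0$; as every $E_j$ along the derivation has interpretation $\interp{D_1}\neq 0$, the rule \zo is never invoked and may simply be omitted from the case analysis. The care needed here is to argue cleanly that a vanishing disconnected scalar forces the entire diagram to vanish---i.e.\ that scalars multiply through $\interp{\cdot}$---which is exactly what licenses discarding \zo and closes the proof.
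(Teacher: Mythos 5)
Your proposal is correct and takes essentially the same route as the paper: the paper's entire proof is the observation that every rule of Figure \ref{fig:ZX_rules} except \zo preserves $\invR .$, together with the remark that, since $\interp{D_1}\neq 0$, the zero scalar (the isolated green $\pi$ dot, which both sides of \zo contain) can never occur in any intermediate diagram of the derivation, so \zo is never applied. You simply make explicit the bookkeeping the paper leaves implicit -- the chaining along one-step rewrites, the additivity of $\invR .$ under contexts, and the soundness argument that every intermediate diagram has interpretation $\interp{D_1}\neq 0$; that last point, which you flag as the delicate step, is indeed licensed by $\interp{A\otimes B}=\interp A\otimes\interp B$, exactly as the paper intends.

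One local slip is worth correcting. Your second ``guiding fact'' -- that a red--green pair joined by a single wire ``contributes two odd-degree dots, hence nothing modulo $2$'' -- is false for $\invR .$: only the red end counts, so such a pair contributes exactly $1$. The two odd ends cancel only in the combined quantity $\invR .+\invG .$, which equals $n+m \bmod 2$ and carries no information. Taken literally, this fact would make \iv read $1=0$. Your actual per-rule verdicts are the ones obtained with the correct count -- \iv and \bo are $1+1=0$ on the left against $0$, respectively $1+1$, on the right, and in \bt and \kt the scalar pair supplies the rebalancing $1$, exactly as you use it -- so the argument survives deleting that sentence. A second, harmless misstatement: \ko is preserved for every arity but not always with value $0=0$; when the red dot has even degree both sides read $1$. (The claim is genuinely $0=0$ for \h, where the $n+m$ H-boxes cancel the red spider's degree parity.)
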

\begin{proof} Notice that all the rules in Figure \ref{fig:ZX_rules}, but \zo, preserve $\invR .$.
Since $\interp{D_1}\neq 0$, the scalar \begin{tikzpicture}
	\begin{pgfonlayer}{nodelayer}
		\node [style=gn] (0) at (0, -0) {$\pi$};
	\end{pgfonlayer}
\end{tikzpicture} cannot appear in any derivation transforming $D_1$ into $D_2$, thus $\zo$ is not applied, as a consequence $\invR {D_1} = \invR{D_2}$.
\end{proof}

\begin{proposition}\label{prop:incomplete} Equation \e is not derivable using the rules in Figure \ref{fig:ZX_rules}: $ZX\nvdash \e$.
\end{proposition}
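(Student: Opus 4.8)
The plan is to invoke Lemma~\ref{lem:inv} in its contrapositive form: if $\invR{D_1}\neq\invR{D_2}$ while $\interp{D_1}\neq 0$, then $ZX\nvdash D_1=D_2$. Write $D_1$ for the left-hand side of \e (the green $\frac{\pi}{4}$ dot wired to the red $\frac{-\pi}{4}$ dot) and $D_2$ for the empty diagram on the right.

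First I would discharge the hypothesis $\interp{D_1}\neq 0$ by a direct computation with the standard interpretation. The green node is the state $\left(\begin{smallmatrix}1\\e^{i\pi/4}\end{smallmatrix}\right)$, and the red node, obtained by conjugating a green node with Hadamards, acts as the effect $\frac{1}{\sqrt2}\left(\begin{smallmatrix}1+e^{-i\pi/4}&1-e^{-i\pi/4}\end{smallmatrix}\right)$; their sequential composition evaluates to $\frac{1}{\sqrt2}\left(e^{i\pi/4}+e^{-i\pi/4}\right)=1$. This simultaneously re-confirms that \e is sound, both sides denoting the scalar $1$.

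Next I would evaluate the invariant on each side. The diagram $D_1$ contains exactly one red dot, which has degree $1$ and hence odd degree, and no yellow H-boxes, so $\invR{D_1}=1$. The empty diagram $D_2$ has neither red dots nor H-boxes, so $\invR{D_2}=0$. Since $1\neq 0 \bmod 2$ and $\interp{D_1}\neq 0$, Lemma~\ref{lem:inv} forbids any derivation $ZX\vdash D_1=D_2$, which is precisely the statement of Proposition~\ref{prop:incomplete}.

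I do not anticipate a genuine obstacle, since Lemma~\ref{lem:inv} has already isolated the one delicate point: the rule \zo is the unique rule of Figure~\ref{fig:ZX_rules} that fails to preserve $\invR{\cdot}$. The only care required is therefore to establish $\interp{D_1}\neq 0$, which guarantees that the zero scalar (the green $\pi$ dot) can never occur along any derivation starting from $D_1$, so that \zo is never triggered; the nonvanishing computation above secures exactly this.
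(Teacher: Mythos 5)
Your proposal is correct and follows exactly the paper's own argument: the paper likewise proves Proposition~\ref{prop:incomplete} in one line by noting that both sides of \e{} are non-zero and have different values of $\invR{\cdot}$, then invoking Lemma~\ref{lem:inv}. The only difference is that you spell out the routine verifications the paper leaves implicit -- the computation $\interp{D_1}=1\neq 0$ and the values $\invR{D_1}=1$, $\invR{D_2}=0$ -- and both of these checks are accurate.
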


\begin{proof}
The two diagrams of equation \e are non zero, and they differ for $\invR.$, so according to lemma \ref{lem:inv}, $ZX\nvdash \e$.
\end{proof}

Since the diagrams of equation \e are in the \frag4, it implies that the \frag4 of the ZX-calculus is not complete. 

\begin{remark}
\begin{tabularx}{\textwidth-\widthof{~~~Remark.~}}{cX@{}}
$\begin{tikzpicture}
	\begin{pgfonlayer}{nodelayer}
		\node [style=rn] (0) at (-0.7500001, -0.3) {$\frac{\text{-}\pi}{4}$};
		\node [style=gn] (1) at (-0.7499998, 0.3) {$\,\frac{\pi}{4}\,$};
		\node [style=none] (2) at (0.6, -0) {=};
		\node [style=none] (3) at (1, 0.2500001) {};
		\node [style=none] (4) at (1, -0.2500001) {};
		\node [style=none] (5) at (1.5, 0.2500001) {};
		\node [style=none] (6) at (1.5, -0.2500001) {};
		\node [style=gn] (7) at (0, 0.3) {$\,\frac{\pi}{4}\,$};
		\node [style=rn] (8) at (0, -0.3) {$\frac{\text{-}\pi}{4}$};
	\end{pgfonlayer}
	\begin{pgfonlayer}{edgelayer}
		\draw (0) to (1);
		\draw [style=dashed] (3.center) to (5.center);
		\draw [style=dashed] (5.center) to (6.center);
		\draw [style=dashed] (6.center) to (4.center);
		\draw [style=dashed] (4.center) to (3.center);
		\draw (8) to (7);
	\end{pgfonlayer}
\end{tikzpicture}$ & the ``doubled'' version of \e, contrary to it, is derivable in the ZX-Calculus.
\end{tabularx}
\end{remark}

By completeness of the \frag2, for any ZX-diagrams $D_1$ and $D_2$ in this particular fragment, if $\interp {D_1}=\interp {D_2}\neq 0$, then $\invR{D_1}=\invR{D_2}$. This property is obviously not true in the \frag4, equation \e being a counter example. However, this property is also satisfied by other, a priori not complete, fragments:

\begin{proposition}\label{prop:invfrag}For any $k\neq 0\bmod 4$ and any two diagrams $D_1,D_2$ with angles multiple of $\frac \pi k$, if $\interp {D_1} =\interp{D_2} \neq 0$ then $\invR{D_1} = \invR{D_2}$.  
\end{proposition}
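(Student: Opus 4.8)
The plan is to read the invariant off the matrix by tracking the power of $\sqrt 2$ that the interpretation carries. Write $\zeta = e^{i\pi/k}$; since $e^{i\pi/k}=e^{2\pi i/(2k)}$ is a primitive $2k$-th root of unity, every admissible phase $e^{ij\pi/k}=\zeta^j$ lies in the ring $\mathbb{Z}[\zeta]$. The core claim I would establish by structural induction on $D$ is that, for every diagram $D$ whose angles are multiples of $\frac\pi k$,
\[ \interp{D} = 2^{-r(D)/2}\, M_D, \]
where $M_D$ is a matrix with entries in $\mathbb{Z}[\zeta]$ and $r(D)\in\mathbb{N}$ satisfies $r(D)\equiv\invR{D}\pmod 2$. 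For the generators this is a direct check: green spiders, identities, swaps, cups, caps and the empty diagram have entries in $\{0,1,\zeta^j\}$, so $r=0$ and $\invR{\cdot}=0$; the $H$-box contributes one factor $2^{-1/2}$ and has $\invR{\cdot}=1$, so $r=1$; and a red spider of degree $n+m$ equals $H^{\otimes m}$ composed with a green spider composed with $H^{\otimes n}$, hence carries exactly $n+m$ factors of $2^{-1/2}$, matching $\invR{\cdot}=n+m\bmod 2$. Both compositions preserve the form: $M_{D_1\otimes D_2}=M_{D_1}\otimes M_{D_2}$ and $M_{D_2\circ D_1}=M_{D_2}M_{D_1}$ stay over $\mathbb{Z}[\zeta]$, while $r$ and $\invR{\cdot}$ both add, so the congruence $r\equiv\invR{\cdot}$ is maintained.

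With this factorisation in hand I would reduce the statement to a number-theoretic fact. Assume $\interp{D_1}=\interp{D_2}\neq 0$; since $2^{-r_i/2}\neq 0$, the matrices $M_{D_1},M_{D_2}$ are both nonzero and $2^{-r_1/2}M_{D_1}=2^{-r_2/2}M_{D_2}$, i.e. $M_{D_1}=2^{(r_1-r_2)/2}M_{D_2}$. Suppose toward a contradiction that $\invR{D_1}\neq\invR{D_2}$, so that $r_1-r_2$ is odd, say $r_1-r_2=2t+1$; then $M_{D_1}=2^t\sqrt 2\,M_{D_2}$. Choosing any index $(a,b)$ where $(M_{D_2})_{ab}\neq 0$ (which forces $(M_{D_1})_{ab}\neq 0$ as well), I solve
\[ \sqrt 2 = \frac{(M_{D_1})_{ab}}{2^t\,(M_{D_2})_{ab}} \in \mathbb{Q}(\zeta), \]
since the right-hand side is a quotient of elements of $\mathbb{Z}[\zeta]$ inside the field $\mathbb{Q}(\zeta)=\mathbb{Q}(e^{i\pi/k})$.

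It then remains to show that this cannot happen under the hypothesis $k\not\equiv 0\pmod 4$, and this is the step I expect to be the main obstacle, as it is exactly where the arithmetic condition on $k$ enters. Here I would invoke the classical cyclotomic fact that $\sqrt 2\in\mathbb{Q}(\zeta_n)$ if and only if $8\mid n$ (equivalently, the quadratic field $\mathbb{Q}(\sqrt 2)$ has conductor $8$, so by Kronecker--Weber it embeds in $\mathbb{Q}(\zeta_n)$ precisely when $8\mid n$). With $n=2k$, the condition $8\mid 2k$ is equivalent to $4\mid k$, so the hypothesis $k\not\equiv 0\pmod 4$ gives $\sqrt 2\notin\mathbb{Q}(e^{i\pi/k})$, contradicting the displayed identity. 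Hence $\invR{D_1}=\invR{D_2}$. It is worth noting that this is sharp and consistent with the earlier results: for $k=4$ one has $\sqrt 2=\zeta_8+\zeta_8^{-1}\in\mathbb{Q}(e^{i\pi/4})$, which is precisely what allows the scalar equation \e to hold semantically while changing $\invR{\cdot}$, so no such invariant obstruction exists in the \frag4.
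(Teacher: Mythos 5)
Your proof is correct, and its architecture coincides with the paper's: the same structural induction establishing that $\sqrt2^{\invR D}\interp D$ has all entries in $\QQ[e^{\frac{i\pi}k}]$ (you sharpen this slightly by working over the ring $\mathbb{Z}[e^{\frac{i\pi}k}]$ and tracking an explicit exponent $r(D)\equiv\invR D \pmod 2$, which is harmless and if anything cleaner), the same comparison of a common nonzero entry to conclude $\sqrt2^{\,\invR{D_1}-\invR{D_2}}\in\QQ[e^{\frac{i\pi}k}]$, and the same reduction to the single arithmetic fact that $\sqrt 2\notin\QQ[e^{\frac{i\pi}k}]$ when $k\neq 0\bmod 4$. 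The one genuine divergence is how that fact is proved. The paper does it by hand: assuming $\sqrt2\in\QQ[e^{\frac{i\pi}k}]$ with $k=2\bmod 4$, it manufactures $e^{\frac{i\pi}{2k}}$ inside the field (via $i$ and $e^{\frac{-i\pi}4}=\sqrt2\,\frac{1-i}2$, using that $\frac{k+2}{4}$ is an integer) and derives a contradiction by comparing the $\QQ$-dimensions $\varphi(4k)=2\varphi(2k)$ and $\varphi(2k)$ of $\QQ[e^{\frac{i\pi}{2k}}]$ and $\QQ[e^{\frac{i\pi}k}]$, then handles odd $k$ by the inclusion $\QQ[e^{\frac{i\pi}{k}}]\subseteq\QQ[e^{\frac{i\pi}{2k}}]$. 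You instead cite the classical criterion $\sqrt2\in\QQ(\zeta_n)\iff 8\mid n$ (conductor $8$ for $\QQ(\sqrt2)$, Kronecker--Weber), applied with $n=2k$ so that $8\mid 2k\iff 4\mid k$. This is a correct black-box replacement: it is more general, and it transparently explains why $k=0\bmod 4$ must be excluded --- your sharpness remark $\sqrt2=\zeta_8+\zeta_8^{-1}$ is exactly the semantic mechanism behind equation \e{} defeating the invariant in the \frag4 --- but it imports class-field-theoretic machinery where the paper's two-case totient computation is self-contained and elementary. If you wished to avoid the citation, the paper's degree count could be substituted verbatim at that single point without disturbing the rest of your argument.
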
  

\begin{proof}
Given $k>0$ and  $D$ a ZX-diagram with angles multiple of $\frac \pi k$, one can show, by induction on $D$, that $(\sqrt2^{\invR{D}})\interp D$ is a matrix whose entries are in $\QQ[e^{\frac{i\pi}k}]$, the smallest subfield of $\CCC$ which contains $e^{\frac{i\pi}k}$. Since there is a non-zero entry in $\interp {D_1}$, there exist $q_1,q_2\in  \QQ[e^{\frac{i\pi}k}]$ such that $\sqrt2^{\invR{D_1}}q_1 =\sqrt2^{\invR{D_2}}q_2 \neq 0$, so $\sqrt 2^{(\invR{D_1}-\invR{D_2})} \in  \QQ[e^{\frac{i\pi}k}]$. Suppose $\sqrt{2} \in \QQ[e^{\frac{i\pi}k}]$:
\begin{itemize}
\item[$i$.] If $k=2 \bmod 4$,  then $i=e^{\frac{i\pi}2}\in \QQ[e^{\frac{i\pi}k}]$, so
$e^{\frac{-i\pi}{4}} =  \sqrt{2} \frac{1-i}2 \in \QQ[e^{\frac{i\pi}{k}}]$.
Therefore
$e^{\frac{i\pi}{2k}} = (e^{\frac{i\pi}k})^{\frac{k+2}{4}} \times e^{\frac{-i\pi}4} \in
\QQ[e^{\frac{i\pi}k}]$. 
This implies $\QQ[e^{\frac{i\pi}{2k}}] = \QQ[e^{\frac{i\pi}k}]$ which is not possible as
they are vector spaces over $\QQ$ of dimension respectively $\varphi(4k)=2\varphi(2k)$ and
$\varphi(2k)$ where $\varphi$ is Euler's totient function, and $\varphi(2k)\neq 0$.
\item[$ii$.] If $k$ is odd then $2k=2\bmod 4$. Moreover,  since   $\QQ[e^{i\frac{\pi}{k}}]\subseteq \QQ[e^{i\frac{\pi}{2k}}]$,  $\sqrt{2} \in \QQ[e^{i\frac{\pi}{2k}}]$ which is impossible according to the previous case ($i$).
\end{itemize}
Thus $\sqrt 2\notin \QQ[e^{\frac{i\pi}k}]$ when $k\neq 0\bmod 4$, so $\invR{D_1} = \invR{D_2}$.
\end{proof}

\subsection{A Simpler and More Expressive ZX-calculus}

Equation \e cannot be derived in the ZX-calculus (proposition \ref{prop:incomplete}), as a consequence we propose to add this equation \e as a rule of the language to make it more expressive. We show in the following that the introduction of this new rule makes the two scalar rules \zo and \iv obsolete, leading to a language with less rules than the one define in Figure  \ref{fig:ZX_rules}.

Let us define $ZX_E = \{\e\}\cup ZX\setminus\{\iv,\zo\}$. First, notice that, thanks to \cite{simplified-stabilizer}, the so-called Hopf law is derivable from $ZX\setminus\{\iv,\zo\}$, and hence from $ZX_E$:
\begin{lemma}
\hypertarget{r:HL}{}\[ZX\setminus\{\iv,\zo\} \vdash~~ \begin{tikzpicture}
	\begin{pgfonlayer}{nodelayer}
		\node [style=rn] (0) at (0, 0.25) {};
		\node [style=rn] (1) at (-1, 0.2) {};
		\node [style=gn] (2) at (-1, -0.2) {};
		\node [style=rn] (3) at (-0.5, 0.2) {};
		\node [style=gn] (4) at (-0.5, -0.2) {};
		\node [style=gn] (5) at (0, -0.25) {};
		\node [style=none] (6) at (0, 0.6) {};
		\node [style=none] (7) at (0, -0.6) {};
		\node [style=none] (8) at (1.5, 0.6) {};
		\node [style=rn] (9) at (1.5, 0.25) {};
		\node [style=none] (10) at (1.5, -0.6) {};
		\node [style=gn] (11) at (1.5, -0.25) {};
		\node [style=none] (12) at (0.75, -0) {$=$};
	\end{pgfonlayer}
	\begin{pgfonlayer}{edgelayer}
		\draw (1) to (2);
		\draw (3) to (4);
		\draw [bend right=45, looseness=1.00] (0) to (5);
		\draw [bend left=45, looseness=1.00] (0) to (5);
		\draw (0) to (6.center);
		\draw (5) to (7.center);
		\draw (9) to (8.center);
		\draw (11) to (10.center);
	\end{pgfonlayer}
\end{tikzpicture} \qquad (\textnormal{HL})\]
\end{lemma}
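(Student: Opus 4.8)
The plan is to derive the Hopf law from the bialgebra and spider structure alone, keeping a careful account of scalars so that neither \iv nor \zo is ever needed. The left-hand side of \hlaw is a green and a red spider joined by two parallel wires, each retaining one free leg, multiplied by a scalar built from two red--green pairs; the right-hand side is the same two spiders disconnected. So the whole content of the statement is that the double edge can be severed, at the cost of exactly that scalar.

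First I would isolate the interaction. Using spider fusion \sone I would unfuse the green spider into the node carrying the external leg together with a degree-two green node holding the two internal wires, and symmetrically for the red spider, so that the problem reduces to simplifying a core fragment consisting of a degree-two green node connected by two wires to a degree-two red node. To this core I would apply the bialgebra rule \bt: the $K_{2,2}$-type connection is rewritten into a crossed configuration of four new spiders together with an auxiliary red--green scalar, the crossing being resolved by \sthree. Next I would collapse the four new spiders: fuse them back with \sone into the external green and red nodes, remove the resulting degree-two pass-through nodes with \stwo, and use the copy rules \bo, \ko, \kt to push the remaining arity-zero spiders aside, where they combine into scalars.

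The hard part will be the scalar bookkeeping, and this is exactly why the hypothesis is $ZX\setminus\{\iv,\zo\}$ and not merely the structural rules. In the scalar-sensitive calculus I must end with precisely the two red--green pairs displayed in \hlaw: I may not erase a red--green pair to the empty diagram (that is the content of \iv) and I may not create or cancel a $\pi$-green scalar (that is \zo). So the delicate step is to verify that every simplification invokes only \sone, \stwo, \sthree, \bt, \bo, \ko, \kt, and that the scalars spawned by the bialgebra and copy steps multiply out to exactly the prescribed factor, with no stray empty diagram or $\pi$-scalar that would force the forbidden rules.

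Finally, since this scalar-precise derivation is carried out in the simplified stabiliser axiomatisation, the economical route is to invoke \cite{simplified-stabilizer} directly: it establishes the Hopf law from the non-scalar rules together with exactly the scalar rules that survive in $ZX\setminus\{\iv,\zo\}$, which is all that is claimed here.
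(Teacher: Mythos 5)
Your proposal is correct and lands exactly where the paper does: the paper gives no in-text derivation of \hlaw{} at all, proving the lemma purely by appeal to \cite{simplified-stabilizer}, which establishes the Hopf law with exact scalars from the rules remaining in $ZX\setminus\{\iv,\zo\}$ --- precisely your concluding move. Your preliminary spider/bialgebra sketch is a plausible outline of how such a derivation goes (modulo loose details, e.g.\ the orientation needed to match the $K_{2,2}$ pattern of \bt{} and your slightly inaccurate paraphrases of what \iv{} and \zo{} assert), but it is superfluous once the citation is invoked, and the paper does not spell it out either.
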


\begin{proposition}
\iv is derivable from $ZX_E$.
\end{proposition}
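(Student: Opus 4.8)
The statement is a scalar identity. Writing $\sigma_k$ for the diagram made of a $0$-angled green node and a $0$-angled red node joined by $k$ parallel wires, one has $\interp{\sigma_1}=\sqrt2$ and $\interp{\sigma_3}=\tfrac1{\sqrt2}$, and \iv asserts $\sigma_3\otimes\sigma_1 = (\text{empty diagram})$. The plan is to derive this inside $ZX_E$, i.e.\ from \e together with the rules of $ZX\setminus\{\iv,\zo\}$ — crucially the Hopf law \hlaw, which is already available by the previous lemma. A guiding observation is that \iv preserves the invariant $\invR{\cdot}$ (both sides evaluate to $0$), whereas \e flips it; by Lemma~\ref{lem:inv} this forces any derivation to invoke \e an even number of times, and since \iv is known to be independent of the \e-free rules, the derivation must use \e exactly twice. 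I therefore expect a proof of the shape ``introduce one \e-scalar, rearrange with invariant-preserving rules, delete one \e-scalar''.

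First I would extract from \hlaw a bare scalar relation by closing its two open wires into one another (pre/post-composing with a cup and a cap). On the left the two-wire bubble picks up the connecting wire and becomes $\sigma_3$, still carrying the two side factors, while on the right the disconnected red/green pair is joined into a single wire, i.e.\ $\sigma_1$; this yields $\sigma_3\otimes\sigma_1\otimes\sigma_1 = \sigma_1$ (equivalently, the scalar $c:=\sigma_3\otimes\sigma_1$ is idempotent, $c\otimes c=c$). This uses only \hlaw, \sone and the bending of wires, so it is invariant-preserving. Note that it is exactly the reading ``$\tfrac1{\sqrt2}\cdot\sqrt2\cdot\sqrt2=\sqrt2$'': it pins $c$ down only up to a cancellation of one factor $\sigma_1$, which is precisely what the \e-free fragment cannot perform.

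The second ingredient is the same edge-growing identity for the \emph{phased} one-wire scalar of \e. Let $E_1$ be the diagram of \e (a green $\tfrac\pi4$ and a red $\tfrac{-\pi}4$ joined by one wire) and $E_3$ the same with three wires. Pulling the $\pm\tfrac\pi4$ phases onto a single leg of each node with \sone makes the connecting bridge phaseless, so the very same closed-\hlaw manipulation applies and gives $E_3\otimes\sigma_1\otimes\sigma_1 = E_1$. Combining the two relations by pure rewriting, $c\otimes E_1 = (\sigma_3\otimes\sigma_1\otimes\sigma_1)\otimes\sigma_1\otimes E_3 = \sigma_1\otimes\sigma_1\otimes E_3 = E_1$ (expanding $E_1$ by the second relation and reassociating, then applying the first relation, then the second again): the bubbles are absorbed into $E_1$ without changing it. This is invariant-preserving since $\invR{c}=0$.

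Finally \e is used twice to cash in the absorption: starting from $c$, insert an \e-scalar (reading \e backwards) to obtain $c\otimes E_1$, simplify it to $E_1$ by the absorption above, then delete it (reading \e forwards) to reach the empty diagram; hence $\sigma_3\otimes\sigma_1=c$ is the empty diagram, which is \iv. The main obstacle is the scalar bookkeeping in the two edge-growing relations — especially the phased one, where one must check that commuting the $\pm\tfrac\pi4$ phases aside genuinely leaves a phaseless two-wire bridge to which \hlaw (and the bialgebra rules \bo, \bt behind it) applies with exactly the correction $\sigma_1^{\otimes 2}$; the remaining manipulations are formal. The conceptual crux, dictated by the invariant, is that the otherwise-forbidden cancellation of $\sigma_1$ is unlocked only by inserting and then deleting an \e-scalar, i.e.\ by using \e precisely twice.
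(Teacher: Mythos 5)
Your proposal is correct and takes essentially the same route as the paper: both derivations introduce the scalar of \e by reading that rule backwards, absorb $\sigma_3\otimes\sigma_1$ into it by Hopf-law manipulations, and then erase it by \e read forwards, so that \e is used exactly twice. The only difference is bookkeeping in the absorption step: the paper does it in a single diagrammatic chain via \bo, \hlaw, \stwo{} and \sone, whereas you factor it through the two traced-\hlaw{} identities $\sigma_3\otimes\sigma_1^{\otimes 2}=\sigma_1$ and $E_3\otimes\sigma_1^{\otimes 2}=E_1$ -- an inessential variation, and your flagged check (that \sone{} splitting exposes a phaseless bubble to which \hlaw{} applies, the needed $\sigma_1^{\otimes 2}$ already being present so no circular appeal to \iv{} occurs) indeed goes through.
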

\begin{proof}
Using \e, \bo, \hlaw, \stwo and \sone:\\\vspace{-0.5em}\\
$\begin{tikzpicture}
	\begin{pgfonlayer}{nodelayer}
		\node [style=rn] (0) at (-5.75, -0) {};
		\node [style=rn] (1) at (-6.25, -0) {};
		\node [style=gn] (2) at (-6.25, -0.5000001) {};
		\node [style=gn] (3) at (-5.75, -0.5000001) {};
		\node [style=none] (4) at (-5, -0.25) {=};
		\node [style=gn] (5) at (-4.25, -0.5000001) {};
		\node [style=rn] (6) at (-3.75, -0) {};
		\node [style=rn] (7) at (-4.25, -0) {};
		\node [style=gn] (8) at (-3.75, -0.5000001) {};
		\node [style=gn] (9) at (-3, -0.6) {\,$\frac\pi 4$\,};
		\node [style=rn] (10) at (-3, 0.1) {$\frac {\text{-}\pi}4$};
		\node [style=none] (11) at (-2.25, -0.25) {=};
		\node [style=gn] (12) at (0, -0.7) {\,$\frac\pi 4$\,};
		\node [style=rn] (13) at (0, -0) {$\frac {\text{-}\pi}4$};
		\node [style=gn] (14) at (-1, -0.5) {};
		\node [style=rn] (15) at (-1, -0) {};
		\node [style=rn] (16) at (-1.5, -0.5) {};
		\node [style=gn] (17) at (-1.5, -1) {};
		\node [style=rn] (18) at (-0.5, 1) {};
		\node [style=gn] (19) at (-0.5, 0.5) {};
		\node [style=rn] (20) at (-1.5, 0.5) {};
		\node [style=gn] (21) at (-1.5, -0) {};
		\node [style=gn] (22) at (1.5, -0.5) {};
		\node [style=gn] (23) at (2, 0.5) {};
		\node [style=rn] (24) at (2.5, -0) {$\frac {\text{-}\pi}4$};
		\node [style=none] (25) at (0.75, -0.25) {=};
		\node [style=rn] (26) at (2, 1) {};
		\node [style=gn] (27) at (2.5, -0.7) {\,$\frac\pi 4$\,};
		\node [style=rn] (28) at (1.5, -0) {};
		\node [style=none] (29) at (3.25, -0.25) {=};
		\node [style=rn] (30) at (4, 0.1) {$\frac {\text{-}\pi}4$};
		\node [style=gn] (31) at (4, -0.6) {\,$\frac\pi 4$\,};
		\node [style=none] (32) at (5.749999, -0) {};
		\node [style=none] (33) at (5.249999, -0) {};
		\node [style=none] (34) at (4.75, -0.25) {$=$};
		\node [style=none] (35) at (5.749999, -0.5) {};
		\node [style=none] (36) at (5.249999, -0.5) {};
	\end{pgfonlayer}
	\begin{pgfonlayer}{edgelayer}
		\draw [bend left=45, looseness=1.00] (0) to (3);
		\draw (1) to (2);
		\draw [bend right=45, looseness=1.00] (0) to (3);
		\draw (0) to (3);
		\draw [bend left=45, looseness=1.00] (6) to (8);
		\draw (7) to (5);
		\draw [bend right=45, looseness=1.00] (6) to (8);
		\draw (6) to (8);
		\draw (10) to (9);
		\draw [bend left=45, looseness=1.00] (15) to (14);
		\draw (16) to (17);
		\draw [bend right=45, looseness=1.00] (15) to (14);
		\draw (15) to (14);
		\draw (13) to (12);
		\draw (18) to (19);
		\draw (19) to (15);
		\draw (19) to (13);
		\draw (20) to (21);
		\draw (28) to (22);
		\draw (24) to (27);
		\draw (26) to (23);
		\draw (23) to (28);
		\draw (23) to (24);
		\draw (30) to (31);
		\draw [color=gray, dashed] (33.center) to (36.center);
		\draw [color=gray, dashed] (36.center) to (35.center);
		\draw [color=gray, dashed] (35.center) to (32.center);
		\draw [color=gray, dashed] (32.center) to (33.center);
	\end{pgfonlayer}
\end{tikzpicture}$
\end{proof}

\begin{proposition}
\label{prop:zo-deducible}
\zo is derivable from $ZX_E$.
\end{proposition}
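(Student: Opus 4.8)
The plan is to derive \zo inside $ZX_E=\{\e\}\cup ZX\setminus\{\iv,\zo\}$, using that the previous proposition has already made \iv available. The first observation is that the green $\pi$ scalar denotes $1+e^{i\pi}=0$, so \emph{both} sides of \zo represent the zero map $\mathbb C^2\to\mathbb C^2$; the statement therefore carries no semantic content and is purely a matter of exhibiting a syntactic rewrite — one that uses neither \zo itself nor any illegitimate cancellation of the zero-valued scalar. The guiding idea is to transform the right-hand side, namely the green $\pi$ scalar together with the broken wire (a green $0$-state on the output and a red $0$-effect on the input), into the left-hand side, the same scalar together with a plain wire, by interconverting the broken wire with a connected green–red structure and then collapsing that structure to the identity, with the zero scalar being precisely what licenses the accompanying change of rank.

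Concretely I would proceed as follows. Using \stwo, rewrite the plain wire on the left as a degree-two green (equivalently red) $0$-dot, giving a node to act on. Using the colour-swapped Hopf law \hlaw — derivable from $ZX\setminus\{\iv,\zo\}$, hence available in $ZX_E$ — rewrite the broken wire on the right as a green dot and a red dot joined by two parallel edges (carrying the external output and input), together with the scalar ``dumbbells'' that \hlaw produces. It then suffices to show that this two-edge bubble, once the zero scalar is taken into account, reduces to a single wire modulo non-zero scalars. To do this one wants to bring the $\pi$ of the inert scalar to bear on the bubble: I would rewrite the green $\pi$ scalar into a connected form (a green $\pi$ dot joined to a red $0$-dot by two edges, which one checks again denotes $0$), then use the bialgebra rules \bo and \bt to absorb this scalar structure into the spider carrying the external legs, after which the $\pi$ can be copied through the adjacent red node by \ko and fused by \sone; the $\pi$ forces one of the two parallel edges to disconnect into exactly the green-state/red-effect pair while the other collapses to the identity by \stwo. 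Finally I would collect the auxiliary non-zero scalars — the \hlaw dumbbells and the factors coming through the now-available \iv and through the unit value of \e — and cancel them with \sone and \iv.

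The main obstacle is the change of connectivity together with the attendant scalar bookkeeping. Since a plain wire and a broken wire are not equal up to \emph{any} scalar (ranks two and one), the $\pi$ of the zero scalar must genuinely be made to interact with the wire, i.e.~it must at some intermediate stage live in the same connected component as the wire; and no single rule either cuts a wire or merges a disjoint scalar into a diagram. The only available bridge is to run the bialgebra rules \bo, \bt so as to \emph{absorb} a scalar into a spider, and realising the precise pattern match that achieves this — while keeping every non-zero factor explicitly accounted for, because the zero scalar can never be divided out — is where the real difficulty lies. I expect the successful derivation to mirror, up to the extra $\pi$ and the use of the copy rule \ko, the short \e/\bo/\hlaw/\stwo/\sone argument already used for \iv, so that the step of verifying that the scalars balance reduces to the same toolkit once the bridging move is set up correctly.
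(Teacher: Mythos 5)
You have the right global architecture --- both sides of \zo denote $0$, so the task is purely syntactic; \hlaw is the device that exchanges a broken wire for a doubly-connected red--green structure plus dumbbell scalars; and the crux is indeed to make the disconnected zero scalar interact with the wire --- but the mechanism you propose for that crux does not exist in the calculus. Rewriting the green $\pi$ scalar as a green $\pi$ dot doubly joined to a red $0$-dot is idle: by \stwo the degree-two red dot is a plain wire, so your ``connected form'' is just the $\pi$ scalar with a self-loop, removable by \sone, and it remains its own connected component. The step ``use \bo and \bt to absorb this scalar structure into the spider carrying the external legs'' has no valid instantiation: the only disconnected subdiagram that \bo or \bt can create or consume is the $0$-phase dumbbell written explicitly into their statements, so no application of bialgebra connects your $\pi$-bubble to anything. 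Likewise ``the $\pi$ forces one of the two parallel edges to disconnect'' matches no rule: the Hopf law disconnects a doubled edge into the fixed $0$-phase broken pair irrespective of any nearby $\pi$, while \ko and \sone move and merge phases without ever changing connectivity.

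The missing ingredient is supplementarity, which you never invoke and which is the engine of the paper's derivation: \supp instantiated at $\alpha=0$ identifies the pattern ``green $0$ and green $\pi$ attached to a common red dot'' with ``green $\pi$ attached by a doubled edge'', and it is this rule, combined with \stwo (inserting a red dot in the wire), \bo, \sone and \hlaw, that lets the $\pi$ of the zero scalar migrate onto the wire and cut it --- the paper later remarks that \suppt applied with $\alpha=0$ ``is the first step towards proving the rule (ZO) in the \frag{4}''. The paper then structures the proof as three absorption lemmas before assembling \zo: (i) $\pi\otimes(\text{red }0\text{-state on a wire})=\pi\otimes(\text{green }0\text{-state})\otimes\text{dumbbell}$; (ii) the $\pi$ scalar absorbs every green-$\alpha$ scalar (via \iv and \sone); (iii) the $\pi$ scalar absorbs a single dumbbell --- the one place \e is used. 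Your final claim that the leftover scalars can be ``cancelled with \sone and \iv'' fails exactly at (iii): \iv only deletes the value-$1$ pair consisting of the triple-edge red--green bubble tensored with one dumbbell, so an odd power of $\sqrt2$ can never be eliminated by \iv; a lone dumbbell must be \emph{swallowed by the zero scalar} via \e, and provably so, since every rule of $ZX_E$ other than \e preserves $\invR{\cdot}$, while the $\pi$ scalar with a dumbbell and the $\pi$ scalar alone have different values of $\invR{\cdot}$. So while your opening analysis of the difficulty is sound, the two load-bearing steps of your plan (the bialgebra ``absorption'' and the scalar cleanup) are unsupported, and the actual proof runs through \supp at $\alpha=0$ and an \e-mediated absorption that your toolkit cannot replicate.
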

\begin{proof}
In appendix at page \pageref{prf:zo-deducible}.
\end{proof}
Hence $ZX_E \vdash \iv,\zo$.

\begin{remark}
The other rules of the language remain, a priori, necessary in the presence of equation \e. In particular the supplementarity which has been recently proved to be necessary  in ZX \cite{supplementarity} is necessary in $ZX_E$: one can prove using the interpretation $\interp{.}^{\sharp}_{k,l}$ -- defined in \cite{supplementarity} --  with $k=3$ and $l=8$ that $ZX_E\setminus\{\supp\}\nvdash \supp$. 
\end{remark}

\begin{remark}One may want to generalise equation \e, replacing the particular angles $\pm\frac\pi4$ by some generic angle $\alpha$. Proposition \ref{prop:invfrag} is a strong evidence that such a generalisation is not possible and that the language requires at least one rule which is specific to the  $\frac \pi4$ angle. 
\end{remark}

\section{Cyclotomic Supplementarity}
\label{sec:generalised-supplementarity}

\subsection{Generalisation of (SUP)}
The concept of supplementarity in quantum diagram reasoning has been first introduced by Coecke and Edwards \cite{w-in-zx}, turned into a simple but necessary rule (SUP in Figure \ref{fig:ZX_rules}) in \cite{supplementarity}. Roughly speaking, supplementarity consists in merging two dots sharing the same neighbour when the difference of their angles is $\pi$, i.e.~when the two angles are antipodal. We generalize this concept to cyclotomic supplementarity as follows: for any $n\in \mathbb N^*$, $n$ dots sharing the same neighbour can be merged when their angles divide the circle into equal parts (cyclotomy), i.e.~when their angles are of the form $\alpha+\frac{2k\pi}{n}\text{ for }k\in\interp{0;n-1}$:
\hypertarget{r:SUP_n}{}\[\begin{tikzpicture}
	\begin{pgfonlayer}{nodelayer}
		\node [style=rn] (0) at (-1.75, -0) {};
		\node [style=none] (1) at (0.5, -0) {=};
		\node [style=none] (2) at (-1.75, -0.75) {};
		\node [style=gn] (3) at (-2, 1) {\scriptsize $\alpha{+}\frac{2\pi}{n}$};
		\node [style=gn, align=center] (4) at (-0.25, 1) {\scriptsize $\alpha{+}$\\ \scriptsize $\frac{n-1}{n}2\pi$};
		\node [style=gn] (5) at (-3, 1) {$~\alpha~$};
		\node [style=none] (6) at (-1.25, 0.5) {$\cdots$};
		\node [style=none] (7) at (1.75, -0.75) {};
		\node [style=rn] (8) at (1.75, -0) {};
		\node [style=gn, align=center] (9) at (1.75, 1.25) {\scriptsize $n\alpha$+\\\scriptsize $(n{-}1)\pi$};
		\node [style=none] (10) at (1.875, 0.5) {$\cdots$};
	\end{pgfonlayer}
	\begin{pgfonlayer}{edgelayer}
		\draw [bend right, looseness=1.00] (5) to (0);
		\draw [bend left=15, looseness=0.75] (0) to (3);
		\draw [bend left, looseness=1.00] (4) to (0);
		\draw (0) to (2.center);
		\draw [bend right=75, looseness=1.25] (9) to (8);
		\draw (8) to (7.center);
		\draw [bend right, looseness=1.00] (9) to (8);
		\draw [bend right=75, looseness=1.25] (8) to (9);
	\end{pgfonlayer}
\end{tikzpicture}\qquad(\text{SUP}_n)\]
Notice that there are $n$ green dots in the left diagram, and $n$ parallel wires in the right diagram.

Any of these equations is valid for the standard interpretation of ZX-diagrams:  
\begin{proposition}
\label{prop:suppn-sound}
\suppn is sound.
\end{proposition}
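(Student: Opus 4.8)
The plan is to compute the standard interpretation of both diagrams directly and check that the two vectors coincide. Both diagrams are states $0\to 1$, i.e.\ vectors of $\CCC^2$, and the right basis for the computation is the $X$-basis $\{\,|{+}\rangle,|{-}\rangle\,\}$, in which the central red $0$-spider becomes transparent. First I would record the two spider facts I need. For a green state, $\interp{R_Z^{(0,1)}(\beta)} = |0\rangle + e^{i\beta}|1\rangle = \tfrac1{\sqrt2}\big((1+e^{i\beta})|{+}\rangle + (1-e^{i\beta})|{-}\rangle\big)$; and for the red spider with $n$ legs and angle $0$, using $R_X = H\circ R_Z^{(n,1)}(0)\circ H^{\otimes n}$ together with $H|0\rangle=|{+}\rangle$, $H|1\rangle=|{-}\rangle$, one gets $\interp{R_X^{(n,1)}(0)} = |{+}\rangle\langle{+}|^{\otimes n} + |{-}\rangle\langle{-}|^{\otimes n}$. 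The second fact says precisely that the red spider multiplies together the $|{+}\rangle$-components and, separately, the $|{-}\rangle$-components of its inputs.

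Applying this to the left-hand side, with $\beta_k = \alpha + \tfrac{2k\pi}{n}$, yields
\[ \interp{\textnormal{LHS}} = \tfrac1{\sqrt2^{\,n}}\Big(\textstyle\prod_{k=0}^{n-1}(1+e^{i\beta_k})\Big)|{+}\rangle + \tfrac1{\sqrt2^{\,n}}\Big(\textstyle\prod_{k=0}^{n-1}(1-e^{i\beta_k})\Big)|{-}\rangle. \]
For the right-hand side, the green $n$-output dot is the state $|0\rangle^{\otimes n} + e^{i\theta}|1\rangle^{\otimes n}$ with $\theta = n\alpha+(n-1)\pi$, and applying the red spider gives
\[ \interp{\textnormal{RHS}} = \tfrac1{\sqrt2^{\,n}}(1+e^{i\theta})|{+}\rangle + \tfrac1{\sqrt2^{\,n}}\big(1+(-1)^n e^{i\theta}\big)|{-}\rangle, \]
where the $|{-}\rangle$-coefficient picks up the sign $(-1)^n$ from $\langle{-}|^{\otimes n}|1\rangle^{\otimes n}=(-1)^n/\sqrt2^{\,n}$. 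Note that both sides carry the same global factor $\tfrac1{\sqrt2^{\,n}}$, so the equality will hold exactly as scalars, not merely up to normalisation.

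The heart of the argument — and the reason the rule is called \emph{cyclotomic} — is the evaluation of the two products. Writing $\omega = e^{2i\pi/n}$, the numbers $e^{i\alpha}\omega^k$ for $0\le k<n$ are exactly the $n$ complex $n$-th roots of $e^{in\alpha}$, whence the factorisation $\prod_{k=0}^{n-1}(x - e^{i\alpha}\omega^k) = x^n - e^{in\alpha}$. Evaluating this identity at $x=-1$ and $x=1$ gives, respectively,
\[ \textstyle\prod_{k=0}^{n-1}(1+e^{i\beta_k}) = 1 - (-1)^n e^{in\alpha} \qquad\text{and}\qquad \textstyle\prod_{k=0}^{n-1}(1-e^{i\beta_k}) = 1 - e^{in\alpha}. \]
Finally I would substitute $e^{i\theta} = e^{i(n-1)\pi}e^{in\alpha} = -(-1)^n e^{in\alpha}$ into the RHS coefficients: the $|{+}\rangle$-coefficient $1+e^{i\theta}$ becomes $1-(-1)^n e^{in\alpha}$ and the $|{-}\rangle$-coefficient $1+(-1)^n e^{i\theta}$ becomes $1-(-1)^{2n}e^{in\alpha}=1-e^{in\alpha}$, matching the LHS coefficients term by term. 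Hence the two interpretations coincide and \suppn is sound.

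The only nontrivial step is the root-of-unity factorisation used to collapse the two $n$-fold products, and it is exactly here that the hypothesis that the $n$ angles divide the circle into equal parts is consumed; everything else is routine bookkeeping with spider normal forms and the $X$-basis. I therefore expect the cyclotomic product identity to be the single point requiring care, while the rest is a direct verification.
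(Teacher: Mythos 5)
Your proof is correct and coincides in substance with the paper's: the paper's Lemma~\ref{lem:soundness-equivalence} is exactly your $X$-basis decomposition (its basis states are $\sqrt{2}\,|{+}\rangle$ and $\sqrt{2}\,|{-}\rangle$, pushed through the central red dot via \ko and \bo), and its scalar step is the same root-of-unity factorisation $\prod_{k=0}^{n-1}\left(x-e^{i\alpha}\omega^{k}\right)=x^{n}-e^{in\alpha}$. The only organisational difference is that you inline the reduction as a direct computation of both coefficients, whereas the paper verifies the single scalar family $\prod_{k=0}^{n-1}\left(1+e^{i(\alpha+2k\pi/n)}\right)=1+e^{i(n\alpha+(n-1)\pi)}$ quantified over all $\alpha$; your two evaluations at $x=-1$ and $x=1$ (the $|{+}\rangle$- and $|{-}\rangle$-coefficients) are precisely its instances at $\alpha$ and $\alpha+\pi$.
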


\begin{proof}
In appendix at page \pageref{prf:suppn-sound}.
\end{proof}

Cyclotomic supplementarity has a generalisation: the green dots can be merged not only when they share the same neighbour, but also when they share the same neighbourhood. It leads to the notion of cyclotomic twins, which generalise the notion of antiphase twins [13]:

\begin{definition}[Cyclotomic Twins]
$n$ dots in a ZX-diagram are cyclotomic twins if:
\begin{itemize}
\item they have the same colour
\item their angles divide the circle into equal parts $\left(\alpha+\frac{2k\pi}{n}\text{ for }k\in\interp{0;n-1}\right)$
\item they have the same neighbourhood: for any vertex, the number of wires connecting it to any of the twins is the same
\end{itemize}
\end{definition}

\begin{proposition}[Cyclotomic Twins and Supplementarity]
\label{prop:cyclotomic-neighbourhood}
With $ZX\cup\{\suppn\}_{n\in\mathbb{N}}$, cyclotomic twins can be merged.
\end{proposition}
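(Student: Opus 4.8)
The plan is to reduce the general ``same neighbourhood'' situation to the ``same single neighbour'' situation, which is exactly the hypothesis of \suppn; this mirrors the way antiphase twins are merged in \cite{supplementarity}, but with \supp replaced by \suppn. Throughout I freely use the ``only topology matters'' paradigm to bend wires, spider fusion \sone to split and merge dots of the same colour, the Hopf law \hlaw to cancel parallel wires between dots of opposite colours, the bialgebra rules \bo and \bt to contract complete bipartite green/red patterns, and the Hadamard rule \h to change the colour of a neighbour when needed.

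First I would normalise the twins. Let $G_0,\dots,G_{n-1}$ be the cyclotomic twins, say green, with $G_k$ of angle $\alpha+\frac{2k\pi}{n}$ and common neighbourhood $N$. Using \sone in reverse, I detach the angle of each twin onto a pendant green dot of degree $1$: each $G_k$ becomes a green dot $A_k$ of angle $\alpha+\frac{2k\pi}{n}$ and degree $1$, joined by a single wire to a green dot $B_k$ of angle $0$ that now carries all the wires of $G_k$ to $N$. Since the twins have the \emph{same} neighbourhood, the dots $B_0,\dots,B_{n-1}$ are pairwise identical as far as their connection to $N$ is concerned.

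The key step is then to funnel this common neighbourhood through a single auxiliary red dot $R$ shared by all twins, so that each $A_k$ is attached to $R$ by exactly one wire and $R$ alone carries the connection to $N$. This is where the hypothesis is used: because every twin connects to each vertex of $N$ with the same multiplicity, the connections to $N$ form a pattern invariant under permuting the twins, and can be contracted using \bo, \bt and \hlaw (a neighbour having the same colour as the twins is first recoloured with \h, and parallel wires to opposite-colour neighbours are reduced modulo $2$ with \hlaw). After this rerouting the diagram exhibits precisely the left-hand side of \suppn: $n$ green leaves $A_0,\dots,A_{n-1}$, with angles dividing the circle into equal parts, all sharing the single red neighbour $R$.

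Finally I apply \suppn, which merges the $n$ leaves into a single green dot of angle $n\alpha+(n-1)\pi$ joined to $R$ by $n$ parallel wires, and then I reverse the rerouting of the previous step to reattach $R$ to the original neighbourhood $N$, obtaining the merged dot. The main obstacle is the funnelling step: one must show that an \emph{arbitrary} shared neighbourhood --- several vertices, arbitrary edge multiplicities, and both colours --- can be routed through one common red dot by sound manipulations, and, crucially, that this routing is invertible so that it can be undone after \suppn has been applied. Getting the bookkeeping of multiplicities right (via \hlaw) and the colour corrections right (via \h) is the delicate part; once the single-neighbour form is reached, the merge itself is immediate from \suppn.
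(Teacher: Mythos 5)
Your overall strategy --- reduce the shared-neighbourhood situation to the single-red-neighbour pattern of \suppn, apply it, then undo the reduction --- is the same as the paper's, but the step you yourself flag as ``the main obstacle'' is exactly where the proof content lives, and you leave it unproven. The missing idea is the \emph{generalised bialgebra law} of \cite{euler-decomp}: a complete bipartite graph $K_{n,m}$ between $n$ green dots and $m$ red dots equals a single green spider joined by one wire to a single red spider, up to an explicit scalar correction of the form $(\cdot)^{\otimes(n-1)(m-1)}$. With this lemma in hand, once \sone and \hlaw have normalised the picture to ``$n$ twins, each joined by exactly one wire to each of $m$ red neighbours'', the funnelling is one application of the law, \suppn is applied to the resulting single red dot, and the law is applied in reverse. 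Your worry about ``invertibility'' is moot --- these are equalities, hence usable in both directions; what genuinely needs care is the scalar bookkeeping, which the generalised bialgebra law makes explicit. Merely asserting that \bo, \bt and \hlaw suffice, without stating or deriving this law, leaves the crux of the proof unjustified.

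Two further points would make your argument fail as written. First, your treatment of a green neighbour by recolouring it with \h does not work: \h turns a green dot into a red one at the price of Hadamard boxes on \emph{all} of its wires, including those leading to the twins, so the left-hand side of \suppn is never formed. The paper's observation is that this case is trivial for the opposite reason: if a green dot is adjacent to all the twins (or if the twins are adjacent to one another --- a case you omit entirely; note that the twin property forces them to form an independent set or a clique), then \sone alone fuses everything, and since $\sum_{k=0}^{n-1}\left(\alpha+\frac{2k\pi}{n}\right)=n\alpha+(n-1)\pi \pmod{2\pi}$ the fusion already yields the merged angle, with no use of \suppn at all. Second, the degenerate case $m=0$ (twins with no neighbours, i.e.\ scalars) is unreachable by your funnelling, since there is no sound way to grow a common red neighbour on $n$ disconnected degree-zero dots; the paper handles it separately via the scalar equality (\ref{eq:supp-no-neighbours}), obtained from \suppn by plugging basis states using \ko and \bo as in lemma \ref{lem:soundness-equivalence}.
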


\begin{proof}
In appendix at page \pageref{prf:cyclotomic-neighbourhood}.
\end{proof}

The rest of the section is dedicated to the structures of this family of equations: we show that \suppc n is necessary when $n$ is an odd prime number and that \suppc n can be derived when $n$ is not prime. As a consequence, we exhibit a countable family of equations that cannot be derived in the ZX-calculus.

\subsection{The Set of Supplementarity Rules for Prime Numbers}
It is not necessary to define the supplementarity rules for all numbers $n\in\mathbb{N}$ as axioms. For instance, we will prove that their restriction to the set of prime numbers is enough to show all the others.

Let $\mathbb{P}$ be the set of prime numbers.

\begin{theorem}
\label{th:supp-n-for-primes}~~
\begin{itemize}
\item $\forall p,q\in\mathbb{N}^*,~~ ZX_E\cup\{\suppc{p},\suppc{q}\}\vdash \suppc{pq}$
\item $\forall n\in \mathbb{N}^*,~~ ZX_E\cup\{\suppc{p}\}_{p\in \mathbb{P}}\vdash \suppn$
\item $\forall p \in \mathbb{P},p\geq 3,~~ ZX_E\cup\{\suppc{q}\}_{q\in\mathbb{P}\setminus\{p\}}\nvdash \suppc{p}$
\end{itemize}
\end{theorem}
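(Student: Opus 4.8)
The three statements require different techniques, so I would treat them one at a time.

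\medskip\noindent\textbf{Multiplicativity (first bullet).} The plan is to exploit the factorisation of a cyclotomic family of order $pq$ into $p$ nested families of order $q$. Starting from the left-hand side of \suppc{pq} — a central red dot carrying one pending wire together with $pq$ green dots of angles $\alpha+\frac{2k\pi}{pq}$ each attached by a single edge — I reindex $k=j+pm$ with $j\in\interp{0;p-1}$ and $m\in\interp{0;q-1}$, so that the $m$-th dot of block $j$ has angle $\beta_j+\frac{2m\pi}{q}$ with $\beta_j=\alpha+\frac{2j\pi}{pq}$. Each block is then a cyclotomic family of order $q$ sharing the red neighbour, so I apply \suppc{q} to every block. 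This produces $p$ green dots of angles $q\beta_j+(q-1)\pi=\bigl(q\alpha+(q-1)\pi\bigr)+\frac{2j\pi}{p}$, which again form a cyclotomic family, now of order $p$. After reducing the multiple edges created by the right-hand sides of \suppc{q} back to single-edge form (via \sone and the Hopf law \hlaw, both available in $ZX_E$) I apply \suppc{p} and obtain a single green dot of angle $p\bigl(q\alpha+(q-1)\pi\bigr)+(p-1)\pi=pq\alpha+(pq-1)\pi$, exactly the right-hand side of \suppc{pq}. The only delicate point is the bookkeeping of edge multiplicities and scalars between the ``single edge'' and ``$n$ parallel edges'' presentations; this is routine, is precisely why the statement is over $ZX_E$, and can be packaged by invoking the order-$p$ instance of the cyclotomic-twins merging of Proposition~\ref{prop:cyclotomic-neighbourhood}, whose argument for a fixed order uses only $ZX_E$ and \suppc{p}.

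\medskip\noindent\textbf{Primes suffice (second bullet).} This is a direct induction on $\Omega(n)$, the number of prime factors of $n$ counted with multiplicity, using the first bullet as the inductive engine. For $n=1$, \suppc{1} is vacuous (it merges a single dot, an instance of \sone/\stwo); for $\Omega(n)=1$, i.e.\ $n$ prime, \suppn is one of the assumed axioms. For the inductive step write $n=p\cdot m$ with $p$ prime and $\Omega(m)=\Omega(n)-1$; by the induction hypothesis $ZX_E\cup\{\suppc{r}\}_{r\in\mathbb P}\vdash\suppc{m}$, and \suppc{p} is an axiom, so the first bullet gives $ZX_E\cup\{\suppc{p},\suppc{m}\}\vdash\suppc{pm}=\suppn$, closing the induction.

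\medskip\noindent\textbf{Necessity for odd primes (third bullet) — the main obstacle.} Here I need a soundness-style separation: an interpretation validating every rule of $ZX_E$ and every \suppc{q} with $q\neq p$, yet refuting \suppc{p}. Generalising the $\interp{.}^{\sharp}_{k,l}$ idea of \cite{supplementarity}, I deform only the ``phase character''. Using the primary decomposition $\QQ/\mathbb Z\cong\bigoplus_\ell \mathbb Z(\ell^\infty)$, I fix a group homomorphism $g:\mathbb R\to\CCC^\times$ (defined on rational angles through a character of $\QQ/\mathbb Z$, and either extended arbitrarily or simply restricted to the finite fragment in which a hypothetical derivation lives) that agrees with $\theta\mapsto e^{i\theta}$ on every $\ell$-primary component with $\ell\neq p$ — in particular on all dyadic angles, so $g(\pi)=-1$, $g(\tfrac\pi2)=i$ and $g(\tfrac\pi4)=e^{i\pi/4}$ — but is trivial on the $p$-primary component, so $g(\tfrac{2\pi}{p})=1$. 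I then define the monoidal functor $\Phi$ exactly as the standard interpretation except that each green-spider entry $e^{i\alpha}$ is replaced by $g(\alpha)$ (red spiders and $H$ being the $H$-conjugates of green spiders, as in the paper).

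The key observation is that the validity of each rule of $ZX_E$ depends on the phases only through the facts that they multiply (homomorphism) and that they take the standard values at $\pi,\tfrac\pi2,\tfrac\pi4$; since none of these touches the $p$-primary component, every structural rule of $ZX_E$, together with \e (hence \iv and \zo), remains sound under $\Phi$. Moreover, soundness of \suppc{q} is governed by the polynomial identity $\prod_{k=0}^{q-1}\bigl(1+X\,g(\tfrac{2\pi}{q})^{k}\bigr)=1+X^{q}$ in $X=g(\alpha)$, which holds for all $X$ precisely because $g(\tfrac{2\pi}{q})=e^{2\pi i/q}$ is a primitive $q$-th root of unity — true for every $q\neq p$. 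For \suppc{p}, however, $g(\tfrac{2\pi}{p})=1$, so the product degenerates to $(1+X)^p$ while the right-hand side still contributes $1+X^p$; computing both sides of \suppc{p} under $\Phi$ (feeding the green states through the central red spider in the $\pm$ basis) gives output vectors proportional to $\bigl((1+g(\alpha))^p,\ (1-g(\alpha))^p\bigr)$ and to $\bigl(1+g(\alpha)^p,\ 1-g(\alpha)^p\bigr)$ respectively. These differ for a suitable angle $\alpha$ in the fragment, since $(1+X)^p\neq 1+X^p$ for $p\geq 2$, so $\Phi$ refutes \suppc{p} and hence $ZX_E\cup\{\suppc{q}\}_{q\in\mathbb P\setminus\{p\}}\nvdash\suppc{p}$. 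The genuine work — and the main obstacle — is the rule-by-rule verification that $\Phi$ is sound for all of $ZX_E$ and for every \suppc{q} with $q\neq p$; conceptually everything rests on isolating the $p$-primary part of the angle group and on the cyclotomic identity $\prod_k(1+X\omega_q^k)=1+X^q$ failing exactly when the $q$-th roots of unity are collapsed, which is what setting $g(\tfrac{2\pi}{p})=1$ achieves.
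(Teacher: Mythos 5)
Your first two bullets coincide with the paper's own proof. The paper performs the same reindexing of the $pq$ angles into $p$ cyclotomic blocks of order $q$, applies \suppc{q} $p$ times, and then merges the $p$ resulting dots --- of angles $(q\alpha+(q-1)\pi)+\frac{2j\pi}{p}$, each attached to the red dot by $q$ parallel wires --- by invoking Proposition \ref{prop:cyclotomic-neighbourhood}, with the same arithmetic $p(q\alpha+(q-1)\pi)+(p-1)\pi=pq\alpha+(pq-1)\pi$; your second bullet is the paper's one-sentence observation (``\suppc{1} is trivial, iterate the first statement'') made into an explicit induction on $\Omega(n)$. One slip in your main line for the first bullet: reducing the $q$-fold edges ``back to single-edge form'' via \hlaw is impossible when $q$ is even (parallel wires cancel in pairs, leaving none), and after a bare application of \suppc{p} you would hold $p$ wires where the right-hand side of \suppc{pq} demands $pq$. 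Your fallback --- deferring to Proposition \ref{prop:cyclotomic-neighbourhood}, whose proof treats edge multiplicities modulo $2$ internally and covers the disconnected case through equation (\ref{eq:supp-no-neighbours}) --- is exactly the paper's step and absorbs both issues, so this is a presentational blemish rather than a gap.

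The third bullet is where you genuinely diverge, and your argument is correct. The paper uses the interpretation $\interp{.}_{p^2}$ that multiplies every angle by $p^2$: soundness of the core rules is imported from $\interp{.}^{\sharp}_{1,p^2-1}$ of \cite{supplementarity}, \e is checked via $p^2\equiv 1\bmod 8$, each \suppc{q} with $q\in\mathbb{P}\setminus\{p\}$ is carried to another instance of itself because $\gcd(p^2,q)=1$, and \suppc{p} is refuted at $\alpha=0$ because $p^2\cdot\frac{2\pi}{p}\equiv 0$. Your character $g$ --- trivial on the whole $p$-primary component of the rational angles and standard elsewhere --- is a more abstract deformation of exactly the same kind: $\alpha\mapsto p^2\alpha$ is itself a character enjoying the three properties your proof isolates (identity on dyadic angles, $g(\tfrac{2\pi}{q})$ a primitive $q$-th root, $g(\tfrac{2\pi}{p})=1$), and your failure $(1+X)^p\neq 1+X^p$ at $X=g(0)=1$ is precisely the paper's explicit $\alpha=0$ computation. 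What your route buys is conceptual clarity --- it shows those three properties are all that the separation needs, so any character collapsing the $p$-th roots while preserving the others works; what the paper's route buys is concreteness --- it is defined on all of $\mathbb{R}$ with no appeal to divisible-group extension or to restricting to the finitely many angles of a hypothetical derivation (both of which you leave sketched, though both do go through: $U(1)$ is divisible, and a finitely generated subgroup of the circle splits as finite cyclic torsion plus free part, where $g$ is defined by CRT on the torsion), and it lets soundness be cited rather than re-verified rule by rule, the verification you candidly defer as ``the genuine work''. One correction there: your displayed identity should read $\prod_{k=0}^{q-1}\bigl(1+X\zeta^k\bigr)=1+(-1)^{q-1}X^{q}$; the sign matters for $q=2\in\mathbb{P}\setminus\{p\}$, and it is compensated exactly by the phase $(q-1)\pi$ on the right-hand side of \suppc{q}, so your soundness conclusion for all $q\neq p$ stands as stated.
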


\begin{proof}
\phantomsection\label{prf:supp-n-for-primes}~~\\
\textbf{First statement:} If $n$ is not prime, its supplementarity can be derived. Indeed, suppose $n$ can be decomposed in two numbers $p$ and $q$ ($n=pq$), for which we know the supplementarity rule.
\[\fit{\begin{tikzpicture}
	\begin{pgfonlayer}{nodelayer}
		\node [style=rn] (0) at (-6.25, -0.25) {};
		\node [style=none] (1) at (-4, -0.25) {=};
		\node [style=none] (2) at (-6.25, -1) {};
		\node [style=gn] (3) at (-7.25, 1.25) {\tiny $\alpha{+}\frac{2\pi}{pq}$};
		\node [style=gn, align=center] (4) at (-6.25, 0.7500001) {\tiny $\alpha{+}$\\ \tiny $\frac{p-1}{pq}2\pi$};
		\node [style=gn] (5) at (-8, 0.7500001) {$~\alpha~$};
		\node [style=none] (6) at (-6.75, 0.25) {$\cdots$};
		\node [style=none] (7) at (-5.5, 0.25) {$\cdots$};
		\node [style=gn] (8) at (-5.5, 1.5) {\tiny $\alpha{+}\frac{2\pi}{q}$};
		\node [style=gn, align=center] (9) at (-4.5, 0.7500001) {\tiny $\alpha{+}$\\ \tiny $\frac{pq-1}{pq}2\pi$};
		\node [style=gn] (10) at (-3, 1.25) {\tiny $\alpha{+}\frac{2\pi}{q}$};
		\node [style=none] (11) at (-0.5000001, -0) {$\cdots$};
		\node [style=gn] (12) at (-3.5, 0.7500001) {$~\alpha~$};
		\node [style=gn, align=center] (13) at (-2, 0.7500001) {\tiny $\alpha{+}$\\ \tiny $\frac{q-1}{q}2\pi$};
		\node [style=none] (14) at (-1.75, -0) {$\cdots$};
		\node [style=gn, align=center] (15) at (1.25, 0.2500001) {\tiny $\alpha{+}$\\ \tiny $\frac{pq-1}{pq}2\pi$};
		\node [style=rn] (16) at (-1, -0.25) {};
		\node [style=none] (17) at (-1, -1) {};
		\node [style=gn] (18) at (-1.25, 1.5) {\tiny $\alpha{+}\frac{2\pi}{pq}$};
		\node [style=gn, align=center] (19) at (-0.5000002, 0.7499999) {\tiny $\alpha{+}$\\ \tiny $\frac{2\pi}{pq}{+}\frac{2\pi}{q}$};
		\node [style=gn, align=center] (20) at (0.5000002, 1.5) {\tiny $\alpha+\frac{2\pi}{pq}$\\ \tiny $+\frac{q-1}{q}2\pi$};
		\node [style=none] (21) at (0.2499997, -0) {$\cdots$};
		\node [style=none] (22) at (3.5, -1) {};
		\node [style=gn, align=center] (23) at (2.5, 0.2500001) {\tiny $q\alpha{+}$ \\ \tiny $(q{-}1)\pi$};
		\node [style=gn, align=center] (24) at (3, 1.5) {\tiny $q\alpha{+}\frac{2\pi}{p}$ \\ \tiny ${+}(q{-}1)\pi$};
		\node [style=rn] (25) at (3.5, -0.25) {};
		\node [style=none] (26) at (3.9, 1.5) {$\cdots$};
		\node [style=gn, align=center] (27) at (4.5, 0.7500001) {\tiny $q\alpha{+}\frac{p{-}1}{p}2\pi$ \\ \tiny ${+}(q{-}1)\pi$};
		\node [style=none] (28) at (2, -0.25) {=};
		\node [style=none] (29) at (6.25, -1) {};
		\node [style=rn] (30) at (6.25, -0.25) {};
		\node [style=none] (31) at (5, -0.25) {=};
		\node [style=gn, align=center] (32) at (6.25, 1.25) {\tiny $pq\alpha{+}$ \\ \tiny $(pq{-}1)\pi$};
	\end{pgfonlayer}
	\begin{pgfonlayer}{edgelayer}
		\draw [bend right, looseness=1.00] (5) to (0);
		\draw [bend left, looseness=1.00] (0) to (3);
		\draw [bend right=15, looseness=1.00] (4) to (0);
		\draw (0) to (2.center);
		\draw [bend left=15, looseness=1.00] (8) to (0);
		\draw [bend right, looseness=1.00] (0) to (9);
		\draw [bend right, looseness=1.00] (12) to (16);
		\draw [bend left=15, looseness=1.00] (16) to (13);
		\draw [bend right=15, looseness=1.00] (18) to (16);
		\draw (16) to (17.center);
		\draw [bend right, looseness=1.00] (10) to (16);
		\draw [bend right, looseness=1.00] (16) to (15);
		\draw [bend left=15, looseness=1.00] (19) to (16);
		\draw [bend right, looseness=1.00] (16) to (20);
		\draw (25) to (22.center);
		\draw (30) to (29.center);
		\path (23) edge[style=tickedge] node[below] {$q$} (25);
		\path (24) edge[style=tickedge] node[right] {$q$} (25);
		\path (27) edge[style=tickedge] node[below] {$q$} (25);
		\path (32) edge[style=tickedge] node[left] {$pq$} (30);
	\end{pgfonlayer}
\end{tikzpicture}}\]
with $p$-ticked edge representing $p$ parallel wires. The first equality is just a rearranging of the branches, the second uses \suppc{q} $p$ times and the last one exploits Proposition \ref{prop:cyclotomic-neighbourhood} with $p(q\alpha+(q-1)\pi)+(p-1)\pi = pq\alpha+(pq-1)\pi$.

\noindent\textbf{Second Statement:}
As a direct consequence of the previous statement, since \suppc{1} is trivial, the supplementarity rules for prime numbers are enough to derive all the others.

\noindent\textbf{Third Statement:} Let $p\in \mathbb{P}$ and $p\geq 3$. Let us consider the interpretation $\interp{.}_{p^2}$ which amounts to multiplying all the angles of a diagram by $p^2$.

\begin{itemize}
\item The interpretation $\interp{.}_{p^2}$ coincides with the interpretation $\interp{.}^{\sharp}_{1,p^2-1}$ defined in \cite{supplementarity}. As stated in this article, since the first parameter is odd and the second one is even, all the rules of ZX$\setminus\{\suppt\}$ hold.
\item The rule \e also holds. Indeed, $p$ is odd, and whether $p \mod 8$ is $1,3,5$ or $7$, $p^2 \mod 8 = 1$, so $p^2\frac{\pi}{4}=\frac{\pi}{4}\mod 2\pi$.
\item The rule \suppc{q} when $q\in \mathbb{P}$, $q\neq p$ holds, since $gcd(p^2, q) =1$:
\[\fit{\begin{tikzpicture}
	\begin{pgfonlayer}{nodelayer}
		\node [style=rn] (0) at (-6.5, -0.5) {};
		\node [style=none] (1) at (-0.75, -0.25) {=};
		\node [style=none] (2) at (-6.5, -1) {};
		\node [style=gn] (3) at (-7, 0.5) {\tiny $\alpha{+}\frac{2\pi}{q}$};
		\node [style=gn, align=center] (4) at (-5.25, 0.5) {\tiny $\alpha{+}$\\ \tiny$\frac{q-1}{q}2\pi$};
		\node [style=gn] (5) at (-7.75, 0.5) {$\alpha$};
		\node [style=none] (6) at (-6.25, 0.5) {...};
		\node [style=none] (7) at (4.5, -1) {};
		\node [style=rn] (8) at (4.5, -0.5) {};
		\node [style=gn, align=center] (9) at (4.5, 0.75) {\scriptsize $p^2q\alpha+$ \\ \scriptsize $(q{-}1)\pi$};
		\node [style=none] (10) at (4.75, -0) {...};
		\node [style=none] (11) at (-4.5, -0.25) {$\mapsto$};
		\node [style=none] (12) at (5.75, -0.25) {$\mapsfrom$};
		\node [style=none] (13) at (-2.75, -1) {};
		\node [style=gn, align=center, xshift={2 pt}] (14) at (-3.25, 0.5) {\tiny $p^2\alpha{+}$ \\ \tiny $\frac{2p^2\pi}{q}$};
		\node [style=gn] (15) at (-4, 0.5) {\scriptsize $p^2\alpha$};
		\node [style=none] (16) at (-2.5, 0.5) {...};
		\node [style=rn] (17) at (-2.75, -0.5) {};
		\node [style=gn, align=center] (18) at (-1.5, 0.5) {\tiny $p^2\alpha+$\\\tiny $\frac{q-1}{q}2p^2\pi$};
		\node [style=none] (19) at (1.25, -1) {};
		\node [style=gn, align=center, xshift={2 pt}] (20) at (0.75, 0.5) {\tiny $p^2\alpha{+}$ \\ \tiny $\frac{2\pi}{q}$};
		\node [style=gn] (21) at (0, 0.5) {$p^2\alpha$};
		\node [style=none] (22) at (1.5, 0.5) {...};
		\node [style=rn] (23) at (1.25, -0.5) {};
		\node [style=gn, align=center] (24) at (2.5, 0.5) {\tiny $p^2\alpha+$\\\tiny $\frac{q-1}{q}2\pi$};
		\node [style=none] (25) at (3.25, -0.25) {=};
		\node [style=rn] (26) at (6.75, -0.5) {};
		\node [style=none] (27) at (7, -0) {...};
		\node [style=none] (28) at (6.75, -1) {};
		\node [style=gn, align=center] (29) at (6.75, 0.75) {\scriptsize $q\alpha+$ \\ \scriptsize $(q{-}1)\pi$};
	\end{pgfonlayer}
	\begin{pgfonlayer}{edgelayer}
		\draw [bend right, looseness=1.00] (5) to (0);
		\draw [bend left, looseness=1.00] (0) to (3);
		\draw [bend left, looseness=1.00] (4) to (0);
		\draw (0) to (2.center);
		\draw [bend right=75, looseness=1.25] (9) to (8);
		\draw (8) to (7.center);
		\draw [bend right, looseness=1.00] (9) to (8);
		\draw [bend right=75, looseness=1.25] (8) to (9);
		\draw [bend right, looseness=1.00] (15) to (17);
		\draw [bend left, looseness=1.00] (17) to (14);
		\draw [bend left, looseness=1.00] (18) to (17);
		\draw (17) to (13.center);
		\draw [bend right, looseness=1.00] (21) to (23);
		\draw [bend left, looseness=1.00] (23) to (20);
		\draw [bend left, looseness=1.00] (24) to (23);
		\draw (23) to (19.center);
		\draw [bend right=75, looseness=1.25] (29) to (26);
		\draw (26) to (28.center);
		\draw [bend right, looseness=1.00] (29) to (26);
		\draw [bend right=75, looseness=1.25] (26) to (29);
	\end{pgfonlayer}
\end{tikzpicture}}\]
\item The rule \suppc{p} does not hold:
\[\begin{tikzpicture}
	\begin{pgfonlayer}{nodelayer}
		\node [style=rn] (0) at (-5, -0.25) {};
		\node [style=none] (1) at (0.25, -0.25) {$\neq$};
		\node [style=none] (2) at (-5, -0.75) {};
		\node [style=gn] (3) at (-5.5, 0.5) {\tiny $\alpha{+}\frac{2\pi}{p}$};
		\node [style=gn, align=center] (4) at (-3.75, 0.5) {\tiny $\alpha{+}$\\ \tiny $\frac{p-1}{p}2\pi$};
		\node [style=gn] (5) at (-6.25, 0.5) {$\alpha$};
		\node [style=none] (6) at (-4.75, 0.5) {...};
		\node [style=none] (7) at (1.25, -0.75) {};
		\node [style=rn] (8) at (1.25, -0.25) {};
		\node [style=gn] (9) at (1.25, 0.75) { $p^3\alpha$};
		\node [style=none] (10) at (1.5, 0.25) {...};
		\node [style=none] (11) at (-3, -0.25) {$\mapsto$};
		\node [style=none] (12) at (2.5, -0.25) {$\mapsfrom$};
		\node [style=rn] (13) at (3.75, -0.25) {};
		\node [style=none] (14) at (4, 0.25) {...};
		\node [style=none] (15) at (3.75, -0.75) {};
		\node [style=gn] (16) at (3.75, 0.75) { $p\alpha$};
		\node [style=gn] (17) at (-0.5, 0.5) {$p^2\alpha$};
		\node [style=none, xshift={3 pt}] (18) at (-1.25, 0.5) {...};
		\node [style=none] (19) at (-1.5, -0.75) {};
		\node [style=gn] (20) at (-2.5, 0.5) {$p^2\alpha$};
		\node [style=rn] (21) at (-1.5, -0.25) {};
		\node [style=gn] (22) at (-1.75, 0.5) {$p^2\alpha$};
	\end{pgfonlayer}
	\begin{pgfonlayer}{edgelayer}
		\draw [bend right, looseness=1.00] (5) to (0);
		\draw [bend left, looseness=1.00] (0) to (3);
		\draw [bend left, looseness=1.00] (4) to (0);
		\draw (0) to (2.center);
		\draw [bend right=75, looseness=1.25] (9) to (8);
		\draw (8) to (7.center);
		\draw [bend right, looseness=1.00] (9) to (8);
		\draw [bend right=75, looseness=1.25] (8) to (9);
		\draw [bend right=75, looseness=1.25] (16) to (13);
		\draw (13) to (15.center);
		\draw [bend right, looseness=1.00] (16) to (13);
		\draw [bend right=75, looseness=1.25] (13) to (16);
		\draw [bend right, looseness=1.00] (20) to (21);
		\draw [bend left=15, looseness=1.00] (21) to (22);
		\draw [bend left, looseness=1.00] (17) to (21);
		\draw (21) to (19.center);
	\end{pgfonlayer}
\end{tikzpicture}\]
Indeed, when $\alpha=0$, for instance, using on the left side $p-1$ times \bo and \iv, and on the right side \iv and $\frac{p-1}{2}$ times the Hopf law \hlaw, since $p\geq 3$:
\[\begin{tikzpicture}
	\begin{pgfonlayer}{nodelayer}
		\node [style=gn] (0) at (-5, 0.5) {};
		\node [xshift=3 pt, style=none] (1) at (-5.75, 0.5) {...};
		\node [style=none] (2) at (-6, -0.75) {};
		\node [style=gn] (3) at (-7, 0.5) {};
		\node [style=rn] (4) at (-6, -0.25) {};
		\node [style=gn] (5) at (-6.25, 0.5) {};
		\node [style=none] (6) at (-4.5, -0) {=};
		\node [style=gn] (7) at (-3.75, -0.25) {};
		\node [style=none] (8) at (-3.75, -0.75) {};
		\node [style=gn] (9) at (-3.75, 0.75) {};
		\node [style=gn] (10) at (-1, -0.25) {};
		\node [style=gn] (11) at (-1, 0.25) {};
		\node [style=rn] (12) at (-1, 0.75) {};
		\node [style=none] (13) at (-1, -0.75) {};
		\node [style=none] (14) at (-1.75, -0) {=};
		\node [style=none] (15) at (0.25, -0) {$\neq$};
		\node [style=rn] (16) at (3, -0.25) {};
		\node [style=none] (17) at (3, -0.75) {};
		\node [style=gn] (18) at (3, 0.75) {};
		\node [style=none] (19) at (3.25, 0.25) {...};
		\node [style=gn] (20) at (1, -0.25) {};
		\node [style=none] (21) at (2.25, -0) {=};
		\node [style=none] (22) at (1, -0.75) {};
		\node [style=none, anchor=west, xshift=-4pt] (23) at (-2.75, 0.25) {$\left.\vphantom{\rule{1pt}{1.6em}}\right)$};
		\node [style=none, anchor=east, xshift=4pt] (24) at (-3.25, 0.25) {$\left(\vphantom{\rule{1pt}{1.6em}}\right.$};
		\node [style=none, anchor=west] (25) at (-2.75, 0.75) {$\vphantom{.}^{\otimes (p{-}1)}$};
		\node [xshift=4pt, anchor=east, style=none] (26) at (-4, 0.75) {$\left(\vphantom{\rule{1pt}{0.6em}}\right.$};
		\node [xshift=-4pt, anchor=west, style=none] (27) at (-3.5, 0.75) {$\left.\vphantom{\rule{1pt}{0.6em}}\right)$};
		\node [style=none, anchor=west] (28) at (-3.5, 1) {$\vphantom{.}^{\otimes (p{-}1)}$};
		\node [style=rn] (29) at (-3, 0.5) {};
		\node [style=gn] (30) at (-3, -0) {};
		\node [xshift=4pt, anchor=east, style=none] (31) at (-1.25, 0.5) {$\left(\vphantom{\rule{1pt}{1.6em}}\right.$};
		\node [anchor=west, style=none] (32) at (-0.75, 1) {$\vphantom{.}^{\otimes (p{-}1)}$};
		\node [xshift=-4pt, anchor=west, style=none] (33) at (-0.75, 0.5) {$\left.\vphantom{\rule{1pt}{1.6em}}\right)$};
		\node [xshift=4pt, anchor=east, style=none] (34) at (0.75, 0.5) {$\left(\vphantom{\rule{1pt}{1.6em}}\right.$};
		\node [xshift=-4pt, anchor=west, style=none] (35) at (1.25, 0.5) {$\left.\vphantom{\rule{1pt}{1.6em}}\right)$};
		\node [style=gn] (36) at (1, 0.25) {};
		\node [style=rn] (37) at (1, 0.75) {};
		\node [anchor=west, style=none] (38) at (1.25, 1) {$\vphantom{.}^{\otimes (p{-}1)}$};
	\end{pgfonlayer}
	\begin{pgfonlayer}{edgelayer}
		\draw [bend right, looseness=1.00] (3) to (4);
		\draw [bend left=15, looseness=1.00] (4) to (5);
		\draw [bend left, looseness=1.00] (0) to (4);
		\draw (4) to (2.center);
		\draw (7) to (8.center);
		\draw (10) to (13.center);
		\draw (12) to (11);
		\draw [bend right=75, looseness=1.25] (18) to (16);
		\draw (16) to (17.center);
		\draw [bend right, looseness=1.00] (18) to (16);
		\draw [bend right=75, looseness=1.25] (16) to (18);
		\draw (20) to (22.center);
		\draw [bend right=45, looseness=1.00] (29) to (30);
		\draw [bend right=45, looseness=1.00] (30) to (29);
		\draw (29) to (30);
		\draw [bend right=45, looseness=1.00] (37) to (36);
		\draw [bend right=45, looseness=1.00] (36) to (37);
		\draw (37) to (36);
	\end{pgfonlayer}
\end{tikzpicture}\]
\end{itemize}
Every rule but the $p$-supplementarity (with $p\in \mathbb{P}$ and $p\geq 3$) holds with this interpretation, so it cannot  be derived from the others:\\
$\forall p \in \mathbb{P},p\geq 3,~~ ZX_E\cup\{\suppc{n}\}_{n\in\mathbb{P}}\setminus\{\suppc{p}\}\nvdash \suppc{p}$
\end{proof}

\begin{corollary}
For any $n\geq3$ odd, the \frag{2n} of the $ZX_E$-Calculus is incomplete.
\end{corollary}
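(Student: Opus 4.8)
The plan is to exhibit, for each odd $n\geq 3$, a single sound equation between two diagrams of the \frag{2n} that admits no derivation in $ZX_E$, thereby witnessing the claimed incompleteness. Since $n\geq 3$ is odd, it has an odd prime factor $p\geq 3$ (so $p\in\mathbb{P}$ and $p\mid n$). I would take the instance of \suppc{p} with $\alpha=0$ as the witness equation.

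First I would verify that both diagrams of \suppc{p} live in the \frag{2n}. The angles occurring are integer multiples of $\frac{2\pi}{p}$ on the left, together with $(p-1)\pi$ on the right. Because $p$ is an odd prime dividing $n$ we have $\gcd(p,4)=1$, hence $p\mid 4n\iff p\mid n$, so $\frac{2\pi}{p}=\frac{4n}{p}\cdot\frac{\pi}{2n}$ with $\frac{4n}{p}\in\mathbb{N}$; moreover $\pi=2n\cdot\frac{\pi}{2n}$. Thus every angle of the equation is a multiple of $\frac{\pi}{2n}$, and the whole instance lies in the fragment. Its soundness is immediate from Proposition \ref{prop:suppn-sound}.

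Next I would establish non-derivability by reducing to the third statement of Theorem \ref{th:supp-n-for-primes}. Any derivation of \suppc{p} inside the \frag{2n} of $ZX_E$ uses only rules of $ZX_E$ applied to fragment diagrams, so in particular it is also a valid derivation in the strictly larger system $ZX_E\cup\{\suppc{q}\}_{q\in\mathbb{P}\setminus\{p\}}$. Since that theorem asserts exactly $ZX_E\cup\{\suppc{q}\}_{q\in\mathbb{P}\setminus\{p\}}\nvdash\suppc{p}$, contraposition shows that \suppc{p} is a fortiori not derivable within the fragment. Therefore \suppc{p} is a sound equation between two \frag{2n} diagrams for which $ZX_E$ provides no derivation, which is precisely incompleteness.

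The argument has no genuine obstacle; the only points needing care are the angle bookkeeping that places \suppc{p} in the fragment (resting on $p\mid n$ with $p$ odd) and the logical observation that restricting both the rule set and the diagram universe can only shrink the set of derivable equalities, so the impossibility result of Theorem \ref{th:supp-n-for-primes} propagates down to the fragment. If a self-contained proof were preferred to a mere citation, I would instead note that the interpretation $\interp{.}_{p^2}$ used in that theorem maps the \frag{2n} into itself — multiplying a multiple of $\frac{\pi}{2n}$ by $p^2$ again yields such a multiple — so the very same invariance computation rules out any fragment derivation of \suppc{p} directly.
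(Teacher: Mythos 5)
Your proof is correct and takes essentially the same route as the paper: both pick an odd prime factor $p$ of $n$, invoke the third statement of Theorem~\ref{th:supp-n-for-primes} to get $ZX_E\nvdash\suppc{p}$, and check that the angles of the witness instance (multiples of $\frac{\pi}{2p}$, hence of $\frac{\pi}{2n}$ since $p\mid n$) place it in the \frag{2n}. Your extra care --- fixing $\alpha=0$ explicitly, spelling out the monotonicity of derivability under enlarging the rule set and diagram universe, and observing that $\interp{.}_{p^2}$ maps the fragment into itself --- only makes explicit what the paper's terse proof leaves implicit.
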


\begin{proof}
Let $p$ be an odd prime factor of $n$. Theorem \ref{th:supp-n-for-primes} proves that $ZX_E \nvdash (SUP_p)$, and notice that all the angles involved in the rule are multiples of $\frac{\pi}{2p}$, hence in the \frag{2n}.
\end{proof}

\begin{remark}
We can also notice that all the rules \suppn respect the quantity $\invR{.}$, so that the rule \e remains necessary.
\end{remark}

\subsection{Discussion on the Supplementarity's Derivability Structure}

Let $p$ and $q$ be two natural numbers. We have previously shown $ZX_E\cup\{\suppc{p},\suppc{q}\}\vdash \suppc{pq}$. 
In other words, \suppc{p} can be deduced from the supplementarity of the dividers of $p$. Now, can we deduce this same equality from the supplementarity of some of its multiples?

The first result comes when p is odd:

\begin{proposition}
\label{prop:sp-spq-to-sq}
\[\forall p,q\in \mathbb{N}^*,\qquad (p=1\mod 2) \implies \{\hlaw, \iv, \suppc{p}, \suppc{pq}\}\vdash \suppc{q}\]
\end{proposition}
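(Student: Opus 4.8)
The plan is to derive \suppc{q} by temporarily \emph{thickening} every wire on its left-hand side, splitting each thickened spider with \suppc{p}, collapsing the whole diagram with \suppc{pq}, and finally \emph{thinning} the unique remaining bundle back down with the Hopf law. The hypothesis that $p$ is odd is exactly what makes the thickening and thinning land on wire counts of the correct parity, and \iv is exactly what makes the scalars produced along the way cancel.

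Concretely, I would start from the left-hand side of \suppc{q}: a red dot joined by a single wire to $q$ green dots carrying angles $\alpha+\frac{2r\pi}{q}$ for $r\in\llbracket0;q-1\rrbracket$. First I thicken: applying \hlaw together with spider fusion \sone a total of $\frac{p-1}{2}$ times at each green dot, I replace its single wire by $p$ parallel wires to the same red dot (legitimate since $p$ odd means $1$ and $p$ have the same parity), at the cost of one fixed scalar per inserted pair. Each green dot now carries angle $\alpha+\frac{2r\pi}{q}$ and is attached to the red dot by exactly $p$ wires, so it matches the right-hand side of \suppc{p} with base angle $\theta_r:=\frac1p\left(\alpha+\frac{2r\pi}{q}-(p-1)\pi\right)$; applying \suppc{p} backwards splits it into $p$ green dots of angles $\theta_r+\frac{2j\pi}{p}$, $j\in\llbracket0;p-1\rrbracket$, each joined to the red dot by a single wire.

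The arithmetic check is that the $pq$ green dots so produced form a genuine instance of \suppc{pq}. Their angles are $\theta_0+\frac{2r\pi}{pq}+\frac{2j\pi}{p}=\theta_0+\frac{2(r+jq)\pi}{pq}$ with $\theta_0=\frac1p(\alpha-(p-1)\pi)$, and since $(r,j)\mapsto r+jq$ is a bijection from $\llbracket0;q-1\rrbracket\times\llbracket0;p-1\rrbracket$ onto $\llbracket0;pq-1\rrbracket$, these are precisely the angles $\theta_0+\frac{2m\pi}{pq}$ dividing the circle into $pq$ equal parts. Applying \suppc{pq} merges them into a single green dot joined to the red dot by $pq$ wires and carrying angle $pq\,\theta_0+(pq-1)\pi=q\alpha+(q-1)\pi$, which is exactly the angle demanded on the right-hand side of \suppc{q}. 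It remains to thin this bundle from $pq$ wires to $q$ wires, which I do with \hlaw, removing $\frac{q(p-1)}{2}$ pairs (even, again because $p$ is odd).

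The final point is scalar reconciliation, and this is where I expect the only real friction. Thickening inserted $q\cdot\frac{p-1}{2}$ Hopf pairs and thinning removed $\frac{q(p-1)}{2}$ of them, the same number; inserting and deleting a Hopf pair are mutually inverse operations, so using \iv (which provides precisely the scalar identity needed to annihilate one inserted pair against one deleted pair) together with \hlaw, all scalar factors cancel and the derived equation is \suppc{q} on the nose rather than up to a scalar. Two matters need care rather than cleverness: that \suppc{p} and \hlaw may be applied locally even though the shared red dot carries many other legs (handled by spider fusion \sone, or directly by Proposition \ref{prop:cyclotomic-neighbourhood}), and that the inserted and deleted scalars are collected and annihilated by \iv in exactly the right multiplicity. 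Since the angle computation already holds for an arbitrary base angle $\alpha$, no separate argument about the range of $\alpha$ is needed.
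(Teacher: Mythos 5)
Your proposal is correct and follows essentially the same derivation as the paper's proof: thicken each of the $q$ wires to $p$ parallel wires with \hlaw (legitimate since $p$ is odd), split each green dot backwards with \suppc{p}, merge the resulting $pq$ dots with \suppc{pq} via the re-indexing $(r,j)\mapsto r+jq$, then thin the bundle back to $q$ wires with \hlaw, the dumbbell scalars produced by thickening being exactly the ones consumed by thinning. Your explicit choice $\theta_0=\frac1p\left(\alpha-(p-1)\pi\right)$ even yields the target angle $q\alpha+(q-1)\pi$ as an exact identity, where the paper instead absorbs $(p-1)\pi$ modulo $2\pi$ using the oddness of $p$.
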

\begin{proof}
In appendix at page \pageref{prf:sp-spq-to-sq}.
\end{proof}

There exists another -- weaker -- derivation when $p$ is even:
\begin{proposition}
\label{prop:sp-sp2q-to-spq}
\[\forall p,q\in \mathbb{N}^*,\qquad \{\hlaw, \iv, \suppc{p}, \suppc{p^2q}\}\vdash \suppc{pq}\]
\end{proposition}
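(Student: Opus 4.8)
The plan is to regard \suppc{pq} as obtained from \suppc{p^2q} by \emph{dividing the index by a factor~$p$}, using \suppc{p} to convert between a bundle of $p$ parallel wires and $p$ separate single wires, while \hlaw and \iv only keep track of wire parities and of scalars. When $p$ is odd there is in fact nothing to do: Proposition~\ref{prop:sp-spq-to-sq} with its ``$q$'' instantiated to $pq$ already yields $\{\hlaw,\iv,\suppc{p},\suppc{p^2q}\}\vdash\suppc{pq}$, since for odd $p$ a single wire and $p$ parallel wires between a green and a red node are interchangeable ($p\equiv 1 \bmod 2$). Hence the entire difficulty lies in the case of even $p$.

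Concretely, I would start from the left-hand side of \suppc{pq}, namely the $pq$ green nodes of angles $\alpha+\frac{2k\pi}{pq}$ for $k\in\interp{0;pq-1}$, each joined to the central red node $r$ by a single wire. Grouping the indices $k$ by their residue modulo $q$ cuts these into $q$ cyclotomic-for-$p$ families (angles within a family differing by $\frac{2\pi}{p}$); applying \suppc{p} to each family replaces it by a single green node joined to $r$ by $p$ parallel wires, the $q$ resulting nodes being cyclotomic for $q$. These $q$ nodes are cyclotomic twins with the common neighbourhood ``$p$ wires to $r$'', so the remaining task is to merge cyclotomic twins of multiplicity $p$ (cf.\ Proposition~\ref{prop:cyclotomic-neighbourhood}) into the single node carrying $pq$ wires of the right-hand side. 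To realise this merge with only \suppc{p^2q} available, I would process the right-hand side symmetrically: take its single green node with $pq$ wires, inflate the bundle to $p^2q$ wires with \hlaw (reverse Hopf pairs, whose scalars are supplied by \iv and \hlaw), which is legitimate because $p^2q-pq=pq(p-1)$ is even, then apply \suppc{p^2q} backwards to expand it into $p^2q$ single-wire cyclotomic nodes, and finally regroup and apply \suppc{p} to recover the family built from the left. The angles match through the identities $p^2q\,\delta+(p^2q-1)\pi=pq\alpha+(pq-1)\pi$ and $k+mq$ enumerating $\interp{0;pq-1}$, exactly as in the first statement of Theorem~\ref{th:supp-n-for-primes}, after which spider fusion \sone closes the derivation.

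The main obstacle is the parity of the wire multiplicities. For even $p$, a bundle of $p$ parallel wires between a green and a red node has even parity, so the Hopf law \hlaw collapses it (up to a scalar) rather than reducing it to a single wire; this is exactly why one factor of $p$ cannot be undone on its own (one does \emph{not} obtain \suppc{q} from \suppc{p} and \suppc{pq} when $p$ is even), and why the hypothesis must carry the spare factor, so that the two bundle sizes that actually occur, $pq$ and $p^2q$, have the same parity and \hlaw can interrelate them. I expect the delicate point to be the scalar bookkeeping: collapsing the even parallel-wire bundles with \hlaw creates scalar factors of the type governed by \iv, and the derivation closes only once the scalar produced on the left (through \suppc{p} and \hlaw) is shown to equal the one produced on the right (through \suppc{p^2q}); this matching is precisely what requires \suppc{p^2q} rather than \suppc{p} alone.
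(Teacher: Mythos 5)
Your global architecture is the paper's: the odd case is dispatched exactly as in the paper by instantiating Proposition~\ref{prop:sp-spq-to-sq} with $q:=pq$, and for even $p$ your right-hand processing -- inflate the $pq$-wire bundle to $p^2q$ wires by \hlaw, legitimate since $p^2q-pq=pq(p-1)$ is even, expand with \suppc{p^2q} backwards into $p^2q$ single-wire dots, then regroup into $pq$ families of $p$ and merge with \suppc{p} -- coincides with the first line of the paper's derivation. The gap is at the point where you claim the two sides meet. After your left-hand step you have $q$ green dots of angles $p\alpha+\frac{2k\pi}{q}+(p-1)\pi$, each joined to the red dot by $p$ parallel wires; after your right-hand steps you have $pq$ green dots of angles $\alpha+\frac{2k\pi}{pq}$ (up to the choice of the base angle $\delta$), each joined to the red dot by $p$ parallel wires. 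These are \emph{different} diagrams -- they do not even have the same number of green dots -- and \sone cannot ``close the derivation'': spider fusion never merges green dots separated by the red dot, and \hlaw only changes the wire count between a fixed pair of dots by $2$, so it cannot turn $pq$ dots into $q$ dots either. The one tool you flagged for exactly this merge, Proposition~\ref{prop:cyclotomic-neighbourhood}, would merge the $q$ twins only via \suppc{q}, which you do not have -- this is precisely why you switched to the symmetric processing, but the symmetric processing does not land on the same diagram, so the junction step fails as written.

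What is missing is a \emph{second} use of the evenness of $p$, and it is the heart of the paper's proof: since $p$ is even, \hlaw (with \iv absorbing the $\sqrt2^{\pm 1}$ scalar pairs it produces) collapses each $p$-wire bundle \emph{completely}, detaching every green dot from the red one. All the supplementarity dots thereby become $0$-legged scalars, and at the scalar level the merge of $p$-tuples is the neighbourhood-free instance of \suppc{p}, equation~(\ref{eq:supp-no-neighbours}), for which there is no twin-multiplicity problem at all. The paper thus collapses the $pq$ bundle-dots to scalars, merges them $p$ at a time into $q$ scalars, and then runs the same moves backwards (re-expanding with \suppc{p} and re-attaching single wires), using at the very end that $pq$ is even, so that the residual $\pi$ shift merely permutes the cyclotomic family $\alpha+\frac{2k\pi}{pq}$ and the left-hand side of \suppc{pq} is recovered by reindexing. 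Alternatively, your route could be completed without the scalar detour by splitting the red spider and invoking the generalized-bialgebra argument of Proposition~\ref{prop:cyclotomic-neighbourhood} to merge $p$-tuples of your $pq$ bundle-dots (this needs only \suppc{p}), followed by an \hlaw reduction of the resulting $p^2$-wire bundles to $p$-wire ones, $p^2-p$ being even -- but none of this is in your write-up, and the step you do assert (\sone) cannot do the job.
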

\begin{proof}
In appendix at page \pageref{prf:sp-sp2q-to-spq}.
\end{proof}

\begin{remark}
In the last two propositions, we require that the ZX be ``general'' i.e.~with angles either real or a rational multiple of $\pi$ because we need $\alpha/p$ to be in the fragment in both cases. Though, the result can be expanded to any fragment for some $\alpha$ provided $\alpha/p$ be in the fragment.
\end{remark}

\noindent
To sum up:
\[ ZX_E\cup\{\suppc{p},\suppc{q}\}\vdash \suppc{pq} \]
\[ ZX_E\cup\{\suppc{p},\suppc{p^2q}\}\vdash \suppc{pq} \]
\[ (p=1\mod 2) \implies ZX_E\cup\{\suppc{p},\suppc{pq}\}\vdash \suppc{q} \]

\subsection{Updated Set of Rules}

We propose to add the generalisation of the supplementarity rule to the set of rules of the ZX-Calculus, and to restrict to the set necessary when dealing with particular fragments. We notice that the rule \ko is derivable from the others \cite{simplified-stabilizer} so we can get rid of it, and the new set of rules of the ZX-Calculus is shown in figure \ref{fig:ZX_rules3}.

\begin{figure}[!ht]
 \centering
  {\begin{tabular}{|ccccc|}
   \hline
   &&&& \\
   \begin{tikzpicture}[font={\footnotesize}]
	\begin{pgfonlayer}{nodelayer}
		\node [style=none] (0) at (-1.25, -0) {\rotatebox[origin=c]{63.43}{$~\cdots~$}};
		\node [style=none] (1) at (0.25, -0) {$=$};
		\node [style=gn] (2) at (1.5, 0) { \footnotesize$\alpha{+}\beta$};
		\node [style=gn, minimum width={0.5 cm}] (3) at (-0.7500001, -0.2500001) {\footnotesize $\beta$};
		\node [style=none] (4) at (-1.75, -0.5) {$~\cdots~$};
		\node [style=none] (5) at (2, -0.75) {};
		\node [style=none] (6) at (-1, -0.75) {};
		\node [style=none] (7) at (1.5, -0.75) {$~\cdots~$};
		\node [style=none] (8) at (-0.5, -0.75) {};
		\node [style=none] (9) at (1, -0.75) {};
		\node [style=none] (10) at (-2, -0.5) {};
		\node [style=none] (11) at (-1.5, -0.5) {};
		\node [style=none] (12) at (-0.75, -0.75) {$~\cdots~$};
		\node [style=none] (13) at (1.5, 0.75) {$~\cdots~$};
		\node [style=none] (14) at (1, 0.75) {};
		\node [style=none] (15) at (-2, 0.75) {};
		\node [style=none] (16) at (-0.5, 0.5) {};
		\node [style=none] (17) at (-1.5, 0.75) {};
		\node [style=none] (18) at (2, 0.75) {};
		\node [style=gn, minimum width={0.5 cm}] (19) at (-1.75, 0.25) {\footnotesize$\alpha$};
		\node [style=none] (20) at (-0.75, 0.5) {$~\cdots~$};
		\node [style=none] (21) at (-1, 0.5) {};
		\node [style=none] (22) at (-1.75, 0.75) {$~\cdots~$};
	\end{pgfonlayer}
	\begin{pgfonlayer}{edgelayer}
		\draw (3) to (16.center);
		\draw (3) to (6.center);
		\draw (3) to (8.center);
		\draw (19) to (10.center);
		\draw (19) to (11.center);
		\draw [bend right, looseness=1.00] (19) to (3);
		\draw [bend left, looseness=1.00] (19) to (3);
		\draw (14.center) to (2);
		\draw (2) to (9.center);
		\draw (5.center) to (2);
		\draw (2) to (18.center);
		\draw (19) to (15.center);
		\draw (19) to (17.center);
		\draw (3) to (21.center);
	\end{pgfonlayer}
\end{tikzpicture}&(S1) &$\quad$& \begin{tikzpicture}
	\begin{pgfonlayer}{nodelayer}
		\node [style=gn] (0) at (-0.7499998, -0) {};
		\node [style=none] (1) at (0, -0) {=};
		\node [style=none] (2) at (-0.7499998, 1) {};
		\node [style=none] (3) at (-0.7499998, -0.9999999) {};
		\node [style=none] (4) at (0.7499998, 1) {};
		\node [style=none] (5) at (0.7499998, -0.9999999) {};
	\end{pgfonlayer}
	\begin{pgfonlayer}{edgelayer}
		\draw (2) to (0);
		\draw (0) to (3);
		\draw (4) to (5);
	\end{pgfonlayer}
\end{tikzpicture}&(S2)\\
   &&&& \\
   \begin{tikzpicture}
	\begin{pgfonlayer}{nodelayer}
		\node [style=none] (0) at (0.7500001, -0.25) {};
		\node [style=none] (1) at (0, -0) {$=$};
		\node [style=gn] (2) at (1.25, 0.25) {};
		\node [style=none] (3) at (-0.7500001, -0.25) {};
		\node [style=none] (4) at (1.75, -0.25) {};
		\node [style=none] (5) at (-1.75, -0.25) {};
	\end{pgfonlayer}
	\begin{pgfonlayer}{edgelayer}
		\draw [in=90, out=90, looseness=1.75] (5.center) to (3.center);
		\draw [in=90, out=90, looseness=1.75] (0.center) to (4.center);
	\end{pgfonlayer}
\end{tikzpicture}&(S3) && \begin{tikzpicture}
	\begin{pgfonlayer}{nodelayer}
		\node [style=rn] (0) at (-0.7500001, -0.32) {$\frac{\text{-}\pi}{4}$};
		\node [style=gn] (1) at (-0.7499998, 0.32) {$\,\frac{\pi}{4}\,$};
		\node [style=none] (2) at (0, -0) {=};
		\node [style=none] (3) at (0.500001, 0.25) {};
		\node [style=none] (4) at (0.500001, -0.25) {};
		\node [style=none] (5) at (1, 0.25) {};
		\node [style=none] (6) at (1, -0.25) {};
	\end{pgfonlayer}
	\begin{pgfonlayer}{edgelayer}
		\draw (0) to (1);
		\draw [style=dashed] (3.center) to (5.center);
		\draw [style=dashed] (5.center) to (6.center);
		\draw [style=dashed] (6.center) to (4.center);
		\draw [style=dashed] (4.center) to (3.center);
	\end{pgfonlayer}
\end{tikzpicture}&(E)\\
   &&&& \\
   \begin{tikzpicture}
	\begin{pgfonlayer}{nodelayer}
		\node [style=gn] (0) at (0.75, 0) {};
		\node [style=none] (1) at (2.25, -0.25) {};
		\node [style=none] (2) at (0.5, -0.5) {};
		\node [style=rn] (3) at (2.25, 0.25) {};
		\node [style=none] (4) at (1, -0.5) {};
		\node [style=rn] (5) at (0.75, 0.5) {};
		\node [style=rn] (6) at (2.75, 0.25) {};
		\node [style=none] (7) at (2.75, -0.25) {};
		\node [style=none] (8) at (1.5, 0) {$=$};
		\node [style=rn] (9) at (0, 0.25) {};
		\node [style=gn] (10) at (0, -0.25) {};
	\end{pgfonlayer}
	\begin{pgfonlayer}{edgelayer}
		\draw [style=none] (5) to (0);
		\draw[bend right=23]  [style=none] (0) to (2.center);
		\draw[bend left=23]  [style=none] (0) to (4.center);
		\draw [style=none] (3) to (1.center);
		\draw [style=none] (6) to (7.center);
		\draw (9) to (10);
	\end{pgfonlayer}
\end{tikzpicture}&(B1) && \begin{tikzpicture}
	\begin{pgfonlayer}{nodelayer}
		\node [style=none] (0) at (3.75, 0.75) {};
		\node [style=rn] (1) at (0.5, -0.25) {};
		\node [style=none] (2) at (3.25, -0.75) {};
		\node [style=none] (3) at (1.25, 1) {};
		\node [style=none] (4) at (3.25, 0.75) {};
		\node [style=none] (5) at (0.5, -0.75) {};
		\node [style=none] (6) at (0.5, 1) {};
		\node [style=gn] (7) at (1.25, 0.5) {};
		\node [style=none] (8) at (2.25, 0) {$=$};
		\node [style=rn] (9) at (1.25, -0.25) {};
		\node [style=gn] (10) at (3.5, -0.25) {};
		\node [style=gn] (11) at (0.5, 0.5) {};
		\node [style=none] (12) at (1.25, -0.75) {};
		\node [style=none] (13) at (3.75, -0.75) {};
		\node [style=rn] (14) at (3.5, 0.25) {};
		\node [style=rn] (15) at (0, 0.25) {};
		\node [style=gn] (16) at (0, -0.25) {};
	\end{pgfonlayer}
	\begin{pgfonlayer}{edgelayer}
		\draw [style=none] (12.center) to (9);
		\draw [style=none] (5.center) to (1);
		\draw [style=none] (7) to (3.center);
		\draw [style=none, bend right=23, looseness=1.00] (9) to (7);
		\draw [style=none] (11) to (6.center);
		\draw [style=none, bend left=23, looseness=1.00] (1) to (11);
		\draw [style=none,bend right=23] (13.center) to (10);
		\draw [style=none] (10) to (14);
		\draw [style=none,bend left=23] (14) to (4.center);
		\draw [style=none,bend right=23] (14) to (0.center);
		\draw[bend right=23] (10) to (2.center);
		\draw (11) to (9);
		\draw (7) to (1);
		\draw (15) to (16);
	\end{pgfonlayer}
\end{tikzpicture}&(B2)\\
   &&&& \\
   \begin{tikzpicture}
	\begin{pgfonlayer}{nodelayer}
		\node [style=none] (0) at (0, -0) {$=$};
		\node [style=none] (1) at (-0.7499998, 0.9999999) {};
		\node [style=none] (2) at (-0.7499998, -0.7499998) {};
		\node [style=none] (3) at (1.5, 0.9999999) {};
		\node [style=gn] (4) at (-0.7499998, -0.2500001) {$~\pi~$};
		\node [style=none] (5) at (1.5, -0.7499998) {};
		\node [style=rn] (6) at (-0.7499998, 0.5) {$~\alpha~$};
		\node [style=rn] (7) at (1.5, -0.2500001) {$-\alpha$};
		\node [style=gn] (8) at (1.5, 0.5) {$~\pi~$};
		\node [style=rn] (9) at (-1.5, 0.2500001) {};
		\node [style=gn] (10) at (-1.5, -0.2500001) {};
		\node [style=rn] (11) at (0.7499998, 0.5) {$~\alpha~$};
		\node [style=gn] (12) at (0.7499998, -0.2500001) {$~\pi~$};
	\end{pgfonlayer}
	\begin{pgfonlayer}{edgelayer}
		\draw (3.center) to (8);
		\draw (8) to (7);
		\draw (7) to (5.center);
		\draw (4) to (2.center);
		\draw (1.center) to (6);
		\draw (6) to (4);
		\draw (9) to (10);
		\draw (12) to (11);
	\end{pgfonlayer}
\end{tikzpicture}&(K) &&  \begin{tikzpicture}
	\begin{pgfonlayer}{nodelayer}
		\node [style=none] (0) at (0.7500001, 1) {};
		\node [style=none] (1) at (-0.7500001, -1) {};
		\node [style={H box}] (2) at (-0.7500001, 0.5000001) {};
		\node [style=rn] (3) at (-1.25, -0) {\footnotesize$~\alpha~$};
		\node [style={H box}] (4) at (-1.75, -0.5000001) {};
		\node [style=none] (5) at (-0.7500001, 1) {};
		\node [style={H box}] (6) at (-0.7500001, -0.5000001) {};
		\node [style=none] (7) at (-1.25, 0.7500001) {$\cdots$};
		\node [style=none] (8) at (0, -0) {$=$};
		\node [style=gn] (9) at (1.25, -0) {\footnotesize$~\alpha~$};
		\node [style=none] (10) at (1.75, -1) {};
		\node [style=none] (11) at (1.75, 1) {};
		\node [style=none] (12) at (-1.75, 1) {};
		\node [style={H box}] (13) at (-1.75, 0.5000001) {};
		\node [style=none] (14) at (-1.75, -1) {};
		\node [style=none] (15) at (0.7500001, -1) {};
		\node [style=none] (16) at (-1.25, -0.7500001) {$\cdots$};
		\node [style=none] (17) at (1.25, 0.7500001) {$\cdots$};
		\node [style=none] (18) at (1.25, -0.7500001) {$\cdots$};
	\end{pgfonlayer}
	\begin{pgfonlayer}{edgelayer}
		\draw [bend right, looseness=1.00] (3) to (4);
		\draw [bend left, looseness=1.00] (3) to (6);
		\draw (6) to (1.center);
		\draw (4) to (14.center);
		\draw [bend left, looseness=1.00] (3) to (13);
		\draw [bend right, looseness=1.00] (3) to (2);
		\draw (2) to (5.center);
		\draw (13) to (12.center);
		\draw [bend left=23, looseness=1.00] (9) to (0.center);
		\draw [bend right=23, looseness=1.00] (9) to (11.center);
		\draw [bend right=23, looseness=1.00] (9) to (15.center);
		\draw [bend left=23, looseness=1.00] (9) to (10.center);
	\end{pgfonlayer}
\end{tikzpicture}&(H)\\
   &&&& \\
   \multicolumn{5}{|c|}{\begin{tabular}{ccccc}
   \begin{tikzpicture}
	\begin{pgfonlayer}{nodelayer}
		\node [style=gn] (0) at (0.5, -0.55) {$~\frac{\pi}{2}~$};
		\node [style=rn] (1) at (0.5, -0) {};
		\node [style=gn] (2) at (0.5, 0.55) {$~\frac{\pi}{2}~$};
		\node [style=none] (3) at (0.5, 0.9999999) {};
		\node [style=gn] (4) at (1.25, 0.5) {$\frac{-\pi}{2}$};
		\node [style=none] (5) at (0.5, -0.9999999) {};
		\node [style=none] (6) at (-0.9999999, 0.9999999) {};
		\node [style=none] (7) at (-0.9999999, -0.9999999) {};
		\node [style=none] (8) at (-0.2500001, -0) {$=$};
		\node [style={{H box}}] (9) at (-0.9999999, -0) {};
	\end{pgfonlayer}
	\begin{pgfonlayer}{edgelayer}
		\draw (3.center) to (2);
		\draw (2) to (1);
		\draw (1) to (0);
		\draw (0) to (5.center);
		\draw (1) to (4);
		\draw (6.center) to (7.center);
	\end{pgfonlayer}
\end{tikzpicture}&(EU) &$\quad$& \begin{tikzpicture}
	\begin{pgfonlayer}{nodelayer}
		\node [style=rn] (0) at (-1.75, -0) {};
		\node [style=none] (1) at (0.5, -0) {=};
		\node [style=none] (2) at (-1.75, -0.75) {};
		\node [style=gn] (3) at (-2, 1) {\scriptsize $\alpha{+}\frac{2\pi}{n}$};
		\node [style=gn, align=center] (4) at (-0.25, 1) {\scriptsize $\alpha{+}$\\ \scriptsize $\frac{n-1}{n}2\pi$};
		\node [style=gn] (5) at (-3, 1) {$~\alpha~$};
		\node [style=none] (6) at (-1.25, 0.5) {$\cdots$};
		\node [style=none] (7) at (1.75, -0.75) {};
		\node [style=rn] (8) at (1.75, -0) {};
		\node [style=gn, align=center] (9) at (1.75, 1.25) {\scriptsize $n\alpha$+\\\scriptsize $(n{-}1)\pi$};
		\node [style=none] (10) at (1.875, 0.5) {$\cdots$};
	\end{pgfonlayer}
	\begin{pgfonlayer}{edgelayer}
		\draw [bend right, looseness=1.00] (5) to (0);
		\draw [bend left=15, looseness=0.75] (0) to (3);
		\draw [bend left, looseness=1.00] (4) to (0);
		\draw (0) to (2.center);
		\draw [bend right=75, looseness=1.25] (9) to (8);
		\draw (8) to (7.center);
		\draw [bend right, looseness=1.00] (9) to (8);
		\draw [bend right=75, looseness=1.25] (8) to (9);
	\end{pgfonlayer}
\end{tikzpicture}&(SUP$_n$) \\
   &&& $n\in \mathbb{N}^*$ or $n\in \mathbb{P}$ &
   \end{tabular}} \\
   \hline
  \end{tabular}}
 \caption[]{New set of rules for the ZX-calculus with scalars. All of these rules also hold when flipped upside-down, or with the colours red and green swapped. The right-hand side of (E) is an empty diagram. ($\cdots$) in (S1) and (H) denote $0$ or more wires, while (\protect\rotatebox{45}{\raisebox{-0.4em}{$\cdots$}}) denote $1$ or more wires.}
 \label{fig:ZX_rules3}
\end{figure}
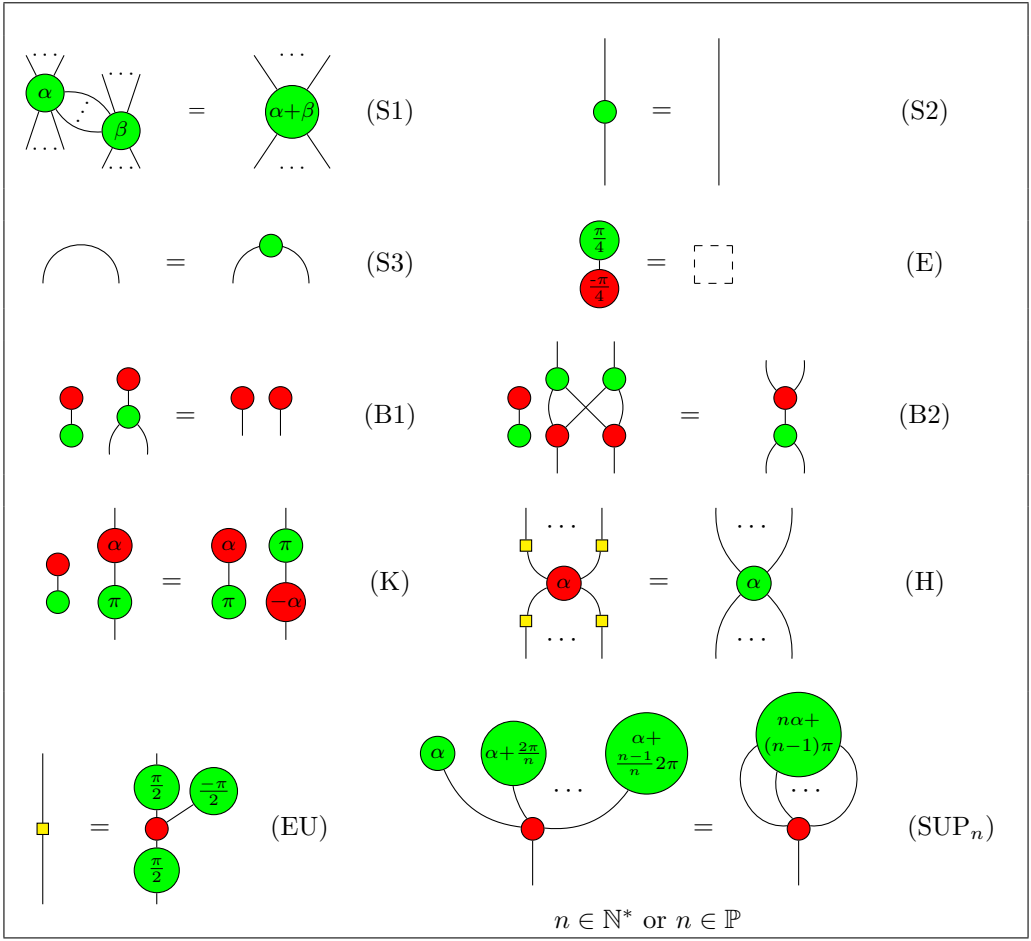

\begin{remark}
We can prove that \suppt is not derivable from $ZX_E\setminus\{\suppt\}$, using the interpretation $\interp{.}^{\sharp}_{k,l}$ defined in \cite{supplementarity} with $k=3$ and $l=8$.

However, it is important to notice that we have not proven that \suppt can not be derived from the rest once the cyclotomic supplementarity is added. Indeed, the family of interpretations used in the proof of Theorem \ref{th:supp-n-for-primes} only works when $p$ is odd, and the one used previously, $\interp{.}^{\sharp}_{k,l}$, does not hold for many supplementarity rules.

The rule \suppt is all the more peculiar as, due to the Hopf law \hlaw, it is the only supplementarity rule that creates a non-trivial scalar -- except for the supplementarity rules for even numbers, which anyway derive from \suppt. For instance it may create a scalar worth $0$ -- when applied with $\alpha=0$ -- which is the first step towards proving the rule \zo in the \frag{4}. Moreover, it is the only supplementarity that can not be proven to be necessary by simply multiplying the angles by a constant.
\end{remark}

\subsection{The General ZX-Calculus is still Not Complete}

The argument given by Schr\"oder de Witt and Zamdzhiev \cite{incompleteness} to show the incompleteness of the general ZX-Calculus is not valid anymore -- when multiplying the angles by any integer, there is at least one supplementarity that does not hold. But we can patch the demonstration to make it valid again.

\begin{theorem}
The general ZX-Calculus is incomplete with the set of rules in figure \ref{fig:ZX_rules3}.
\end{theorem}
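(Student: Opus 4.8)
The plan is to follow the scheme of Schr\"oder de Witt and Zamdzhiev \cite{incompleteness}: exhibit a sound equation $\interp{D_1}=\interp{D_2}$ together with a transformation $\Phi$ on diagrams such that (i) $\Phi$ sends every rule-instance of Figure \ref{fig:ZX_rules3} to a derivable equation — so that $ZX\vdash A=B$ implies $ZX\vdash\Phi(A)=\Phi(B)$ by induction on the derivation — and (ii) $\interp{\Phi(D_1)}\neq\interp{\Phi(D_2)}$. If $D_1=D_2$ were derivable, soundness applied to $\Phi(D_1)=\Phi(D_2)$ would force $\interp{\Phi(D_1)}=\interp{\Phi(D_2)}$, contradicting (ii); hence $ZX\nvdash D_1=D_2$. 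The only part of their argument that breaks is the choice of $\Phi$: their transformation multiplies every angle by a fixed integer $k$, and such a map turns a \suppc{n}-instance into a valid instance only when $\gcd(k,n)=1$, so it fails for every prime dividing $k$.

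First I would replace angle-multiplication by an \emph{additive} reinterpretation of angles. Fix a $\mathbb{Q}$-linear map $\phi:\mathbb{R}\to\mathbb{R}$ (built from a Hamel basis of $\mathbb{R}$ over $\mathbb{Q}$) with $\phi(\pi)=\pi$, and let $\Phi$ replace each angle $a$ of a diagram by $\phi(a)$. Since $\phi(\pi)=\pi$ and $\phi$ is $\mathbb{Q}$-linear, $\phi$ fixes every rational multiple of $\pi$ and $\phi(2\pi)=2\pi$, so $\Phi$ is well defined on diagrams (angles being read mod $2\pi$). Next I would check that $\Phi$ preserves every rule: \sone{} follows from $\phi(\alpha)+\phi(\beta)=\phi(\alpha+\beta)$, the rules $(K)$ and \h{} from additivity together with $\phi(\pi)=\pi$, and \e{} and \eu{} because they only involve rational multiples of $\pi$, which $\phi$ fixes; the remaining rules carry no angle.

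The crucial new case is \suppn{}. Because the offsets $\tfrac{2k\pi}{n}$ are rational multiples of $\pi$, the left-hand dots $\alpha+\tfrac{2k\pi}{n}$ are sent to $\phi(\alpha)+\tfrac{2k\pi}{n}$, that is, to a genuine \suppc{n}-instance with base angle $\phi(\alpha)$, whose merged form is $n\phi(\alpha)+(n-1)\pi$; by additivity the right-hand angle $n\alpha+(n-1)\pi$ is sent to exactly $n\phi(\alpha)+(n-1)\pi$. Hence $\Phi$ turns \suppc{n} into a single application of \suppc{n}, for every $n$ at once — this is precisely where the additive map succeeds and multiplication failed. With $\Phi$ now preserving all of Figure \ref{fig:ZX_rules3}, I would finally choose $\phi$ to move some $\theta_0\notin\pi\mathbb{Q}$, i.e.\ $\phi(\theta_0)\not\equiv\theta_0\pmod{2\pi}$ (possible since $\theta_0$ and $\pi$ can be taken $\mathbb{Q}$-independent), and take the incompleteness witness $D_1,D_2$ of \cite{incompleteness} instantiated at such a $\theta_0$: the identity $\interp{D_1}=\interp{D_2}$ is non-linear in $\theta_0$, so replacing $\theta_0$ by $\phi(\theta_0)$ while leaving the rational-multiple-of-$\pi$ angles untouched falsifies it, giving (ii).

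The main obstacle is point (ii): one must ensure a single additive $\phi$ simultaneously fixes all the rational-multiple-of-$\pi$ angles forced by the rules and genuinely breaks the chosen equation. This requires the witness to involve an angle $\mathbb{Q}$-independent from $\pi$ that enters the matrix equality non-degenerately, so that a generic perturbation leaves the solution set of the underlying trigonometric identity; were the only available witness expressible with rational multiples of $\pi$, $\phi$ would fix it and the argument would collapse. Verifying that the Schr\"oder--Zamdzhiev witness meets this requirement — and that $(K)$ and the scalar conventions are respected, both immediate from additivity and $\phi(\pi)=\pi$ — completes the proof.
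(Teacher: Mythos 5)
Your transformation idea is sound and genuinely different from the paper's device, so it is worth saying what it buys before naming the gap. A $\QQ$-linear $\phi$ with $\phi(\pi)=\pi$ fixes $\pi\QQ$ pointwise, descends to angles modulo $2\pi$ (since $\phi(2\pi)=2\pi$), and maps every instance of every rule of Figure \ref{fig:ZX_rules3} to another instance of the same rule -- including \emph{every} \suppc{n} at once, because the offsets $\frac{2k\pi}{n}$ lie in $\pi\QQ$ and $\phi(n\alpha+(n-1)\pi)=n\phi(\alpha)+(n-1)\pi$. So the induction ``$ZX\vdash A=B$ implies $ZX\vdash \Phi(A)=\Phi(B)$'' does go through. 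The paper has no such single global transformation: it argues per derivation, bounding by some $q$ the supplementarity indices actually used, multiplying all angles by $q'=k(q+4)!+1$ (which preserves only the rules occurring in that derivation), and then deriving a contradiction by pitting the infinitely many values $q'\alpha \bmod 2\pi$ against the at most four admissible values of $\alpha$. Your Hamel-basis map eliminates all of that bookkeeping, which is an honest improvement in uniformity.

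The genuine gap is your step (ii), and the justification you give for it is invalid. ``The identity is non-linear in $\theta_0$, so replacing $\theta_0$ by $\phi(\theta_0)$ falsifies it'' is a non sequitur: non-linearity does not prevent the $\QQ$-linearly perturbed angle tuple from landing back inside the solution set. For the Schr\"oder de Witt--Zamdzhiev witness this is a real danger, since there the solution set is a positive-dimensional family of Euler-angle relations and $\phi$ moves all the irrational angles simultaneously, constrained by whatever hidden $\QQ$-linear relations hold among them -- relations you neither determine nor control. Moreover, your phrase ``$\theta_0$ and $\pi$ can be taken $\QQ$-independent'' treats as a free choice what is in fact a property of the fixed witness that must be \emph{proved}: were every angle of the witness in $\pi\QQ$, your $\Phi$ would fix the whole equation. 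This irrationality proof is precisely the number-theoretic content the paper supplies for its own witness via Cohn's criterion ($e^{i\alpha_0}$ is a root of the irreducible, non-cyclotomic polynomial $3X^4+2X^2+3$, hence $\alpha_0\notin\pi\QQ$), combined with the finiteness of the solution set ($\alpha=\pm\frac{\pi}{2}\pm\arccos\sqrt{2/3} \bmod 2\pi$, i.e.\ $\pm\alpha_0+\pi\mathbb{Z}$). If you graft the paper's witness onto your map, the proof does close cleanly: extend $\{\pi,\alpha_0\}$ (independent, by Cohn) to a Hamel basis and set $\phi(\alpha_0)=2\alpha_0$; then $\Phi(D_1)=D_1$, $\Phi(D_2)$ is again of the form $D_2$ with parameter $2\alpha_0$, and $2\alpha_0\equiv\pm\alpha_0+k\pi$ would force $\alpha_0\in\pi\QQ$, so $\interp{\Phi(D_1)}\neq\interp{\Phi(D_2)}$ no matter where $\phi$ sends $\theta_0$. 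As written, however -- with the unanalyzed Schr\"oder de Witt--Zamdzhiev witness and the ``non-linearity'' inference doing the decisive work -- the proposal is not yet a proof.
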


\begin{proof}
We will make the proof using a combination of ZX-rules and matrix calculus on the interpretations of the diagrams. Consider the following diagrams:
\[\begin{tikzpicture}
	\begin{pgfonlayer}{nodelayer}
		\node [style=none] (0) at (-7.25, -0) {$D_1 :=$};
		\node [style=none] (1) at (-6, 1.25) {};
		\node [style=none] (2) at (-6, -1.25) {};
		\node [style=none] (3) at (-4.5, -0) {and};
		\node [style=gn] (4) at (-6, -0) {$\frac{\pi}{2}$};
		\node [style=rn] (5) at (-6, 0.7500001) {$\frac{\pi}{4}$};
		\node [style=rn] (6) at (-6, -0.7500001) {$\frac{\pi}{4}$};
		\node [style=gn] (7) at (-1, 0.5000001) {$\alpha$};
		\node [style=none] (8) at (-3.25, -0) {$D_2 :=$};
		\node [style=gn] (9) at (-1, -0.5000001) {$\alpha$};
		\node [style=none] (10) at (-1, 1.25) {};
		\node [style=rn] (11) at (-1, -0) {$\frac{\pi}{3}$};
		\node [style=none] (12) at (-1, -1.25) {};
		\node [style=rn] (13) at (-2.25, 0.25) {};
		\node [style=gn] (14) at (-2.25, -0.25) {};
		\node [style=rn] (15) at (-1.75, 0.2500001) {$\pi$};
		\node [style=gn] (16) at (-1.75, -0.2500001) {$\theta$};
	\end{pgfonlayer}
	\begin{pgfonlayer}{edgelayer}
		\draw (1.center) to (2.center);
		\draw (10.center) to (12.center);
		\draw [bend right=45, looseness=1.00] (13) to (14);
		\draw (15) to (16);
		\draw [bend left=45, looseness=1.00] (13) to (14);
		\draw (14) to (13);
	\end{pgfonlayer}
\end{tikzpicture}\]
We will try to express $D_1$ in the form $D_2$. One can notice that $\interp{D_1}=\interp{D_2}$ when $\alpha_0=\frac{\pi}{2}-\arccos\left(\sqrt{\frac{2}{3}}\right)$ and $\theta_0=\arccos\left(\frac{\sqrt{2}}{2}+\frac{\sqrt{3}}{6}\right)$. We can even show:

\textbf{No more than four values for $\alpha$ are possible when decomposing $D_1$ in the form $D_2$:}\\
When applying the $\pi$ green state at the top and the $0$ green state at the bottom of both $D_1$ and $D_2$, we end up with:
\[D_1:\quad\begin{tikzpicture}
	\begin{pgfonlayer}{nodelayer}
		\node [style=gn] (0) at (-5.75, -0) {$\frac{\pi}{2}$};
		\node [style=rn] (1) at (-5.75, 0.7499999) {$\frac{\pi}{4}$};
		\node [style=rn] (2) at (-5.75, -0.7500001) {$\frac{\pi}{4}$};
		\node [style=gn] (3) at (-5.75, 1.25) {$\pi$};
		\node [style=gn] (4) at (-5.75, -1.25) {};
		\node [style=none] (5) at (-5, -0) {=};
		\node [style=gn] (6) at (-4.25, -0.2500001) {$\pi$};
		\node [style=rn] (7) at (-4.25, -1) {$\frac{\pi}{4}$};
		\node [style=gn] (8) at (-4, 0.7499999) {$\frac{-\pi}{2}$};
		\node [style=rn] (9) at (-3.75, -0.2500001) {};
		\node [style=gn] (10) at (-3.75, -0.75) {};
		\node [style=none] (11) at (-5, -0.75) {\scriptsize \bo};
		\node [style=none] (12) at (-5, -0.25) {\scriptsize \sone};
		\node [style=none] (13) at (-5, -0.5) {\scriptsize \ko};
	\end{pgfonlayer}
	\begin{pgfonlayer}{edgelayer}
		\draw (3) to (4);
		\draw (6) to (7);
		\draw [bend right=45, looseness=1.00] (9) to (10);
		\draw [bend left=45, looseness=1.00] (9) to (10);
		\draw (10) to (9);
	\end{pgfonlayer}
\end{tikzpicture}\qquad\qquad D_2:\quad\begin{tikzpicture}
	\begin{pgfonlayer}{nodelayer}
		\node [style=gn] (0) at (-5.5, 0.5000001) {$\alpha$};
		\node [style=gn] (1) at (-5.5, -0.5000001) {$\alpha$};
		\node [style=rn] (2) at (-5.5, -0) {$\frac{\pi}{3}$};
		\node [style=rn] (3) at (-6.75, 0.25) {};
		\node [style=gn] (4) at (-6.75, -0.25) {};
		\node [style=rn] (5) at (-6.25, 0.25) {$\pi$};
		\node [style=gn] (6) at (-6.25, -0.25) {$\theta$};
		\node [style=gn] (7) at (-5.5, 1) {$\pi$};
		\node [style=gn] (8) at (-5.5, -1) {};
		\node [style=none] (9) at (-4.75, -0) {=};
		\node [style=gn] (10) at (-2.75, -0.7500001) {$\alpha$};
		\node [style=rn] (11) at (-3.5, 0.25) {$\pi$};
		\node [style=rn] (12) at (-2.75, -0.25) {$\frac{\pi}{3}$};
		\node [style=rn] (13) at (-4, 0.25) {};
		\node [style=gn] (14) at (-3.5, -0.25) {$\theta$};
		\node [style=gn] (15) at (-4, -0.25) {};
		\node [style=gn] (16) at (-2.75, 0.5000001) {$\alpha{+}\pi$};
		\node [style=rn] (17) at (-1.25, -0.2500001) {};
		\node [style=gn] (18) at (0, -0.5) {$2\alpha{+}\pi$};
		\node [style=gn] (19) at (-0.7499998, -0.75) {$\theta$};
		\node [style=rn] (20) at (0, 0.5) {$\frac{\pi}{3}$};
		\node [style=rn] (21) at (-0.7499998, -0.2500001) {$\pi$};
		\node [style=gn] (22) at (-1.25, -0.75) {};
		\node [style=gn] (23) at (-1.25, 0.2500001) {};
		\node [style=rn] (24) at (-1.25, 0.75) {};
		\node [style=gn] (25) at (-0.7499998, 0.2500001) {};
		\node [style=rn] (26) at (-0.7499998, 0.75) {};
		\node [style=none] (27) at (-2, -0) {=};
		\node [style=none] (28) at (-4.75, -0.25) {\scriptsize \sone};
		\node [style=none] (29) at (-2, -0.25) {\scriptsize \suppt};
		\node [style=none] (30) at (-2, -0.4999999) {\scriptsize \hlaw};
	\end{pgfonlayer}
	\begin{pgfonlayer}{edgelayer}
		\draw [bend right=45, looseness=1.00] (3) to (4);
		\draw (5) to (6);
		\draw [bend left=45, looseness=1.00] (3) to (4);
		\draw (4) to (3);
		\draw (7) to (8);
		\draw [bend right=45, looseness=1.00] (13) to (15);
		\draw (11) to (14);
		\draw [bend left=45, looseness=1.00] (13) to (15);
		\draw (15) to (13);
		\draw (16) to (10);
		\draw [bend right=45, looseness=1.00] (17) to (22);
		\draw (21) to (19);
		\draw [bend left=45, looseness=1.00] (17) to (22);
		\draw (22) to (17);
		\draw [bend right=45, looseness=1.00] (24) to (23);
		\draw [bend left=45, looseness=1.00] (24) to (23);
		\draw (23) to (24);
		\draw [bend right=45, looseness=1.00] (26) to (25);
		\draw [bend left=45, looseness=1.00] (26) to (25);
		\draw (25) to (26);
	\end{pgfonlayer}
\end{tikzpicture}\]
In order for their interpretations to be equal, we need:
\[ e^{i\frac{\pi}{4}}\sqrt{2}e^{-i\frac{\pi}{4}}=\frac{1}{2}e^{i\theta}(1+e^{i\frac{\pi}{3}})(1+e^{i(2\alpha+\pi)}) \quad\text{i.e.}\quad \frac{\sqrt{2}}{2} = e^{i(\theta+\frac{\pi}{6}+\alpha+\frac{\pi}{2})}\cos\left(\frac{\pi}{6}\right)\cos\left(\alpha+\frac{\pi}{2}\right)\]
So using the modulus, $\abs{\cos\left(\alpha+\frac{\pi}{2}\right)}=\sqrt{\frac{2}{3}}$, thus $\alpha=\pm\frac{\pi}{2}\pm\arccos\left(\sqrt{\frac{2}{3}}\right)\mod 2\pi$.

\textbf{$\alpha$ is not a rational multiple of $\pi$:}\\
One can check that $e^{i\alpha_0}$ is a root of the polynomial $3X^4+2X^2+3$ which is irreducible in $\mathbb{Z}$ (since $30203$ is a prime number, thanks to Cohn's irreducibility criterion, $3X^4+2X^2+3$ is irreducible in $\mathbb{Z}$). The polynomial is not cyclotomic because its coefficient of higher degree is not $1$, hence $e^{i\alpha_0}$ is not a root of unity, i.e.~$\alpha_0$ is not a rational multiple of $\pi$. As a consequence, none of the $\pm\frac{\pi}{2}\pm\arccos\left(\sqrt{\frac{2}{3}}\right)$ are rational multiples of $\pi$.

Now, let us put back all the pieces together. Assume $ZX\vdash D_1=D_2$ for some $\alpha$ and $\theta$. Then there exists a finite sequence of rules of the ZX that transforms $D_2$ into $D_1$. We define $q\in\mathbb{N}^*$ such that for any \suppc{p} in the sequence, $p\leq q$, and 
$S=\{k(q+4)!+1~|~k\in\mathbb{N}\}$.\\
For all $q'\in S$ and for $\interpspe{.}_{q'}$ the interpretation that multiplies the angles by $q'$, the rules of the ZX are preserved, $\interpspe{D_1}_{q'}=D_1$,  and $\interpspe{D_2}_{q'}$ is in the form $D_2$. Indeed:
\begin{itemize}
\item $(q+4)!$ is clearly a multiple of $8$ so $q'\frac{\pi}{4}=\frac{\pi}{4} \mod 2\pi$, so all the rules but the supplementarity rules hold, and $\interpspe{D_1}_{q'}=D_1$ since it is in the \frag{4}.
\item for any $p\in \mathbb{P}$ such that $p\leq q$, then $(q+4)! = 0 \mod p$ so $gcd(p, q') =1$, which implies that \suppc{p} also holds.
\item $(q+4)!$ is a multiple of $6$ so $q'\frac{\pi}{3}=\frac{\pi}{3} \mod 2\pi$ hence $\interpspe{D_2}_{q'}$ is in the form $D_2$.
\end{itemize}
Then, 
$ZX\vdash D_1=\interpspe{D_2}_{q'}$.\\
$D_1$ has a finite number of decompositions in the form $D_2$, but $\left\lbrace\interpspe{D_2}_{q'}~|~q'\in S\right\rbrace$ is infinite -- since $\alpha$ is an irrational multiple of $\pi$ -- and all these diagrams are decompositions of $D_1$ in the form $D_2$, hence we end up with a contradiction.\\
So $ZX\nvdash D_1=D_2$, which proves the incompleteness.
\end{proof}

%
%
%
%
\bibliography{Cyclotomic-Supplementarity}

\begin{thebibliography}{10}

\bibitem{pi_2-complete}
Miriam Backens.
\newblock The zx-calculus is complete for stabilizer quantum mechanics.
\newblock {\em New Journal of Physics}, 16(9):093021, 2014.
\newblock URL: \url{http://stacks.iop.org/1367-2630/16/i=9/a=093021}, \href
  {http://dx.doi.org/10.1088/1367-2630/16/9/093021}
  {\path{doi:10.1088/1367-2630/16/9/093021}}.

\bibitem{pi_4-single-qubit}
Miriam Backens.
\newblock The zx-calculus is complete for the single-qubit clifford+t group.
\newblock {\em Electronic Proceedings in Theoretical Computer Science}, 2014.
\newblock \href {http://arxiv.org/abs/arXiv:1412.8553}
  {\path{arXiv:arXiv:1412.8553}}, \href
  {http://dx.doi.org/10.4204/EPTCS.172.21} {\path{doi:10.4204/EPTCS.172.21}}.

\bibitem{scalar-completeness}
Miriam Backens.
\newblock Making the stabilizer zx-calculus complete for scalars.
\newblock {\em Electronic Proceedings in Theoretical Computer Science}, 2015.
\newblock \href {http://arxiv.org/abs/arXiv:1507.03854}
  {\path{arXiv:arXiv:1507.03854}}, \href
  {http://dx.doi.org/10.4204/EPTCS.195.2} {\path{doi:10.4204/EPTCS.195.2}}.

\bibitem{toy-model-graph}
Miriam Backens and Ali~Nabi Duman.
\newblock A complete graphical calculus for spekkens' toy bit theory.
\newblock {\em Foundations of Physics}, pages 1--34, 2014.
\newblock \href {http://arxiv.org/abs/arXiv:1411.1618}
  {\path{arXiv:arXiv:1411.1618}}, \href
  {http://dx.doi.org/10.1007/s10701-015-9957-7}
  {\path{doi:10.1007/s10701-015-9957-7}}.

\bibitem{simplified-stabilizer}
Miriam Backens, Simon Perdrix, and Quanlong Wang.
\newblock A simplified stabilizer zx-calculus.
\newblock {\em Electronic Proceedings in Theoretical Computer Science}, 2016.
\newblock \href {http://arxiv.org/abs/arXiv:1602.04744}
  {\path{arXiv:arXiv:1602.04744}}, \href
  {http://dx.doi.org/10.4204/EPTCS.236.1} {\path{doi:10.4204/EPTCS.236.1}}.

\bibitem{Coecke2008}
Bob Coecke.
\newblock Axiomatic description of mixed states from selinger's
  cpm-construction.
\newblock {\em Electron. Notes Theor. Comput. Sci.}, 210:3--13, July 2008.
\newblock URL: \url{http://dx.doi.org/10.1016/j.entcs.2008.04.014}, \href
  {http://dx.doi.org/10.1016/j.entcs.2008.04.014}
  {\path{doi:10.1016/j.entcs.2008.04.014}}.

\bibitem{interacting}
Bob Coecke and Ross Duncan.
\newblock Interacting quantum observables: categorical algebra and
  diagrammatics.
\newblock {\em New Journal of Physics}, 13(4):043016, 2011.
\newblock URL: \url{http://stacks.iop.org/1367-2630/13/i=4/a=043016}, \href
  {http://dx.doi.org/10.1088/1367-2630/13/4/043016}
  {\path{doi:10.1088/1367-2630/13/4/043016}}.

\bibitem{w-in-zx}
Bob Coecke and Bill Edwards.
\newblock Three qubit entanglement within graphical z/x-calculus.
\newblock {\em Electronic Proceedings in Theoretical Computer Science},
  52:22--33, 2011.
\newblock \href {http://arxiv.org/abs/arXiv:1103.2811}
  {\path{arXiv:arXiv:1103.2811}}, \href {http://dx.doi.org/10.4204/EPTCS.52.3}
  {\path{doi:10.4204/EPTCS.52.3}}.

\bibitem{CP12}
Bob Coecke and Simon Perdrix.
\newblock {Environment and classical channels in categorical quantum
  mechanics}.
\newblock {\em {Logical Methods in Computer Science}}, {Volume 8, Issue 4},
  November 2012.
\newblock URL: \url{http://lmcs.episciences.org/719}, \href
  {http://dx.doi.org/10.2168/LMCS-8(4:14)2012}
  {\path{doi:10.2168/LMCS-8(4:14)2012}}.

\bibitem{euler-decomp}
Ross Duncan and Simon Perdrix.
\newblock Graphs states and the necessity of euler decomposition.
\newblock {\em Mathematical Theory and Computational Practice}, 5635:167--177,
  2009.
\newblock \href {http://arxiv.org/abs/arXiv:0902.0500}
  {\path{arXiv:arXiv:0902.0500}}, \href
  {http://dx.doi.org/10.1007/978-3-642-03073-4}
  {\path{doi:10.1007/978-3-642-03073-4}}.

\bibitem{mbqc}
Ross Duncan and Simon Perdrix.
\newblock Rewriting measurement-based quantum computations with generalised
  flow.
\newblock {\em Lecture Notes in Computer Science}, 6199:285--296, 2010.
\newblock URL: \url{http://personal.strath.ac.uk/ross.duncan/papers/gflow.pdf},
  \href {http://dx.doi.org/10.1007/978-3-642-14162-1_24}
  {\path{doi:10.1007/978-3-642-14162-1_24}}.

\bibitem{pivoting}
Ross Duncan and Simon Perdrix.
\newblock Pivoting makes the zx-calculus complete for real stabilizers.
\newblock {\em Electronic Proceedings in Theoretical Computer Science}, 2013.
\newblock \href {http://arxiv.org/abs/arXiv:1307.7048}
  {\path{arXiv:arXiv:1307.7048}}, \href {http://dx.doi.org/10.4204/EPTCS.171.5}
  {\path{doi:10.4204/EPTCS.171.5}}.

\bibitem{complete}
Emmanuel Jeandel, Simon Perdrix, and Renaud Vilmart.
\newblock A complete axiomatisation of the zx-calculus for clifford+ t quantum
  mechanics.
\newblock {\em arXiv preprint arXiv:1705.11151}, 2017.

\bibitem{Y-calculus}
Emmanuel Jeandel, Simon Perdrix, and Renaud Vilmart.
\newblock Y-calculus: A language for real matrices derived from the
  zx-calculus.
\newblock In {\em Conference on Quantum Physics and Logics (QPL'17)}, 2017.

\bibitem{supplementarity}
Simon Perdrix and Quanlong Wang.
\newblock Supplementarity is necessary for quantum diagram reasoning.
\newblock In {\em 41st International Symposium on Mathematical Foundations of
  Computer Science (MFCS 2016)}, volume~58 of {\em Leibniz International
  Proceedings in Informatics (LIPIcs)}, pages 76:1--76:14, Krakow, Poland,
  August 2016.
\newblock URL: \url{https://hal.archives-ouvertes.fr/hal-01361419}, \href
  {http://dx.doi.org/10.4230/LIPIcs.MFCS.2016.76}
  {\path{doi:10.4230/LIPIcs.MFCS.2016.76}}.

\bibitem{incompleteness}
Christian Schr{\"o}der~de Witt and Vladimir Zamdzhiev.
\newblock The zx-calculus is incomplete for quantum mechanics.
\newblock {\em Electronic Proceedings in Theoretical Computer Science}, 2014.
\newblock \href {http://arxiv.org/abs/arXiv:1404.3633}
  {\path{arXiv:arXiv:1404.3633}}, \href
  {http://dx.doi.org/10.4204/EPTCS.172.20} {\path{doi:10.4204/EPTCS.172.20}}.

\bibitem{hilbert-complete}
Peter Selinger.
\newblock Finite dimensional hilbert spaces are complete for dagger compact
  closed categories.
\newblock {\em Logical Methods in Computer Science}, 8(4):1--12, 2012.
\newblock \href {http://arxiv.org/abs/arXiv:1207.6972}
  {\path{arXiv:arXiv:1207.6972}}, \href {http://dx.doi.org/10.2168/LMCS-8
  (3:06) 2012} {\path{doi:10.2168/LMCS-8 (3:06) 2012}}.

\bibitem{clifford+t}
Peter Selinger.
\newblock Quantum circuits of $t$-depth one.
\newblock {\em Phys. Rev. A}, 87:042302, Apr 2013.
\newblock URL: \url{https://link.aps.org/doi/10.1103/PhysRevA.87.042302}, \href
  {http://dx.doi.org/10.1103/PhysRevA.87.042302}
  {\path{doi:10.1103/PhysRevA.87.042302}}.

\bibitem{toy-model}
Robert Spekkens.
\newblock Evidence for the epistemic view of quantum states: A toy theory.
\newblock {\em Phys. Rev. A}, 75:032110, Mar 2007.
\newblock URL: \url{https://link.aps.org/doi/10.1103/PhysRevA.75.032110}, \href
  {http://dx.doi.org/10.1103/PhysRevA.75.032110}
  {\path{doi:10.1103/PhysRevA.75.032110}}.

\bibitem{cqm}
\url{http://cqm.wikidot.com/zx-completeness}.

\end{thebibliography}

\appendix
\section{Appendix}

\subsection{(ZO) is Derivable in ZX$_{\boldsymbol{E}}$}

The following scalar lemmas are provable in ZX \cite{scalar-completeness,simplified-stabilizer}:\\
\begin{minipage}[t]{0.5\textwidth}
\begin{lemma}
\label{lem:2isroot2square}
\[ \begin{tikzpicture}
	\begin{pgfonlayer}{nodelayer}
		\node [style=gn] (0) at (0.7500001, 0.25) {};
		\node [style=gn] (1) at (0.25, 0.25) {};
		\node [style=rn] (2) at (0.25, -0.25) {};
		\node [style=rn] (3) at (0.7500001, -0.25) {};
		\node [style=none] (4) at (1.5, 0) {$=$};
		\node [style=gn] (5) at (2.25, -0) {};
	\end{pgfonlayer}
	\begin{pgfonlayer}{edgelayer}
		\draw (2) to (1);
		\draw (3) to (0);
	\end{pgfonlayer}
\end{tikzpicture}\]
\end{lemma}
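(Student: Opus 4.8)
The identity is the diagrammatic form of the scalar equation $\sqrt2\cdot\sqrt2=2$: the two disjoint green--red components on the left both interpret to $\sqrt2$ (a one-legged green $0$-dot is the state $(1,1)^{\mathsf T}$, a one-legged red $0$-dot is the effect $(\sqrt2,0)$, and their composition is $\sqrt2$), whereas the bare green $0$-dot on the right interprets to $1+e^{i0}=2$. Soundness is therefore immediate, and the whole task is to produce a derivation inside $ZX$ (equivalently inside $ZX\setminus\{\iv,\zo\}$, since those scalar rules are precisely what we want to avoid relying on here).

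The plan is to bring both sides to a common reduced scalar diagram. I would start from the green $0$-dot and use \sone backwards to split it into two $0$-angle green dots joined by a single wire; this is free, because spider fusion is reversible and introduces no scalar. On the other side I would rewrite each red dot using the colour-change rule \h, so that a green--red pair is read as ``green--$H$--green''. What then remains is to reconcile the number of connected components: the right-hand diagram is a single connected scalar, while the left-hand diagram is a tensor of two independent scalars. The rule that mediates this is the Hopf law \hlaw (together, if needed, with the bialgebra \bo), which separates a green and a red dot joined by a double edge into disconnected dots while emitting exactly the two $\sqrt2$ factors visible in its statement. I would reshape the reduced diagram into the double-edge pattern that \hlaw consumes, apply it once, and read off the two green--red components, tidying the leftover wires with \stwo and \sone.

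The main obstacle is entirely the bookkeeping of components and scalars, not any deep structural fact. Spider fusion \sone merges or splits dots only across an existing wire, so it can never create the two \emph{disjoint} components appearing on the left; the only axioms in Figure~\ref{fig:ZX_rules} that change the number of connected components are \hlaw and the bialgebra rules, and these are exactly the steps at which the numeric content $2=\sqrt2\cdot\sqrt2$ is produced. The delicate points are therefore to manufacture the precise double-edge configuration required by \hlaw (rather than merely some green/red pair) and to check that the scalar it emits is $(\sqrt2)^2$ with no residual factor of $\sqrt2$ or $\tfrac12$ --- in particular to orient \hlaw in its splitting rather than merging direction. Once the scalars are matched, everything else is routine rewriting with \sone, \stwo and \h.
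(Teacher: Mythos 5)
You have the semantics exactly right (each green--red pair denotes $\sqrt2$, the bare green dot denotes $2$), but note first that the paper itself contains no derivation of this lemma: it is imported from \cite{scalar-completeness,simplified-stabilizer} with the remark that these scalar lemmas ``are provable in ZX'', so the comparison point is those works, not anything in this text. Judged as a derivation, your sketch has a genuine gap at its one non-routine step, and it comes from misreading the orientation of \hlaw as stated here. In this paper's formulation, the two $\sqrt2$-pairs stand on the \emph{same} side as the double-edge component: \hlaw says that $(\text{pair})^{\otimes 2}$ tensored with the red--green double-edge component equals the disconnected (red effect)\,$\otimes$\,(green state). Applying \hlaw in the disconnecting direction therefore \emph{consumes} two pairs rather than ``emitting exactly the two $\sqrt2$ factors visible in its statement''; applying it the other way does create the two pairs, but leaves behind a connected double-edge red--green scalar component that must still be shown equal, up to the remaining factors, to the green $0$-dot --- a scalar identity of precisely the kind you are trying to prove, and one which \sone, \stwo{} and \h{} cannot supply since, as you yourself observe, they do not change connected components. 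So the plan is circular at its crux. To see concretely that \hlaw alone cannot do the work: close \hlaw off by plugging a green state on its input and a green effect on its output; after spider fusion, \h, and one (colour-swapped) application of \bo, \emph{both} sides reduce to (one pair)\,$\otimes$\,(green $0$-dot), i.e.\ the closure of \hlaw is a tautology and never relates $(\text{pair})^{\otimes 2}$ to the green $0$-dot.

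Two smaller points. Your parenthetical ``equivalently inside $ZX\setminus\{\iv,\zo\}$'' is unsupported: the derivations in \cite{scalar-completeness,simplified-stabilizer} do use the scalar axioms, \iv{} in particular, and nothing in your argument shows they can be avoided. It is also unnecessary: in this paper the lemma is only invoked in the proof of Proposition \ref{prop:zo-deducible}, after \iv{} has already been derived from $ZX_E$, so a proof using \iv{} is perfectly non-circular --- and indeed \iv{} (a single-edge red--green pair times a triple-edge red--green pair equals the empty diagram) is exactly the kind of component-destroying, scalar-exact rule your sketch is missing. Finally, your implicit worry about the two flavours of pair (green-on-top versus red-on-top, as they appear in \bo{} and \hlaw{} versus the lemma) is harmless: two applications of \h{} rewrite both to green--$H$--green, so they are interchangeable; the real obstruction is the missing scalar step above, not this bookkeeping.
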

\end{minipage}
\begin{minipage}[t]{0.5\textwidth}
\begin{lemma}
\label{lem:bicolor-alpha-0}
\[ \begin{tikzpicture}
	\begin{pgfonlayer}{nodelayer}
		\node [style=gn] (0) at (-0.7500001, 0.25) {};
		\node [style=rn] (1) at (-0.7500001, -0.25) {$\alpha$};
		\node [style=none] (2) at (0, -0) {=};
		\node [style=rn] (3) at (0.7500001, -0.25) {};
		\node [style=gn] (4) at (0.7500001, 0.25) {};
	\end{pgfonlayer}
	\begin{pgfonlayer}{edgelayer}
		\draw (1) to (0);
		\draw (3) to (4);
	\end{pgfonlayer}
\end{tikzpicture}\]
\end{lemma}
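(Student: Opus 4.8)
Both sides are scalars. The green node is a one-legged $Z$-spider of angle $0$, so $\interp{\cdot}=(1,1)^{\mathsf T}$, while the red node is a one-legged $X$-spider read as an effect, with $\interp{\cdot}=\frac{1}{\sqrt2}\big(1+e^{i\alpha},\,1-e^{i\alpha}\big)$. Their composite is $\frac{1}{\sqrt2}\big((1+e^{i\alpha})+(1-e^{i\alpha})\big)=\sqrt2$, independent of $\alpha$; this both establishes soundness and pinpoints why $\alpha$ vanishes: the one-legged green $0$-node is a scalar multiple of the $X$-eigenstate $\lvert+\rangle$, on which any $X$-phase acts trivially. The plan is to realise this cancellation purely syntactically, lowering $\alpha$ to $0$ with the colour-change rule \h, spider fusion \sone, and the copy/bialgebra rules \bo, \bt.

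\textbf{Splitting off the red phase.} First I would apply \sone\ in reverse to the red node, splitting the red effect of angle $\alpha$ into a red $1\to1$ phase node carrying $\alpha$ on the connecting wire, followed by a red effect of angle $0$; re-fusing recovers the original, so the step is sound and reversible. It then suffices to show that the red $\alpha$-phase node sitting directly on the green $0$-state is absorbed, i.e.\ that the green $0$-state is a fixed point of the red phase. To expose this, I would colour-change with \h: the green $0$-state followed by its Hadamard becomes a red $0$-state (the $Z$-basis vector $\lvert0\rangle$ up to scalar), and the red $\alpha$-phase node becomes a green $\alpha$-phase node. After cancelling the doubled Hadamard produced on the shared wire, the configuration is a red $0$-node feeding the single input of a green $\alpha$-phase node.

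\textbf{Killing the phase on a basis state.} The final, and only genuinely content-bearing, step is that a green phase node with a $\lvert0\rangle$ (a red $0$-node) on one leg loses its phase: since $\lvert0\rangle$ lies in the computational basis that diagonalises the green spider, $\alpha$ multiplies only the absent $\lvert1\rangle$-component and drops out. Diagrammatically this is the state-copy consequence of \bo\ and \bt\ — the degree-one red $0$-node is copied through the green spider, and the copy it leaves behind carries angle $0$. Undoing the colour change then returns the green $0$-node against a red $0$-node, which is exactly the right-hand side. I expect this last copy step to be the main obstacle, precisely because it is where the arbitrary real angle $\alpha$ must be discarded using rules (\bo, \bt) that mention only the angles $0$ and $\pi$; the resolution is that $\alpha$ is never manipulated as a phase at all — it is annihilated the instant the phase node meets a basis state, which is why the identity holds uniformly in $\alpha$.
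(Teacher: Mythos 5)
Your soundness computation is correct, and your first two reductions are unproblematic: splitting the red $\alpha$-effect with \sone, then using \h (together with $H\circ H=\mathbb{I}$, which is indeed derivable by applying \h to the $1\to 1$ zero-phase node and cancelling with \stwo) correctly reduce the lemma to the claim that a green $\alpha$-phase node absorbs a red $0$-state. The genuine gap is in your final step --- the one you yourself flag as the only content-bearing one. The rule \bo copies a red $0$-state only through a \emph{phase-free} green node, so before copying you must split the phase off, e.g.\ writing the green $\alpha$-node as a phase-free green spider with a separate green $\alpha$-state plugged into an extra leg (\sone\ in reverse). But then applying \bo\ sends one of the red $0$-copies onto the very leg holding the green $\alpha$-state, producing the closed scalar ``green $\alpha$ against red $0$'' --- and via \h\ this is exactly the left-hand side of the lemma again, colour-swapped. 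Worse, in this paper's scalar-explicit presentation \bo\ itself carries a scalar correction (the red--green pair on its left-hand side in Figure \ref{fig:ZX_rules}), so the copy step additionally leaves a $\sqrt2$-type scalar that can only be cancelled through \iv\ by already knowing that the $\alpha$-dependent scalar equals its $\alpha=0$ instance --- which is, once more, the lemma. Your closing justification (``$\alpha$ is annihilated the instant the phase node meets a basis state'') is a statement about $\interp{\cdot}$, i.e.\ soundness, not a derivation; the circularity at this point is precisely why the identity is nontrivial.

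For comparison, the paper does not derive this lemma at all: it is imported from the scalar-completeness literature \cite{scalar-completeness,simplified-stabilizer}, where such $\alpha$-independent scalar identities are established with dedicated machinery rather than by a direct copy argument. So your proposal is not a different correct route; it reproduces the standard reduction but leaves the crux unproven. To repair it you would need an independent derivation that the scalar consisting of a degree-one green $\alpha$-node against a degree-one red $0$-node does not depend on $\alpha$ --- using the rules of Figure \ref{fig:ZX_rules} without \e, since the lemma is invoked in the proof of Proposition \ref{prop:zo-deducible} where only $ZX$-derivability (as asserted by the cited papers) keeps the argument non-circular.
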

\end{minipage}

\begin{proof}[Proof of Proposition \ref{prop:zo-deducible}]
\phantomsection\label{prf:zo-deducible}
First, using \stwo, \hlaw, \supp, \bo and \sone, and then applying a red state at the top:
\[ \fit{\begin{tikzpicture}
	\begin{pgfonlayer}{nodelayer}
		\node [style=none] (0) at (-6.5, -0.7500001) {};
		\node [style=gn] (1) at (-7, -0) {$\pi$};
		\node [style=none] (2) at (-6, -0) {=};
		\node [style=none] (3) at (-6.5, 0.7500001) {};
		\node [style=none] (4) at (-4.75, -0.7500001) {};
		\node [style=gn] (5) at (-5.25, -0) {$\pi$};
		\node [style=none] (6) at (-4.75, 0.7500001) {};
		\node [style=rn] (7) at (-4.75, -0) {};
		\node [style=none] (8) at (-2.75, -0.7500001) {};
		\node [style=gn] (9) at (-3.25, -0) {$\pi$};
		\node [style=none] (10) at (-2.75, 0.7500001) {};
		\node [style=none] (11) at (-4, -0) {=};
		\node [style=rn] (12) at (-2.75, -0) {};
		\node [style=rn] (13) at (-2.25, -0.75) {};
		\node [style=gn] (14) at (-2.25, -0.25) {};
		\node [style=none] (15) at (-1.5, -0) {=};
		\node [style=none] (16) at (2.25, 0.7499999) {};
		\node [style=gn] (17) at (1.75, -0) {$\pi$};
		\node [style=gn] (18) at (2.25, -0.2500001) {};
		\node [style=none] (19) at (2.25, -0.7499999) {};
		\node [style=gn] (20) at (2.25, 0.2500001) {};
		\node [style=gn] (21) at (-2.25, 0.75) {};
		\node [style=rn] (22) at (-2.25, 0.2500001) {};
		\node [style=gn] (23) at (0.2499997, 0.7499999) {};
		\node [style=rn] (24) at (0.2499997, -0.7499999) {};
		\node [style=rn] (25) at (-0.2499997, -0) {};
		\node [style=rn] (26) at (0.2499997, 0.2500001) {};
		\node [style=none] (27) at (-0.2499997, -0.7499999) {};
		\node [style=none] (28) at (-0.2499997, 0.7499999) {};
		\node [style=gn] (29) at (0.2499997, -0.2500001) {};
		\node [style=gn] (30) at (-0.7499999, -0.2500001) {$\pi$};
		\node [style=gn] (31) at (-0.7499999, 0.2500001) {};
		\node [style=none] (32) at (0.9999996, -0) {=};
		\node [style=none] (33) at (3.25, -0) {$\implies$};
		\node [style=none] (34) at (6.25, -0.7500001) {};
		\node [style=gn] (35) at (6, 0.25) {$\pi$};
		\node [style=gn] (36) at (6.25, -0.25) {};
		\node [style=rn] (37) at (6.75, 0.5000001) {};
		\node [style=gn] (38) at (6.75, -0) {};
		\node [style=none] (39) at (5.25, -0) {=};
		\node [style=gn] (40) at (4.25, 0.25) {$\pi$};
		\node [style=rn] (41) at (4.5, -0.25) {};
		\node [style=none] (42) at (4.5, -0.7500001) {};
	\end{pgfonlayer}
	\begin{pgfonlayer}{edgelayer}
		\draw (3.center) to (0.center);
		\draw (6.center) to (4.center);
		\draw (10.center) to (8.center);
		\draw (14) to (13);
		\draw (16.center) to (20);
		\draw (18) to (19.center);
		\draw [bend right=45, looseness=1.25] (9) to (12);
		\draw [bend left=45, looseness=1.25] (9) to (12);
		\draw (21) to (22);
		\draw (28.center) to (27.center);
		\draw (29) to (24);
		\draw (23) to (26);
		\draw (31) to (25);
		\draw (25) to (30);
		\draw (36) to (34.center);
		\draw (38) to (37);
		\draw (41) to (42.center);
	\end{pgfonlayer}
\end{tikzpicture}}\]
Then, using \iv, \sone, the previous result and lemma \ref{lem:bicolor-alpha-0}:
\[ \begin{tikzpicture}
	\begin{pgfonlayer}{nodelayer}
		\node [style=none] (0) at (-3.75, -0) {=};
		\node [style=gn] (1) at (-3, -0) {$\pi$};
		\node [style=rn] (2) at (-2, -0.25) {};
		\node [style=gn] (3) at (-2, 0.25) {};
		\node [style=gn] (4) at (-5, -0) {$\pi$};
		\node [style=gn] (5) at (-2.5, -0.25) {$\alpha$};
		\node [style=gn] (6) at (-1.5, 0.25) {};
		\node [style=rn] (7) at (-1.5, -0.25) {};
		\node [style=none] (8) at (-0.75, -0) {=};
		\node [style=gn] (9) at (-4.5, -0) {$\alpha$};
		\node [style=gn] (10) at (-2.5, 0.25) {};
		\node [style=gn] (11) at (1, 0.25) {};
		\node [style=gn] (12) at (0.5, -0.25) {$\alpha$};
		\node [style=rn] (13) at (1, -0.25) {};
		\node [style=gn] (14) at (0, -0) {$\pi$};
		\node [style=rn] (15) at (0.5, 0.25) {};
		\node [style=gn] (16) at (3, -0.25) {};
		\node [style=gn] (17) at (2.5, -0) {$\pi$};
		\node [style=none] (18) at (1.75, -0) {=};
		\node [style=rn] (19) at (3.5, -0.25) {};
		\node [style=gn] (20) at (3.5, 0.25) {};
		\node [style=rn] (21) at (3, 0.25) {};
		\node [style=none] (22) at (4.25, -0) {=};
		\node [style=gn] (23) at (5, -0) {$\pi$};
	\end{pgfonlayer}
	\begin{pgfonlayer}{edgelayer}
		\draw [bend left=45, looseness=1.00] (3) to (2);
		\draw [bend right=45, looseness=1.00] (3) to (2);
		\draw (2) to (3);
		\draw (7) to (6);
		\draw (10) to (5);
		\draw [bend left=45, looseness=1.00] (11) to (13);
		\draw [bend right=45, looseness=1.00] (11) to (13);
		\draw (13) to (11);
		\draw (15) to (12);
		\draw [bend left=45, looseness=1.00] (20) to (19);
		\draw [bend right=45, looseness=1.00] (20) to (19);
		\draw (19) to (20);
		\draw (21) to (16);
	\end{pgfonlayer}
\end{tikzpicture}\]
Notice that so far we have not used \e. It becomes necessary in the following, since $\invR{.}$ does not hold there. Using \e, the previous results, and lemmas \ref{lem:bicolor-alpha-0} and \ref{lem:2isroot2square}:
\[ {\begin{tikzpicture}
	\begin{pgfonlayer}{nodelayer}
		\node [style=gn] (0) at (-7, -0) {$\pi$};
		\node [style=gn] (1) at (-6.5, 0.25) {};
		\node [style=rn] (2) at (-6.5, -0.25) {};
		\node [style=none] (3) at (-5.75, -0) {=};
		\node [style=rn, inner sep=0] (4) at (-3.75, -0.5000001) {$\frac{-\pi}{4}$};
		\node [style=gn] (5) at (-3.75, 0.5000001) {$\frac{\pi}{4}$};
		\node [style=rn] (6) at (-4.5, -0.25) {};
		\node [style=gn] (7) at (-5, -0) {$\pi$};
		\node [style=gn] (8) at (-4.5, 0.25) {};
		\node [style=gn] (9) at (-1.25, 0.7500001) {$\frac{\pi}{4}$};
		\node [style=gn] (10) at (-1.75, 0.25) {};
		\node [style=none] (11) at (-3, -0) {=};
		\node [style=gn] (12) at (-2.25, -0) {$\pi$};
		\node [style=rn] (13) at (-1.75, -0.25) {};
		\node [style=gn] (14) at (-1.25, 0.25) {};
		\node [style=gn] (15) at (-1.25, -0.25) {};
		\node [style=gn] (16) at (0.25, -0) {$\pi$};
		\node [style=rn] (17) at (0.7500001, -0.7500001) {};
		\node [style=gn] (18) at (1.25, -0.25) {};
		\node [style=gn] (19) at (1.25, 0.5000001) {$\frac{\pi}{4}$};
		\node [style=rn] (20) at (1.25, -0.7500001) {};
		\node [style=gn] (21) at (0.7500001, -0.25) {};
		\node [style=none] (22) at (-0.5000001, -0) {=};
		\node [style=none] (23) at (2, -0) {=};
		\node [style=gn] (24) at (4.75, -0) {$\pi$};
		\node [style=rn, inner sep={-0.2 pt}] (25) at (-1.25, -0.7500001) {$\frac{-\pi}{4}$};
		\node [style=gn] (26) at (3.25, -0) {};
		\node [style=gn] (27) at (2.75, -0) {$\pi$};
		\node [style=none] (28) at (4, -0) {=};
	\end{pgfonlayer}
	\begin{pgfonlayer}{edgelayer}
		\draw (2) to (1);
		\draw (4) to (5);
		\draw (6) to (8);
		\draw (13) to (10);
		\draw (9) to (14);
		\draw (17) to (21);
		\draw (18) to (20);
		\draw (15) to (25);
	\end{pgfonlayer}
\end{tikzpicture}}\]
Finally:\\\vspace{-0.5em}\\
\indent$ \begin{tikzpicture}
	\begin{pgfonlayer}{nodelayer}
		\node [style=none] (0) at (-3, -0.75) {};
		\node [style=gn] (1) at (-3.5, -0) {$\pi$};
		\node [style=none] (2) at (-2.25, -0) {=};
		\node [style=none] (3) at (-3, 0.75) {};
		\node [style=none] (4) at (-0.25, -0) {=};
		\node [style=rn] (5) at (1.5, -0.25) {};
		\node [style=gn] (6) at (1.5, 0.25) {};
		\node [style=none] (7) at (2.25, -0) {=};
		\node [style=none] (8) at (-1, 0.75) {};
		\node [style=gn] (9) at (-1.5, -0) {$\pi$};
		\node [style=gn] (10) at (-1, -0.25) {};
		\node [style=none] (11) at (-1, -0.75) {};
		\node [style=gn] (12) at (-1, 0.25) {};
		\node [style=gn] (13) at (1, 0.25) {};
		\node [style=gn] (14) at (0.5, -0) {$\pi$};
		\node [style=none] (15) at (1, -0.75) {};
		\node [style=none] (16) at (1, 0.75) {};
		\node [style=gn] (17) at (1, -0.25) {};
		\node [style=none] (18) at (3.5, -0.75) {};
		\node [style=gn] (19) at (3, -0) {$\pi$};
		\node [style=none] (20) at (3.5, 0.75) {};
		\node [style=gn] (21) at (3.5, -0.25) {};
		\node [style=rn] (22) at (3.5, 0.25) {};
	\end{pgfonlayer}
	\begin{pgfonlayer}{edgelayer}
		\draw (3.center) to (0.center);
		\draw (6) to (5);
		\draw (8.center) to (12);
		\draw (10) to (11.center);
		\draw (16.center) to (13);
		\draw (17) to (15.center);
		\draw (21) to (18.center);
		\draw (20.center) to (22);
	\end{pgfonlayer}
\end{tikzpicture}$
\end{proof}

\subsection{Cyclotomic Supplementarity}

In order to prove the soundness of \suppn, let us first define the equality:
\begin{equation}
\label{eq:supp-no-neighbours}
\begin{tikzpicture}
	\begin{pgfonlayer}{nodelayer}
		\node [style=gn, align=center] (0) at (-0.9999996, -0) {\scriptsize $\alpha{+}$\\ \scriptsize $\frac{n-1}{n}2\pi$};
		\node [style=gn] (1) at (-3.5, -0) {$~\alpha~$};
		\node [style=none] (2) at (-2, -0) {$\cdots$};
		\node [style=gn] (3) at (-2.75, -0) {\scriptsize $\alpha{+}\frac{2\pi}{n}$};
		\node [style=none] (4) at (0, -0) {=};
		\node [style=gn, align=center] (5) at (0.9999996, -0) {\scriptsize $n\alpha$+\\\scriptsize $(n{-}1)\pi$};
	\end{pgfonlayer}
\end{tikzpicture}
\end{equation}
\begin{lemma}
\label{lem:soundness-equivalence}
\suppn is sound $\forall\alpha\in\mathbb{R}$ $\qquad\Leftrightarrow\qquad$ \textnormal{(\ref{eq:supp-no-neighbours})} is sound $\forall\alpha\in\mathbb{R}$.
\end{lemma}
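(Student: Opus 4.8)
The plan is to compute the standard interpretation of both sides of \suppn as vectors of $\mathbb{C}^2$ sitting on the single free leg of the shared red neighbour, and to observe that this two-dimensional identity splits into exactly two scalar identities, each an instance of equation (\ref{eq:supp-no-neighbours}). Throughout I would write $\beta_k=\alpha+\frac{2k\pi}{n}$ for $k\in\interp{0;n-1}$ and $\gamma=n\alpha+(n-1)\pi$, and use that the red spider $R_X^{(n,1)}(0)$ is an input Hadamard layer, followed by the green spider $R_Z^{(n,1)}(0)$ (which keeps only the all-$|0\rangle$ and all-$|1\rangle$ amplitudes of its input), followed by a single Hadamard.

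First I would treat the left-hand diagram. Its $n$ green one-legged dots are the states $|0\rangle+e^{i\beta_k}|1\rangle$; pushing each through the input Hadamard layer turns it into a multiple of $(1+e^{i\beta_k})|0\rangle+(1-e^{i\beta_k})|1\rangle$, so the inner green spider extracts the all-$|0\rangle$ amplitude $P:=\prod_k(1+e^{i\beta_k})$ and the all-$|1\rangle$ amplitude $M:=\prod_k(1-e^{i\beta_k})$. The same computation on the right-hand diagram, whose single green dot is $|0\rangle^{\otimes n}+e^{i\gamma}|1\rangle^{\otimes n}$, produces $P':=1+e^{i\gamma}$ and $M':=1+(-1)^{n}e^{i\gamma}$. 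Since the final Hadamard and the common power of $\tfrac{1}{\sqrt2}$ are invertible, the interpretations of the two sides coincide if and only if $(P,M)=(P',M')$, that is, iff $P=P'$ and $M=M'$ hold simultaneously.

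The key observation is then that these two scalar equalities are precisely (\ref{eq:supp-no-neighbours}) read at two angles. The equality $P=P'$, namely $\prod_k(1+e^{i\beta_k})=1+e^{i\gamma}$, is exactly (\ref{eq:supp-no-neighbours}) at $\alpha$. For $M=M'$ I would use $1-e^{i\beta_k}=1+e^{i(\beta_k+\pi)}$, so that the left-hand product becomes the one occurring in (\ref{eq:supp-no-neighbours}) at angle $\alpha+\pi$ (because $\beta_k+\pi=(\alpha+\pi)+\frac{2k\pi}{n}$), together with the one-line check that $(-1)^{n}e^{i\gamma}=e^{i(n(\alpha+\pi)+(n-1)\pi)}$, which identifies $M'$ with the right-hand scalar of that same instance. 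Hence soundness of \suppn at $\alpha$ is equivalent to soundness of (\ref{eq:supp-no-neighbours}) at both $\alpha$ and $\alpha+\pi$. I then assemble the equivalence over all real angles: if (\ref{eq:supp-no-neighbours}) holds for every angle, instantiating at $\alpha$ and $\alpha+\pi$ gives \suppn for every $\alpha$; conversely \suppn at $\alpha$ yields in particular $P=P'$, i.e.\ (\ref{eq:supp-no-neighbours}) at $\alpha$, and since $\alpha\mapsto\alpha+\pi$ is a bijection of $\mathbb{R}$ the two universally quantified families coincide.

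I expect the only delicate point to be the bookkeeping of the $(-1)^{n}$ factor coming from the all-$|1\rangle$ amplitude of $(|0\rangle-|1\rangle)^{\otimes n}$: it is exactly this sign that sends the $M$-sector to the $\alpha+\pi$ instance rather than to a second copy of the $\alpha$ instance. Carrying the analysis through the substitution $e^{i\beta_k}\mapsto e^{i(\beta_k+\pi)}$, rather than by expanding the two products, keeps the argument purely structural and defers all cyclotomic evaluation to the proof of (\ref{eq:supp-no-neighbours}) itself.
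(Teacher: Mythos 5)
Your proof is correct, and at its core it is the same argument as the paper's: both resolve the single free wire of \suppn along a basis of $\mathbb{C}^2$, splitting the two-dimensional identity into exactly two scalar identities, namely (\ref{eq:supp-no-neighbours}) at $\alpha$ and at $\alpha+\pi$. The difference is purely in execution. The paper stays diagrammatic: it plugs the green $k\pi$ states, $k\in\{0,1\}$ (which form a basis), into the open wire, copies them through the red dot and pushes the resulting $\pi$'s into the green phases using \ko and \bo, so that the $k=0$ and $k=1$ cases become (\ref{eq:supp-no-neighbours}) with $\alpha$ replaced by $\alpha+k\pi$. You instead compute the standard interpretation outright, decomposing the red spider as $H^{\otimes n}$ followed by a phase-free green spider and a final $H$, and comparing the all-$|0\rangle$ and all-$|1\rangle$ amplitudes $(P,M)$ against $(P',M')$; your sign bookkeeping $(-1)^n e^{i\gamma}=e^{i(n(\alpha+\pi)+(n-1)\pi)}$ is precisely the semantic shadow of the paper's \ko step absorbing $k\pi$ into the angles. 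What the paper's route buys is reuse of established diagram identities and no explicit matrices; what yours buys is self-containedness -- since the lemma is a purely semantic equivalence, you need no appeal to the soundness of \ko and \bo -- at the cost of slightly heavier amplitude computations. Your handling of the quantifiers (soundness of \suppn at $\alpha$ being equivalent to (\ref{eq:supp-no-neighbours}) at both $\alpha$ and $\alpha+\pi$, then assembling over all real $\alpha$) matches the paper's $\forall k\in\{0,1\}$ step and is correct.
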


\begin{proof}
\phantomsection\label{prf:soundness-equivalence}
Notice that \begin{tikzpicture}
	\begin{pgfonlayer}{nodelayer}
		\node [style=gn] (0) at (-0.7499999, -0.2500001) {};
		\node [style=none] (1) at (-0.7499999, 0.2500001) {};
		\node [style=none] (2) at (0.2499997, 0.2500001) {};
		\node [style=gn] (3) at (0.2499997, -0.2500001) {$\pi$};
		\node [style=none] (4) at (-0.25, -0.25) {,};
		\node [style=none, anchor=west, xshift=-4pt] (5) at (0.5, -0) {$\left.\vphantom{\rule{1pt}{1.6em}}\right)$};
		\node [style=none, anchor=east, xshift=4pt] (6) at (-1, -0) {$\left(\vphantom{\rule{1pt}{1.6em}}\right.$};
	\end{pgfonlayer}
	\begin{pgfonlayer}{edgelayer}
		\draw (1.center) to (0);
		\draw (2.center) to (3);
	\end{pgfonlayer}
\end{tikzpicture} form a basis of $\mathbb{C}^2$. Hence, using \ko and \bo:\\
$\text{\suppn is sound } \forall\alpha\in\mathbb{R}$\\
$\phantom{.}\qquad\Leftrightarrow \begin{tikzpicture}
	\begin{pgfonlayer}{nodelayer}
		\node [style=gn] (0) at (1.75, -0.7499999) {$k\pi$};
		\node [style=gn] (1) at (-0.2499997, 0.7499999) {\scriptsize $\alpha{+}\frac{n-1}{n}2\pi$};
		\node [style=none] (2) at (0.7499999, -0.2500001) {=};
		\node [style=gn, align=center] (3) at (1.75, 1) {\scriptsize $n\alpha$+\\\scriptsize $(n{-}1)\pi$};
		\node [style=none] (4) at (-1.25, 0.2500001) {$\cdots$};
		\node [style=rn] (5) at (-1.5, -0.2500001) {};
		\node [style=gn] (6) at (-2.75, 0.7499999) {$~\alpha~$};
		\node [style=none] (7) at (1.75, 0.2500001) {$\cdots$};
		\node [style=rn] (8) at (1.75, -0.2500001) {};
		\node [style=gn] (9) at (-2, 0.7499999) {\scriptsize $\alpha{+}\frac{2\pi}{n}$};
		\node [style=gn] (10) at (-1.5, -0.7499999) {$k\pi$};
	\end{pgfonlayer}
	\begin{pgfonlayer}{edgelayer}
		\draw [bend right, looseness=1.00] (6) to (5);
		\draw [bend left, looseness=1.00] (5) to (9);
		\draw [bend left, looseness=1.00] (1) to (5);
		\draw [bend right=75, looseness=1.25] (3) to (8);
		\draw [bend right, looseness=1.00] (3) to (8);
		\draw [bend right=75, looseness=1.25] (8) to (3);
		\draw [style=none] (5) to (10);
		\draw [style=none] (8) to (0);
	\end{pgfonlayer}
\end{tikzpicture} \text{ is sound } \forall\alpha\in\mathbb{R}, \forall k \in\{0,1\}$\\
$\phantom{.}\qquad\Leftrightarrow \begin{tikzpicture}
	\begin{pgfonlayer}{nodelayer}
		\node [style=gn] (0) at (2, -0.9999999) {};
		\node [style=gn, align=center] (1) at (-0.2500001, 0.75) {\scriptsize $\alpha{+}$\\\scriptsize $\frac{n-1}{n}2\pi$};
		\node [style=none] (2) at (0.7500001, -0.5000001) {=};
		\node [style=gn, align=center] (3) at (2, 1) {\scriptsize $n\alpha$+\\\scriptsize $(n{-}1)\pi$};
		\node [style=none] (4) at (-1.25, -0) {$\cdots$};
		\node [style=rn] (5) at (-1.5, -0.5000001) {};
		\node [style=gn] (6) at (-2.75, 0.75) {$~\alpha~$};
		\node [style=none] (7) at (2.25, -0) {$\cdots$};
		\node [style=rn] (8) at (2, -0.5000001) {};
		\node [style=gn] (9) at (-2, 0.75) {\scriptsize $\alpha{+}\frac{2\pi}{n}$};
		\node [style=gn] (10) at (-1.5, -0.9999999) {};
		\node [style=gn] (11) at (-2, -0) {$k\pi$};
		\node [style=gn] (12) at (-2.75, -0) {$k\pi$};
		\node [style=gn] (13) at (-0.7500001, -0.2500001) {$k\pi$};
		\node [style=gn] (14) at (1.25, -0) {$k\pi$};
		\node [style=gn] (15) at (1.75, -0) {$k\pi$};
		\node [style=gn] (16) at (2.75, -0) {$k\pi$};
	\end{pgfonlayer}
	\begin{pgfonlayer}{edgelayer}
		\draw [style=none] (5) to (10);
		\draw [style=none] (8) to (0);
		\draw (6) to (12);
		\draw [bend right=15, looseness=1.00] (12) to (5);
		\draw (9) to (11);
		\draw [bend right, looseness=1.00] (11) to (5);
		\draw [bend left=15, looseness=1.00] (1) to (13);
		\draw [bend left=15, looseness=1.00] (13) to (5);
		\draw [in=90, out=-157, looseness=1.00] (3) to (14);
		\draw [bend right, looseness=1.00] (14) to (8);
		\draw (15) to (3);
		\draw [bend right=15, looseness=1.00] (15) to (8);
		\draw [in=90, out=-23, looseness=1.00] (3) to (16);
		\draw [bend left, looseness=1.00] (16) to (8);
	\end{pgfonlayer}
\end{tikzpicture} \text{ is sound } \forall\alpha\in\mathbb{R},\forall k \in\{0,1\}$\\
$\phantom{.}\qquad\Leftrightarrow \begin{tikzpicture}
	\begin{pgfonlayer}{nodelayer}
		\node [style=none] (0) at (0.7499998, -0) {=};
		\node [style=gn, align=center] (1) at (2, -0) {\scriptsize $n(\alpha{+}k\pi)$+\\\scriptsize $(n{-}1)\pi$};
		\node [style=none] (2) at (-1.5, -0) {$\cdots$};
		\node [style=gn, align=center] (3) at (-0.5, -0) {\scriptsize $\alpha{+}k\pi$ \\\scriptsize ${+}\frac{n-1}{n}2\pi$};
		\node [style=gn, align=center] (4) at (-2.25, -0) {\scriptsize $\alpha{+}k\pi$\\\scriptsize ${+}\frac{2\pi}{n}$};
		\node [style=gn, align=center] (5) at (-3.25, -0) {\scriptsize $\alpha{+}k\pi$};
	\end{pgfonlayer}
\end{tikzpicture} \text{ is sound } \forall\alpha\in\mathbb{R},\forall k \in\{0,1\}$\\
$\phantom{.}\qquad\Leftrightarrow \text{(\ref{eq:supp-no-neighbours}) is sound } \forall\alpha\in\mathbb{R}$
\end{proof}

\begin{proof}[Proof of Proposition \ref{prop:suppn-sound}]
\phantomsection\label{prf:suppn-sound}
According to the previous lemma, \suppn is sound $\forall\alpha\in\mathbb{R}$ if and only if (\ref{eq:supp-no-neighbours}) is sound $\forall\alpha\in\mathbb{R}$. The standard interpretation of the right part of the equality \ref{eq:supp-no-neighbours} yields $1+e^{i(n\alpha+(n-1)\pi)}$, and, using the fact that $\prod\limits_{k=0}^n (X-e^{\frac{2ik\pi}{n}}) = X^n-1$ (the $e^{\frac{2ik\pi}{n}}$ are the $n^{th}$ roots of unity), the interpretation of the left part amounts to:
\begin{align*}
\prod_{k=0}^{n-1} (1+e^{i(\alpha+\frac{2k\pi}{n})}) &= e^{in(\alpha+\pi)} \prod_{k=0}^{n-1} (e^{-i(\alpha+\pi)}-e^{\frac{2ik\pi}{n}}) = e^{in(\alpha+\pi)}(e^{-in(\alpha+\pi)}-1) \\
&= 1+e^{i(n\alpha+(n-1)\pi)}
\end{align*}
Hence (\ref{eq:supp-no-neighbours}) is sound which implies \suppn is sound.
\end{proof}

\begin{proof}[Proof of Proposition \ref{prop:cyclotomic-neighbourhood}]
\phantomsection\label{prf:cyclotomic-neighbourhood}
Consider $n$ green cyclotomic twins for $n\in\mathbb{N}^*$. Notice that since the dots are twins, they are either an independent set or a clique – if two of them are connected, they are all connected. If another green dot is part of the neighbourhood, or if the twins are connected, they can be merged using \sone. Otherwise, up to \hlaw, the neighbourhood is composed of only $m$ red dots connected with exactly $1$ wire to any of the twins. If $m=0$, the equation becomes (\ref{eq:supp-no-neighbours}), proven in lemma \ref{lem:soundness-equivalence}. If $m=1$, then the direct use of \suppn gives the result. If $m\geq 1$, we have to use the generalised bialgebra rule proven in \cite{euler-decomp}:
\[\fit{\begin{tikzpicture}
	\begin{pgfonlayer}{nodelayer}
		\node [style=none] (0) at (-5, 0.75) {$\cdots$};
		\node [style=gn] (1) at (-5.75, 0.75) {};
		\node [style=rn] (2) at (-5.5, -0.25) {};
		\node [style=gn, align=center] (3) at (-4.25, 0.75) {};
		\node [style=none] (4) at (-5.5, -1) {};
		\node [style=rn] (5) at (-4.5, -0.25) {};
		\node [style=none] (6) at (-4.5, -1) {};
		\node [style=none] (7) at (-3.5, -0) {=};
		\node [style=none] (8) at (-5, -0.5) {$\cdots$};
		\node [style=gn] (9) at (-2.25, 0.75) {$~\alpha~$};
		\node [style=none] (10) at (-2, -1) {};
		\node [style=gn, align=center] (11) at (-0.75, 0.75) {\scriptsize $\alpha{+}$\\ \scriptsize $\frac{n-1}{n}2\pi$};
		\node [style=rn] (12) at (-1.5, -0) {};
		\node [style=none] (13) at (-1.5, -1) {$\cdots$};
		\node [style=none] (14) at (-1, -1) {};
		\node [style=gn] (15) at (-1.5, -0.5000002) {};
		\node [style=none] (16) at (-5, -0.75) {$m$};
		\node [style=none] (17) at (-1.5, -1.25) {$m$};
		\node [style=none] (18) at (0, -0) {=};
		\node [style=none] (19) at (1.5, -1) {};
		\node [style=gn] (20) at (2, -0.5) {};
		\node [style=none] (21) at (2, -1.25) {$m$};
		\node [style=none] (22) at (2.5, -1) {};
		\node [style=none] (23) at (2, -1) {$\cdots$};
		\node [style=none] (24) at (2, 0.25) {$\cdots$};
		\node [style=gn, align=center] (25) at (2, 1.25) {\scriptsize $n\alpha$+\\\scriptsize $(n{-}1)\pi$};
		\node [style=rn] (26) at (2, -0) {};
		\node [style=none] (27) at (5, -0) {=};
		\node [style=none] (28) at (6.25, -0.5) {$\cdots$};
		\node [style=rn] (29) at (5.5, -0.25) {};
		\node [style=none] (30) at (5.5, -1) {};
		\node [style=none] (31) at (6.25, -0.75) {$m$};
		\node [style=gn, align=center] (32) at (6.25, 1) {\scriptsize $n\alpha$+\\\scriptsize $(n{-}1)\pi$};
		\node [style=none] (33) at (7, -1) {};
		\node [style=rn] (34) at (7, -0.25) {};
		\node [style=none] (35) at (5.75, 0.5) {$n$};
		\node [style=none] (36) at (5.75, 0.25) {$\cdots$};
		\node [style=none] (37) at (6.75, 0.5) {$n$};
		\node [style=none] (38) at (6.75, 0.25) {$\cdots$};
		\node [style=none] (39) at (-1.5, 0.75) {$\cdots$};
		\node [style=none] (40) at (2, 0.5) {$n$};
		\node [style=none] (41) at (-6.25, -0) {=};
		\node [style=none] (42) at (-7.75, 0.75) {$\cdots$};
		\node [style=none] (43) at (-8.25, -1) {};
		\node [style=none] (44) at (-7.25, -1) {};
		\node [style=rn] (45) at (-7.25, -0.25) {};
		\node [style=gn] (46) at (-8.5, 0.75) {$~\alpha~$};
		\node [style=none] (47) at (-7.75, -0.75) {$m$};
		\node [align=center, style=gn] (48) at (-7, 0.75) {\scriptsize $\alpha{+}$\\ \scriptsize $\frac{n-1}{n}2\pi$};
		\node [style=rn] (49) at (-8.25, -0.25) {};
		\node [style=none] (50) at (-7.75, -0.5) {$\cdots$};
		\node [style=gn] (51) at (-5.75, 1.5) {$~\alpha~$};
		\node [align=center, style=gn] (52) at (-4.25, 1.5) {\scriptsize $\alpha{+}$\\ \scriptsize $\frac{n-1}{n}2\pi$};
		\node [style=none] (53) at (3, -0) {=};
		\node [style=none] (54) at (4, 0.75) {$\cdots$};
		\node [style=none] (55) at (3.5, -1) {};
		\node [style=gn] (56) at (3.5, 0.75) {};
		\node [style=rn] (57) at (4.5, -0.25) {};
		\node [style=none] (58) at (4.5, -1) {};
		\node [style=none] (59) at (4, -0.75) {$m$};
		\node [align=center, style=gn] (60) at (4.5, 0.75) {};
		\node [style=rn] (61) at (3.5, -0.25) {};
		\node [style=none] (62) at (4, -0.5) {$\cdots$};
		\node [align=center, style=gn] (63) at (4, 1.5) {\scriptsize $n\alpha$+\\\scriptsize $(n{-}1)\pi$};
		\node [style=none, anchor=west, xshift=-4pt] (64) at (-2.75, 1.5) {$\left.\vphantom{\rule{1pt}{1.6em}}\right)$};
		\node [style=none, anchor=east, xshift=4pt] (65) at (-3.25, 1.5) {$\left(\vphantom{\rule{1pt}{1.6em}}\right.$};
		\node [style=gn] (66) at (-3, 1.25) {};
		\node [style=none, anchor=west] (67) at (-2.75, 2) {$\vphantom{.}^{\otimes (n{-}1)(m{-}1)}$};
		\node [style=rn] (68) at (-3, 1.75) {};
		\node [xshift=4pt, anchor=east, style=none] (69) at (0.25, 1.5) {$\left(\vphantom{\rule{1pt}{1.6em}}\right.$};
		\node [xshift=-4pt, anchor=west, style=none] (70) at (0.75, 1.5) {$\left.\vphantom{\rule{1pt}{1.6em}}\right)$};
		\node [anchor=west, style=none] (71) at (0.75, 2) {$\vphantom{.}^{\otimes (n{-}1)(m{-}1)}$};
		\node [style=rn] (72) at (0.5, 1.75) {};
		\node [style=gn] (73) at (0.5, 1.25) {};
	\end{pgfonlayer}
	\begin{pgfonlayer}{edgelayer}
		\draw (1) to (2);
		\draw (3) to (2);
		\draw (2) to (4.center);
		\draw (5) to (6.center);
		\draw (1) to (5);
		\draw (3) to (5);
		\draw (9) to (12);
		\draw (11) to (12);
		\draw (12) to (15);
		\draw (15) to (10.center);
		\draw (15) to (14.center);
		\draw (20) to (19.center);
		\draw (20) to (22.center);
		\draw [bend right=75, looseness=1.25] (25) to (26);
		\draw [bend right=75, looseness=1.25] (26) to (25);
		\draw (26) to (20);
		\draw [bend right, looseness=1.00] (32) to (29);
		\draw [bend right, looseness=1.00] (29) to (32);
		\draw (29) to (30.center);
		\draw (34) to (33.center);
		\draw [bend right, looseness=1.00] (32) to (34);
		\draw [bend right, looseness=1.00] (34) to (32);
		\draw (46) to (49);
		\draw (48) to (49);
		\draw (49) to (43.center);
		\draw (45) to (44.center);
		\draw (46) to (45);
		\draw (48) to (45);
		\draw (52) to (3);
		\draw (51) to (1);
		\draw (56) to (61);
		\draw (60) to (61);
		\draw (61) to (55.center);
		\draw (57) to (58.center);
		\draw (56) to (57);
		\draw (60) to (57);
		\draw (63) to (56);
		\draw (63) to (60);
		\draw [bend right=45, looseness=1.00] (68) to (66);
		\draw [bend right=45, looseness=1.00] (66) to (68);
		\draw (68) to (66);
		\draw [bend right=45, looseness=1.00] (72) to (73);
		\draw [bend right=45, looseness=1.00] (73) to (72);
		\draw (72) to (73);
	\end{pgfonlayer}
\end{tikzpicture}}\]
The proof is the same for red cyclotomic twins.
\end{proof}

\subsection{Derivations of the Supplementarity}

\begin{proof}[Proof of Proposition \ref{prop:sp-spq-to-sq}]
\phantomsection\label{prf:sp-spq-to-sq}
With \hlaw the Hopf Law and \iv the inverse rule, which are both derivable from $ZX_E$:
\[\fit{\begin{tikzpicture}
	\begin{pgfonlayer}{nodelayer}
		\node [style=none] (0) at (-6, -0.75) {};
		\node [style=rn] (1) at (-6, -0.25) {};
		\node [style=gn] (2) at (-7.25, 1.25) {$\alpha$};
		\node [style=gn] (3) at (-6.5, 1.25) {\scriptsize $\alpha{+}\frac{2\pi}{q}$};
		\node [style=gn, align=center] (4) at (-5, 1.25) {\scriptsize $\alpha{+}$\\ \scriptsize $\frac{q-1}{q}2\pi$};
		\node [style=gn] (5) at (-1.5, -0.7500001) {};
		\node [style=rn] (6) at (-1.5, -1.25) {};
		\node [style=none] (7) at (-5.75, 0.25) {$\dots$};
		\node [style=none] (8) at (-4.25, -0.25) {=};
		\node [style=gn, align=center] (9) at (-1.25, 1.25) {\scriptsize $\alpha{+}$\\ \scriptsize$\frac{q-1}{q}2\pi$};
		\node [style=rn] (10) at (-2.25, -0.25) {};
		\node [style=gn] (11) at (-3.5, 1.25) {$\alpha$};
		\node [style=none] (12) at (-2.25, -0.75) {};
		\node [style=none] (13) at (-2, 0.5) {$\dots$};
		\node [style=gn] (14) at (-2.75, 1.25) {$\alpha{+}\frac{2\pi}{q}$};
		\node [style=gn] (19) at (1.25, 1.25) {\tiny $\frac{\alpha}{p}{+}\frac{2\pi}{p}$};
		\node [style=rn] (20) at (2, -0.25) {};
		\node [style=none] (21) at (2.25, 0.5) {$\dots$};
		\node [style=none] (22) at (2, -0.75) {};
		\node [style=gn, align=center] (23) at (3.25, 1.25) {\tiny $\frac{\alpha}{p}{+}\frac{q-1}{pq}2\pi$\\ \scriptsize $+ \frac{p-1}{p}2\pi$};
		\node [style=gn] (24) at (0.5000001, 1.25) {\tiny $\frac{\alpha}{p}$};
		\node [style=none] (25) at (0.25, -0.25) {=};
		\node [style=none] (26) at (-4.25, -0) {\tiny (HL)};
		\node [style=none] (27) at (0.25, -0) {\tiny $(SUP_p)$};
		\node [style=rn] (28) at (2.75, -1.25) {};
		\node [style=gn] (29) at (2.75, -0.7500001) {};
		\node [style=none] (31) at (5, -0.75) {};
		\node [style=rn] (32) at (5.5, -0) {};
		\node [style=rn] (33) at (5, -0.25) {};
		\node [style=none] (35) at (4.25, -0.25) {=};
		\node [style=none] (36) at (4.25, -0) {\tiny $(SUP_{pq})$};
		\node [style=gn,align=center] (37) at (5, 1.25) {$q\alpha$};
		\node [style=gn] (38) at (5.5, 0.5) {};
		\node [style=rn] (40) at (7.250001, -0.25) {};
		\node [style=none] (42) at (6.5, -0) {\tiny (HL)};
		\node [style=none] (43) at (6.5, -0.25) {=};
		\node [style=none] (44) at (7.250001, -0.75) {};
		\node [style=gn,align=center] (45) at (7.250001, 1.25) {$q\alpha$};
		\node [style=none, anchor=west, xshift=-4pt] (138) at (-1.25, -1) {$\left.\vphantom{\rule{1pt}{1.6em}}\right)$};
		\node [style=none, anchor=east, xshift=4pt] (139) at (-1.75, -1) {$\left(\vphantom{\rule{1pt}{1.6em}}\right.$};
		\node [style=none, anchor=west] (140) at (-1.25, -0.5) {$\vphantom{.}^{\otimes q(p{-}1)}$};
		\node [xshift=-4pt, anchor=west, style=none] (141) at (3, -1) {$\left.\vphantom{\rule{1pt}{1.6em}}\right)$};
		\node [anchor=west, style=none] (142) at (3, -0.5) {$\vphantom{.}^{\otimes q(p{-}1)}$};
		\node [xshift=4pt, anchor=east, style=none] (143) at (2.5, -1) {$\left(\vphantom{\rule{1pt}{1.6em}}\right.$};
		\node [xshift=-4pt, anchor=west, style=none] (144) at (5.75, 0.25) {$\left.\vphantom{\rule{1pt}{1.6em}}\right)$};
		\node [anchor=west, style=none] (145) at (5.75, 0.75) {$\vphantom{.}^{\otimes q(p{-}1)}$};
		\node [xshift=4pt, anchor=east, style=none] (146) at (5.25, 0.25) {$\left(\vphantom{\rule{1pt}{1.6em}}\right.$};
	\end{pgfonlayer}
	\begin{pgfonlayer}{edgelayer}
		\draw [style=none, bend right, looseness=1.00] (2) to (1);
		\draw [style=none] (1) to (0.center);
		\draw [style=none, bend right, looseness=1.00] (3) to (1);
		\draw [style=none, bend right, looseness=1.00] (1) to (4);
		\draw [style=none] (5) to (6);
		\draw [style=none] (10) to (12.center);
		\path  (10) edge[style=tickedge, bend left] node[left] {$p$} (11);
		\path  (10) edge[style=tickedge, bend left] node[right] {$p$} (14);
		\path  (10) edge[style=tickedge, bend right] node[right] {$p$}  (9);
		\draw [style=none, bend right, looseness=1.00] (24) to (20);
		\draw [style=none] (20) to (22.center);
		\draw [style=none, bend right, looseness=1.00] (19) to (20);
		\draw [style=none, bend right, looseness=1.00] (20) to (23);
		\draw [style=none] (29) to (28);
		\draw [style=none] (33) to (31.center);
		\draw [style=none] (38) to (32);
		\path (37) edge[style=tickedge] node[left] {$pq$} (33);
		\draw [style=none] (40) to (44.center);
		\path (45) edge[style=tickedge] node[right] {$q$} (40);
	\end{pgfonlayer}
\end{tikzpicture}
}\]
With $p$-ticked edge representing $p$ parallel wires. Those are created -- always by multiples of two -- using the Hopf Law \hlaw. The $3^{rd}$ diagram is obtained by rearranging the branches so that we can use \suppc{pq}. For the last two diagrams, notice that $q$ is odd so $(q-1)\pi=0\mod 2\pi$.
\end{proof}

\begin{proof}[Proof of Proposition \ref{prop:sp-sp2q-to-spq}]
\phantomsection\label{prf:sp-sp2q-to-spq}
If $p$ is odd, then the previous proposition implies the wanted result, taking $q:=pq$.\\
Now, if $p=0\mod 2$:
\[\fit{\begin{tikzpicture}
	\begin{pgfonlayer}{nodelayer}
		\node [align=center, style=gn] (0) at (6.500001, 1.25) {\tiny $\alpha{+}$\\ \tiny $\frac{p^2q-1}{pq}2\pi$\\ \tiny ${+}\pi$};
		\node [style=rn] (1) at (5.000001, -0.25) {};
		\node [style=gn] (2) at (3.25, 1.25) {\scriptsize $\alpha{+}\pi$};
		\node [style=none] (3) at (5.000001, -0.75) {};
		\node [style=none] (4) at (5.250001, 0.5) {...};
		\node [align=center, style=gn] (5) at (4.250001, 1.25) {\tiny $\alpha{+}$\\ \tiny $\frac{2\pi}{pq}{+}\pi$};
		\node [style=gn] (6) at (-0.25, 1.25) {\tiny $\frac{\alpha}{p}{+}\frac{2\pi}{p^2q}$};
		\node [style=rn] (7) at (0.5, -0.25) {};
		\node [style=none] (8) at (0.75, 0.5) {...};
		\node [style=none] (9) at (0.5, -0.75) {};
		\node [align=center, style=gn] (10) at (1.5, 1.25) {\tiny $\frac{\alpha}{p}{+}$\\ \tiny $\frac{p^2q-1}{p^2q}2\pi$};
		\node [style=gn] (11) at (-1, 1.25) {\tiny $\frac{\alpha}{p}$};
		\node [style=none] (12) at (3, -0) {=};
		\node [style=none] (13) at (3, 0.25) {\tiny $(SUP_p)$};
		\node [style=none] (14) at (-4, -0.75) {};
		\node [style=rn] (15) at (-3.25, -0) {};
		\node [style=rn] (16) at (-4, -0.25) {};
		\node [style=none] (18) at (-1.75, -0) {=};
		\node [style=none] (19) at (-1.75, 0.25) {\tiny $(SUP_{p^2q})$};
		\node [style=gn] (20) at (-4, 1.25) {$pq\alpha{+}\pi$};
		\node [style=gn] (21) at (-3.25, 0.5) {};
		\node [style=rn] (22) at (-6.5, -0.25) {};
		\node [style=none] (23) at (-5.5, 0.25) {\tiny (HL)};
		\node [style=none] (24) at (-5.5, -0) {=};
		\node [style=none] (25) at (-6.5, -0.75) {};
		\node [style=gn] (26) at (-6.5, 1.25) {$pq\alpha{+}\pi$};
		\node [style=rn] (27) at (1.25, -1) {};
		\node [style=gn] (28) at (1.25, -0.5) {};
		\node [style=rn] (31) at (6.250001, -1) {};
		\node [style=gn] (32) at (6.250001, -0.5) {};
		\node [style=none, anchor=west, xshift=-4pt] (45) at (-3, 0.25) {$\left.\vphantom{\rule{1pt}{1.6em}}\right)$};
		\node [style=none, anchor=east, xshift=4pt] (46) at (-3.5, 0.25) {$\left(\vphantom{\rule{1pt}{1.6em}}\right.$};
		\node [style=none, anchor=west] (47) at (-3, 0.75) {$\vphantom{.}^{\otimes pq(p{-}1)}$};
		\node [xshift=4pt, anchor=east, style=none] (48) at (1, -0.75) {$\left(\vphantom{\rule{1pt}{1.6em}}\right.$};
		\node [xshift=-4pt, anchor=west, style=none] (49) at (1.5, -0.75) {$\left.\vphantom{\rule{1pt}{1.6em}}\right)$};
		\node [anchor=west, style=none] (50) at (1.5, -0.25) {$\vphantom{.}^{\otimes pq(p{-}1)}$};
		\node [xshift=4pt, anchor=east, style=none] (51) at (6, -0.75) {$\left(\vphantom{\rule{1pt}{1.6em}}\right.$};
		\node [xshift=-4pt, anchor=west, style=none] (52) at (6.5, -0.75) {$\left.\vphantom{\rule{1pt}{1.6em}}\right)$};
		\node [anchor=west, style=none] (53) at (6.5, -0.25) {$\vphantom{.}^{\otimes pq(p{-}1)}$};
	\end{pgfonlayer}
	\begin{pgfonlayer}{edgelayer}
		\draw [style=none] (1) to (3.center);
		\draw [style=none, bend right, looseness=1.00] (11) to (7);
		\draw [style=none] (7) to (9.center);
		\draw [style=none, bend right, looseness=1.00] (6) to (7);
		\draw [style=none, bend right, looseness=1.00] (7) to (10);
		\draw [style=none] (16) to (14.center);
		\draw [style=none] (21) to (15);
		\draw [style=none] (22) to (25.center);
		\draw [style=none] (28) to (27);
		\draw [style=none] (32) to (31);
		\path (2) edge[style=tickedge, bend right, looseness=1.00] node[below] {$p$} (1);
		\path (1) edge[style=tickedge, bend left, looseness=1.00] node[right] {$p$} (5);
		\path (1) edge[style=tickedge, bend right, looseness=1.00] node[below] {$p$} (0);
		\path (26) edge[style=tickedge] node[left] {$pq$} (22);
		\path (20) edge[style=tickedge] node[left] {$p^2q$} (16);
	\end{pgfonlayer}
\end{tikzpicture}}\]
\[\fit{\begin{tikzpicture}
	\begin{pgfonlayer}{nodelayer}
		\node [style=none] (0) at (5.750001, -0.5) {};
		\node [style=rn] (1) at (5.750001, -0) {};
		\node [style=gn] (2) at (4.500001, 1.25) {$\alpha$};
		\node [style=gn] (3) at (5.250001, 1.25) {\scriptsize $\alpha{+}\frac{2\pi}{pq}$};
		\node [align=center, style=gn] (4) at (6.750001, 1.25) {\scriptsize $\alpha{+}$\\ \scriptsize $\frac{pq-1}{pq}2\pi$};
		\node [style=none] (5) at (6.000001, 0.5) {...};
		\node [style=none] (6) at (-6.75, -0) {=};
		\node [style=none] (7) at (-6.75, 0.25) {\tiny (HL)};
		\node [align=center, style=gn] (8) at (-4.25, 1.25) {\tiny $\alpha{+}$\\ \tiny $\frac{p^2q-1}{pq}2\pi$\\ \tiny ${+}\pi$};
		\node [style=gn] (9) at (-4.75, -0.25) {};
		\node [style=rn] (10) at (-5.5, -0) {};
		\node [style=rn] (11) at (-4.75, -0.75) {};
		\node [style=gn] (12) at (-6.25, 1.25) {$\alpha{+}\pi$};
		\node [style=none] (13) at (-5.5, -0.5) {};
		\node [style=none] (15) at (-5.5, 1.25) {...};
		\node [style=none] (16) at (-3.25, 0.25) {\tiny $(SUP_p)$};
		\node [style=none] (17) at (-3.25, -0) {=};
		\node [style=gn] (18) at (-1, -0) {};
		\node [style=rn] (19) at (-1.75, -0) {};
		\node [style=none] (20) at (-1.75, -0.5) {};
		\node [style=none] (21) at (-1.5, 1.25) {...};
		\node [style=rn] (22) at (-1, -0.5) {};
		\node [style=gn] (24) at (-2.5, 1.25) {$p\alpha{+}\pi$};
		\node [align=center, style=gn] (25) at (-0.25, 1.25) {\tiny $p\alpha+$\\ \tiny $\frac{q-1}{q}2\pi$\\ \tiny $+\pi$};
		\node [style=none] (26) at (0.75, -0) {=};
		\node [style=none] (27) at (0.75, 0.25) {\tiny $(SUP_p)$};
		\node [style=rn] (28) at (2.25, -0) {};
		\node [style=none] (29) at (2.25, -0.5) {};
		\node [style=none] (30) at (2.25, 1.25) {...};
		\node [style=gn] (31) at (1.5, 1.25) {$\alpha{+}\pi$};
		\node [align=center, style=gn] (32) at (3.25, 1.25) {\tiny $\alpha{+}$ \\ \tiny $\frac{p^2q-1}{pq}2\pi$ \\ \tiny ${+}\pi$};
		\node [style=none] (33) at (4.000001, -0) {=};
		\node [style=none, anchor=west, xshift=-4pt] (42) at (-0.75, -0.25) {$\left.\vphantom{\rule{1pt}{1.6em}}\right)$};
		\node [style=none, anchor=east, xshift=4pt] (43) at (-1.25, -0.25) {$\left(\vphantom{\rule{1pt}{1.6em}}\right.$};
		\node [style=none, anchor=west] (44) at (-0.75, 0.25) {$\vphantom{.}^{\otimes pq}$};
		\node [anchor=west, style=none] (45) at (-4.5, -0) {$\vphantom{.}^{\otimes pq}$};
		\node [xshift=-4pt, anchor=west, style=none] (46) at (-4.5, -0.5) {$\left.\vphantom{\rule{1pt}{1.6em}}\right)$};
		\node [xshift=4pt, anchor=east, style=none] (47) at (-5, -0.5) {$\left(\vphantom{\rule{1pt}{1.6em}}\right.$};
	\end{pgfonlayer}
	\begin{pgfonlayer}{edgelayer}
		\draw [style=none, bend right, looseness=1.00] (2) to (1);
		\draw [style=none] (1) to (0.center);
		\draw [style=none, bend right, looseness=1.00] (3) to (1);
		\draw [style=none, bend right, looseness=1.00] (1) to (4);
		\draw [style=none] (10) to (13.center);
		\draw [style=none, bend right=45, looseness=1.00] (9) to (11);
		\draw [style=none, bend left=45, looseness=1.00] (9) to (11);
		\draw [style=none] (11) to (9);
		\draw [style=none] (19) to (20.center);
		\draw [style=none, bend right=45, looseness=1.00] (18) to (22);
		\draw [style=none, bend left=45, looseness=1.00] (18) to (22);
		\draw [style=none] (22) to (18);
		\draw [style=none] (28) to (29.center);
		\draw [style=none, bend right, looseness=1.00] (31) to (28);
		\draw [style=none, bend right, looseness=1.00] (28) to (32);
	\end{pgfonlayer}
\end{tikzpicture}}\]
The second call to \suppc{p} makes use of the equality (\ref{eq:supp-no-neighbours}) of lemma \ref{lem:soundness-equivalence}. The last equality is just a rearranging of the branches.
\end{proof}

\end{document}